\title{Faster Algorithm for Bounded Tree Edit Distance in the Low-Distance Regime}
\author{Tomasz Kociumaka}{Max Planck Institute for Informatics, Saarland Informatics
Campus, Saarbrücken, Germany}{tomasz.kociumaka@mpi-inf.mpg.de}{https://orcid.org/0000-0002-2477-1702}{}
\author{Ali Shahali}{Sharif University of Technology, Tehran, Iran}{alishahali1382@gmail.com}{https://orcid.org/0009-0008-6181-8881}{The work of was carried out mostly during a summer internship at the Max Planck Institute for Informatics.}
\authorrunning{T. Kociumaka and A. Shahali} 
\keywords{tree edit distance, edit distance, kernelization, dynamic programming} 
\newtheorem{fact}[theorem]{Fact}
\newcommand{\op}{\textrm{\texttt{(}}}
\newcommand{\cl}{\textrm{\texttt{)}}}
\newcommand{\B}{\mathcal{B}}
\newcommand{\sub}{\subseteq}
\newcommand{\Oh}{\mathcal{O}}
\newcommand{\Ohtilde}{\widetilde{\Oh}}
\newcommand{\ed}{\mathsf{ed}}
\newcommand{\ted}{\mathsf{ted}}
\newcommand{\per}{\mathsf{per}}
\newcommand{\emptystring}{\varepsilon}
\newcommand{\poly}{\mathrm{poly}}
\newcommand{\A}{\mathcal{A}}
\newcommand{\M}{\mathcal{M}}
\newcommand{\dd}{\mathinner{.\,.\allowbreak}}
\newcommand{\Zz}{\mathbb{Z}_{\ge 0}}
\newcommand{\Zp}{\mathbb{Z}_{+}}
\newcommand{\cost}{\mathsf{cost}}
\newcommand{\Qf}{\mathcal{Q}}
\newcommand{\AGW}{\mathsf{AG}^w}
\newcommand{\AGw}{\AGW}
\newcommand{\fragmentco}[2]{[#1\dd #2)}
\newcommand{\fragmentcc}[2]{[#1\dd #2]}
\newcommand{\position}[1]{[#1]}
\newcommand{\w}[2]{\ensuremath w(#1,#2)}
\newcommand{\edwa}[3]{\mathsf{ed}^w_{#1}(#2,#3)}
\newcommand{\Als}{\mathbf{A}} 
\renewcommand{\aa}{\Als}
\newcommand{\ta}{\mathbf{FA}}
\newcommand{\bta}[1]{\mathbf{BFA}_{#1}}
\newcommand{\lbl}{\mathsf{lbl}}
\newcommand{\wtype}{\eSigma^2 \to \mathbb{R}_{\ge 0}}
\newcommand{\Pc}{\mathcal{P}}
\newcommand{\D}{\mathcal{D}}
\newcommand{\HC}{\mathcal{H}}
\newcommand{\DC}{\mathcal{D}}
\newcommand{\BF}{\mathbf{B}_k(F)}
\newcommand{\Bk}[1]{\mathbf{B}_k(#1)}
\newcommand{\BP}{\mathbf{B}_k(P)}
\newcommand{\I}{\mathcal{I}}
\newcommand{\black}[1]{\mathsf{black}_k(#1)}
\newcommand{\red}[1]{\mathsf{red}_k(#1)}
\newcommand{\context}[2]{\langle #1; #2 \rangle}
\patchcmd{\algocf@makecaption@ruled}{\hsize}{\textwidth}{}{} 
\patchcmd{\@algocf@start}{0em}{0em}{}{} 
\def\twoheadleadsto{\tikz[baseline=(a.base)]{%
        \node at (.2,0) {\(\leadsto\)};%
        \fill[white] (-.1,-.1+.012) -- (.25,-.1+.012) -- (.25,.113) --
        (-.1,.113) --cycle;
        \node at (.125,0) {\(\leadsto\)};%
        \node (a) at (.4/2,-.0) {\phantom{\(\leadsto\)}};%
}}
\newcommand{\onto}{\twoheadleadsto}
\newcommand{\ponto}[1]{\twoheadleadsto\kern-.3em{}_{#1}\kern.3em}
\newcommand{\ated}{\widetilde{\smash{\mathsf{ted}}}}
\newcommand{\bted}[1]{\mathsf{bted}_{#1}}
\date{}
\begin{document}

\maketitle

\begin{abstract}
        The \emph{tree edit distance} is a natural dissimilarity measure between rooted ordered trees whose nodes are labeled over an alphabet $\Sigma$. 
It is defined as the minimum number of node edits---insertions, deletions, and relabelings---required to transform one tree into the other. 
The weighted variant assigns costs $\geq 1$ to edits (based on node labels), minimizing total cost rather than edit count.

The unweighted tree edit distance between two trees of total size $n$ can be computed in $\Oh(n^{2.6857})$ time; in contrast, determining the weighted tree edit distance is fine-grained equivalent to the All-Pairs Shortest Paths (APSP) problem and requires $n^3 / 2^{\Omega(\sqrt{\log n})}$ time [Nogler, Polak, Saha, Vassilevska Williams, Xu, Ye; STOC'25].
These impractical super-quadratic times for large, similar trees motivate the bounded version, parameterizing runtime by the distance~$k$ to enable faster algorithms for $k \ll n$.

Prior algorithms for bounded unweighted edit distance achieve \(\Oh(nk^2\log n)\) [Akmal \& Jin; ICALP’21] and \(\Oh(n + k^7\log k)\) [Das, Gilbert, Hajiaghayi, Kociumaka, Saha; STOC'23].
For weighted, only \(\Oh(n + k^{15})\) is known [Das, Gilbert, Hajiaghayi, Kociumaka, Saha; STOC'23].  

We present an $\Oh(n + k^6 \log k)$-time algorithm for bounded tree edit distance in both weighted/unweighted settings.
First, we devise a simpler weighted $\Oh(nk^2 \log n)$-time algorithm.
Next, we exploit periodic structures in input trees via an optimized universal kernel: modifying prior $\Oh(n)$-time $\Oh(k^5)$-size kernels to generate such structured instances, enabling efficient analysis.  

\end{abstract}

\section{Introduction}
The edit distance between two strings---the minimum cost of insertions, deletions, and substitutions needed to transform one string into the other---is one of the most widely used string similarity measures with numerous algorithmic applications. 
It provides a robust model for comparing sequential data and underpins techniques in fields such as computational biology, natural language processing, and text correction.
Nevertheless, many data types exhibit hierarchical rather than linear structure. For example, RNA secondary structures, syntactic parse trees of natural language sentences, and hierarchical representations of documents and code all store information in tree-like forms.
One can linearize such data and still use string edit distance, but the resulting alignments disregard the original hierarchical structure, and thus more expressive similarity measures are needed in most scenarios. 

First introduced by Selkow~\cite{S77}, tree edit distance generalizes the notion of string edit distance to rooted ordered trees and forests with nodes labeled over an alphabet $\Sigma$.
It quantifies the dissimilarity between two forests as the minimal cost of a sequence of node edits---insertions, deletions, and relabelings---required to transform one forest into the other. 
This distance naturally captures both structural and label-based differences and serves as a core primitive in various algorithmic and applied domains, including computational biology \cite{G97,SZ90,HTGK03,W95}, analysis of structured data (such as XML and JSON files)~\cite{BGK03,CAM02,Cha99,FLMM09,WDC03}, image processing \cite{BK99,KTSK00,KSK01,SKK04}, and information extraction~\cite{RGSF04,YDCC13}; see~\cite{B05,A10} for surveys.

Tai~\cite{T79} proposed the first polynomial-time algorithm for computing the tree edit distance: a dynamic programming procedure running in $\Oh(n^6)$ time, where $n$ is the total number of nodes in the input forests. 
In the following decades, a sequence of works progressively improved the runtime. 
Zhang and Shasha~\cite{ZS89} introduced an $\Oh(n^4)$-time algorithm, and then Klein~\cite{K98} brought the complexity down to $\Oh(n^3 \log n)$.
Subsequently, Demaine, Mozes, Rossman, and Weimann~\cite{DMRW10} presented an $\Oh(n^3)$-time solution, whereas Bringmann, Gawrychowski, Mozes, and Weimann~\cite{BGMW20} proved that a hypothetical $n^{3-\Omega(1)}$-time algorithm would violate the All-Pairs Shortest Paths (APSP) hypothesis in fine-grained complexity.
Very recently, Nogler, Polak, Saha, Vassilevska Williams, Xu, and Ye~\cite{NPSVXY25} showed that computing the tree edit distance is, in fact, equivalent to the APSP problem, and the $n^3/2^{\Omega(\sqrt{\log n})}$ running time can be inherited from the state of the art for the latter task~\cite{W18}.

The aforementioned conditional lower bounds apply to the weighted tree edit distance only, where the cost of each edit depends on the involved labels.
In the unweighted version with unit costs, a series of recent results achieved truly subcubic running time using fast matrix multiplication.
Mao~\cite{M21} presented an $\Oh(n^{2.9546})$-time solution, which was subsequently improved by D\"urr~\cite{D23} to $\Oh(n^{2.9148})$ and by Nogler et al.~\cite{NPSVXY25} to $\Ohtilde(n^{(3+\omega)/2})=\Oh(n^{2.6857})$, where $\omega$ is the fast matrix multiplication exponent and $\Ohtilde(\cdot)$ hides $\poly \log n$ factors.
The state-of-the-art conditional lower bound, inherited from string edit distance~\cite{BI18} and assuming the Orthogonal Vectors Hypothesis, prohibits $n^{2-\Omega(1)}$-time algorithms already if $|\Sigma|=1$.

\subparagraph*{Bounded Tree Edit Distance}
Even though the decades of research resulted in significantly faster tree edit distance algorithms, they also revealed fundamental challenges that explain the difficulty of this problem.
A natural way of circumventing these barriers, paved by classical work for string edit distance~\cite{Ukk85,Mye86,LV88}, is to parameterize the running time not only in terms of the length $n$ of the input forests but also the value $k$ of the computed distance.
This version of tree edit distance has originally been studied in the unweighted setting only. 
Touzet~\cite{T05} adapted the ideas of Zhang and Shasha~\cite{ZS89} to derive an $\Oh(nk^3)$-time algorithm, whereas Akmal and Jin~\cite{AJ21} improved the running time to $\Oh(nk^2 \log n)$ based on the approach of Klein~\cite{K98}.
The latter running time remains the state-of-the-art for medium distances.
In the low-distance regime, Das, Gilbert, Hajiaghayi, Kociumaka, Saha, and Saleh~\cite{DGHKSS22} achieved an $\Ohtilde(n+k^{15})$-time solution and subsequently improved the running time to $\Oh(n+k^{7}\log k)$~\cite{DGHKS23}.
They also studied the weighted version of the problem and presented an $\Oh(n+k^{15})$-time algorithm assuming that the weight function is normalized (each edit costs at least one unit; this prohibits arbitrary scaling) and satisfies the triangle inequality.

Still, the $\Ohtilde(n+\poly(k))$ running times for tree edit distance are much larger than for string edit distance and the related Dyck edit distance problem, which asks for the minimum cost of edits needed to make a given string of parentheses balanced (so that it represents a node-labeled forest).
In the unweighted setting, the state-of-the-art running times are $\Oh(n+k^2)$ for string edit distance~\cite{LV88} and $\Oh(n+k^{5.442})$ for Dyck edit distance~\cite{FGKKPS24}.
In the presence of weights, these complexities increase to $\Ohtilde(n+\sqrt{nk^3})\le \Ohtilde(n+k^3)$~\cite{CKW23} and $\Oh(n+k^{12})$~\cite{DGHKS23}, respectively. 
Thus, tree edit distance is a natural candidate for improvements.

\subparagraph*{Our Results} 
As the main contribution, we bring the time complexity of tree edit distance to $\Ohtilde(n+k^6)$.
Even for unit weights, this improves upon the state of the art if $n^{1/7} \ll k \ll n^{1/4}$. 

\begin{restatable}{theorem}{thmmain}\label{thm:main}
There exists a deterministic algorithm that, given two forests $F$ and $G$ with $n$ nodes in total, each with a label from an alphabet $\Sigma$, and oracle access to a normalized weight function $w:(\Sigma\cup\{\emptystring\})^2 \to \mathbb{R}_{\ge 0}$ satisfying the triangle inequality, determines the tree edit distance $k\coloneqq \ted^w(F,G)$ in $\Oh(n+k^6 \log k)$ time.
\end{restatable}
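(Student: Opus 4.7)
The plan is to obtain the $\Oh(n + k^6 \log k)$ bound through a two-phase strategy. First, I would design a base algorithm that solves weighted bounded tree edit distance in $\Oh(nk^2 \log n)$ time on arbitrary inputs. Second, I would build a linear-time kernelization that shrinks the instance to size $\Oh(k^5)$ while forcing enough structural regularity so that the base algorithm can be accelerated to $\Oh(k^6 \log k)$ on the kernel. The final bound then follows by adding the kernelization time and the cost of running the sped-up base algorithm on the kernel.

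For the base algorithm, I would lift the Akmal--Jin unweighted $\Oh(nk^2 \log n)$ procedure to the weighted setting. Akmal--Jin combine Klein's heavy-path decomposition with a diagonal-type dynamic program inspired by Landau--Vishkin; replacing the innermost primitive with a weighted bounded edit-distance subroutine (in the style of the weighted string edit distance result $\Ohtilde(n+\sqrt{nk^3})$ cited in the introduction) should preserve the time bound, provided one verifies that the triangle inequality and normalization are enough to keep the Landau--Vishkin-style $\Oh(k)$ work per diagonal extension. A careful accounting along the heavy-path decomposition (each node charged $\Oh(k^2)$ times log-weighted by depth) is needed to make the $\Oh(nk^2 \log n)$ total bound go through. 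This step is essentially a bookkeeping effort and should not be the main obstacle.

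For the kernel, I would start from the $\Oh(n)$-time, $\Oh(k^5)$-size universal kernel of Das et al.\ and refine its output so that long ``generic'' regions are replaced not just by shorter tokens but by explicit periodic representatives: runs of identical siblings and long unit-degree chains are compressed to short prototypes annotated with multiplicities, while the ``complex'' nodes (those close to branching or to edit operations) remain intact. The correctness argument is the standard one for universal kernels: any alignment of cost $\le k$ touches at most $\Oh(k)$ copies of any repeated block, so the removed copies can be matched to one another at zero cost thanks to normalization and the triangle inequality. The output is then an instance of size $\Oh(k^5)$ whose structure is dominated by periodic segments.

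The main obstacle, and the core of the proof, is the last step: showing that on such a periodically structured kernel, the base algorithm can be accelerated by a factor of $k$, from $\Oh(k^5 \cdot k^2 \log k)$ to $\Oh(k^5 \cdot k \log k) = \Oh(k^6 \log k)$. The idea is that within a periodic segment, consecutive DP columns along a heavy path differ by a predictable additive shift determined by the period, so one can advance $\Oh(k)$ columns at once in $\Oh(k \log k)$ time using precomputed cost summaries instead of $\Oh(k^2)$ time per column. I expect the hardest technical point to be proving this shift-invariance in the \emph{weighted} setting, where arbitrary per-label costs could a priori destroy the regularity that makes the speed-up possible, and handling the interfaces between periodic regions and the $\Oh(k^4)$ complex nodes, where this batch argument breaks down and must be replaced by a direct $\Oh(k^2 \log k)$ computation that nonetheless amortizes to the target bound.
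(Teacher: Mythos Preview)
Your high-level plan---a weighted $\Oh(nk^2\log n)$ base algorithm, a kernel of size $\Oh(k^5)$ with explicit periodic structure, and an accelerated DP on the periodic parts---matches the paper's architecture. But the acceleration mechanism you sketch is not correct: DP columns within a periodic region do \emph{not} differ by a predictable additive shift, and there is no ``advance $\Oh(k)$ columns in $\Oh(k\log k)$ time'' shortcut. What is true, and what the paper actually uses, is that advancing the DP through one copy of a balanced period $R$ is a single min-plus matrix--vector product with a fixed $\Oh(k)\times\Oh(k)$ matrix $M_R$; processing $e$ copies then reduces to computing the min-plus power $M_R^e$ by repeated squaring. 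Crucially, these matrices are \emph{not} Monge for tree edit distance (the paper notes this explicitly in a footnote), so each product costs $\Theta(k^3)$, not $\Oh(k)$ or $\Oh(k\log k)$. Your ``shift-invariance'' intuition is borrowed from the string setting and does not survive the consistency constraints of forest alignments.

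This changes the entire budget: at $\Oh(k^3\log n)$ per periodic block, hitting $\Oh(k^6\log k)$ requires the kernel to output at most $\Oh(k^3)$ such blocks together with at most $\Oh(k^4)$ non-periodic characters (the latter handled directly at $\Oh(k^2)$ each). The off-the-shelf $\Oh(k^5)$ kernel gives no such guarantee, and obtaining it is where the real work lies. The paper replaces the iterative $\log n$-round shrinkage by a \emph{single-step} kernelization: a size-$\Oh(k)$ piece matching between $F$ and $G$ leaving $\Oh(k^4)$ unmatched characters, followed by a per-piece reduction that bounds the number of ``red'' (non-periodic) characters to $\Oh(k^2)$ per subforest and $\Oh(k^3)$ per context, yielding $\Oh(k^3)$ periodic blocks overall. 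Your proposal gestures at ``runs of identical siblings and long unit-degree chains'' but neither identifies these quantitative targets nor the decomposition/matching machinery needed to reach them. (A minor point: the weighted $\Oh(nk^2\log n)$ base algorithm needs no Landau--Vishkin ingredient at all---simple bounding-box pruning $|l_F-l_G|\le 2k$, $|r_F-r_G|\le 2k$ on Klein's DP already suffices, and this simpler formulation is what makes the matrix-power optimization clean.)
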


We also prove that the $\Oh(nk^2 \log n)$ running time by Akmal and Jin~\cite{AJ21} remains valid in the presence of weights. 
This approach gives the best complexity when $n^{1/4} \ll k \ll n$.
\begin{restatable}{theorem}{thmaj}\label{thm:aj}
There exists a deterministic algorithm that, given two forests $F$ and $G$, with $n$ nodes in total, each with a label from an alphabet $\Sigma$, and oracle access to a normalized weight function $w:(\Sigma\cup\{\emptystring\})^2 \to \mathbb{R}_{\ge 0}$, determines $k\coloneqq \ted^w(F,G)$ in $\Oh(nk^2 \log n)$ time.
\end{restatable}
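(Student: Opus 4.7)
My plan is to argue that the Akmal--Jin algorithm~\cite{AJ21}, originally designed for unit edit costs, extends essentially verbatim to arbitrary normalized weight functions, and that its running-time analysis transfers without change.

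First, I would recall that Akmal--Jin implement a bounded version of Klein's heavy-path decomposition~\cite{K98}: the smaller forest $F$ is decomposed into $\Oh(\log n)$ layers of heavy paths, and for each heavy path $\pi$ in $F$ the algorithm enumerates the subforests $F'$ obtained by contiguously removing nodes of $\pi$ together with associated off-path subtrees. Each such $F'$ is paired with appropriate subforests $G'$ of $G$, and $\ted^w(F',G')$ is computed by the standard three-case recurrence on the rightmost roots: delete the rightmost root of $F'$, delete the rightmost root of $G'$, or match-and-split the rightmost trees. Replacing the unit costs by $w(\sigma,\emptystring)$, $w(\emptystring,\tau)$, and $w(\sigma,\tau)$ turns this into a correct weighted dynamic program; since Klein's original algorithm already handles weights, no new correctness argument is needed.

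Second, I would verify the Akmal--Jin pruning in the weighted setting. Their algorithm skips every cell $(F',G')$ with $\bigl| |F'| - |G'| \bigr| > k$, arguing that any alignment requires that many insertions or deletions. Under normalization, each such edit still costs at least $1$, so a pruned cell has $\ted^w(F',G') > k$ and cannot participate in an alignment of cost at most $k$. The subproblems reached by the match-and-split case inherit the size bound by an elementary accounting: if the outer alignment costs at most $k$, then both the subtree-to-subtree part and the residual-forest-to-residual-forest part cost at most $k$, so their individual size differences are likewise at most $k$.

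Finally, I would transfer the running-time analysis. The $\Oh(nk^2 \log n)$ bound of Akmal--Jin rests on a purely combinatorial count of surviving cells across the $\Oh(\log n)$ heavy-path levels, together with $\Oh(1)$ work per cell given its predecessors. Neither component depends on edit costs: the count is governed by subforest shapes and sizes alone, and each cell performs $\Oh(1)$ table lookups and $\Oh(1)$ oracle queries to $w$. The main obstacle I foresee is walking carefully through Akmal--Jin's cell-counting argument and confirming that every invariant they exploit is phrased in terms of subforest shapes and sizes rather than edit costs; I expect this verification to be routine but tedious given the intricacies of Klein's decomposition and the bookkeeping of which $F'$--$G'$ pairs are generated at each level.
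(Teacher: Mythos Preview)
Your plan is plausible---indeed, the paper explicitly says it believes Akmal--Jin's original algorithm supports weights with minimal adjustments---but the paper deliberately takes a different and considerably simpler route. Rather than reusing Akmal--Jin's three size-based pruning conditions (your sketch mentions only one, $\big||F'|-|G'|\big|>k$; their actual bound also constrains the sizes of the ``outside'' subforests on each side, and all three are needed for the $\Oh(k^2)$ count), the paper views $F$ and $G$ as balanced parenthesis strings and prunes any state $(l_F,r_F,l_G,r_G)$ with $|l_F-l_G|>2k$ or $|r_F-r_G|>2k$. Correctness is immediate from the observation that a cost-$\le k$ forest alignment, viewed as a string alignment under the half-weight function $\tilde w$, has width at most $2k$; the cell count is then trivially $\Oh(k^2)$ per visited fragment of $F$, with no need to re-walk Akmal--Jin's four-page counting argument.

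What each approach buys: yours stays closer to prior work and would indeed go through, but you would be re-verifying a delicate combinatorial argument that the paper sidesteps entirely. The paper's position-based pruning is not only shorter but is the interface needed for the later optimization (\cref{thm:optimization}), where the algorithm must ``skip'' over periodic free blocks via min-plus matrix products; that speedup is phrased in terms of string positions $l_F,r_F,l_G,r_G$, not subforest sizes, so the parenthesis-string viewpoint is essential downstream. One minor point you omit in either approach: since $k$ is not given, one wraps the bounded-$k$ procedure in a standard geometric search over thresholds $2^0,2^1,\ldots$, which the paper does explicitly.
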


We believe that the original approach of Akmal and Jin~\cite{AJ21} supports the weighted setting with minimal adjustments; nevertheless, we prove \cref{thm:aj} using a slightly different algorithm that is better-suited for further optimizations.
Both solutions are variants of Klein's dynamic programming~\cite{K98} with some states pruned.
The main difference is that we view the input forests as balanced sequences of parentheses and cast the tree edit distance as the minimum cost of a string alignment satisfying a certain consistency property.
In our interpretation (\cref{sec:our_aj}), it is almost trivial to see that the classic pruning rules, originally devised to compute string edit distance in $\Oh(nk)$ time~\cite{Ukk85,Mye86}, can be reused for tree edit distance in $\Oh(nk^2\log n)$ time.
In contrast, Akmal and Jin~\cite[Lemma 9]{AJ21} spend over four pages (on top of the analysis of~\cite{K98}) to bound the number of states surviving their pruning rules.

\cref{thm:aj} combined with the results of~\cite{DGHKS23} lets us derive an $\Oh(n+k^7\log k)$-time algorithm for weighted tree edit distance, matching the previously best running time for unit weights.
This is due to the \emph{universal kernel} that transforms the input forests to equivalent forests of size $\Oh(k^5)$ preserving the following capped tree edit distance value:
\[\ted^w_{\le k}(F,G) = \begin{cases}
    \ted^w(F,G) & \text{if }\ted^w(F,G)\le k,\\
    \infty & \text{otherwise.}
\end{cases}
\]

\begin{theorem}[{\cite[Corollary 3.20]{DGHKS23}}]\label{thm:prev_kernel}
There exists a linear-time algorithm that, given forests $F,G$ and an integer $k\in \Zp$, constructs forests $F',G'$ of size $\Oh(k^5)$ such that $\ted^w_{\le k}(F,G) = \ted^w_{\le k}(F',G')$ holds for every normalized weight function $w$ satisfying the triangle inequality.
\end{theorem}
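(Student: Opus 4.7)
The plan is to identify long, highly repetitive regions of $F$ and $G$ and to compress them so that the residual forests have size $\Oh(k^5)$, while preserving $\ted^w_{\le k}$ for every admissible weight function. The intuition is that an alignment of cost at most $k$ can modify only $\Oh(k)$ positions, so any periodic region essentially left alone by the alignment can be safely shortened. The universality of the kernel is what makes the task delicate: the argument cannot invoke anything about $w$ beyond normalization and the triangle inequality.

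First, I would view each forest as its balanced-parenthesis string and locate, in linear time, its maximal \emph{periodic fragments}, that is, regions whose bracket-aware period $p$ is small (say $p \le \Oh(k)$) and whose length is much larger than $p$. A charging argument based on the fine structure of these fragments --- how many root-level trees they contain, how deeply they nest, how they interact with a heavy-path decomposition --- should bound the total ``aperiodic'' content of $F$ and $G$ by $\poly(k)$ and simultaneously bound the number of periodic fragments. Each remaining fragment is then a long power of a single seed of size $p$, and the question reduces to how many seed copies must be retained.

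Second, I would establish a \emph{periodicity lemma}: for any alignment of cost at most $k$ and any sufficiently long periodic fragment, one can modify the alignment without increasing its cost so that, inside the fragment, a whole block of seed copies in $F$ is matched identically to a whole block of seed copies in $G$. The key step is a shift argument: translating the portion of the alignment that lies inside the fragment by one seed replaces each edit $(a,b)$ by a pair of edits whose total cost is bounded, via the triangle inequality, by the original cost. Iterating the shift produces a canonical matched block that can then be contracted, deleting $p$ parentheses from both sides simultaneously without affecting $\ted^w_{\le k}$. Repeating the contraction shrinks each fragment to a $\poly(k)$-sized remnant; combining with the bound on the number of fragments and on their aperiodic context yields the overall size $\Oh(k^5)$.

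The main obstacle will be the periodicity lemma itself. First, one must handle the boundaries of each periodic fragment, where the shift argument interacts nontrivially with the surrounding forest, especially when the alignment crosses the fragment boundary in an unbalanced way; second, one must ensure that the shift is cost-nonincreasing for \emph{every} normalized metric satisfying the triangle inequality, not merely for unit weights. Triangle inequality is what rescues us, but exploiting it requires a careful telescoping of edit costs along the alignment. Moreover, the precise charging that yields $\Oh(k^5)$, rather than some weaker polynomial bound, depends on a tight interplay between the number of fragments, their periods, and the length of the retained prefixes and suffixes, and this combinatorial accounting is likely to be the most technical ingredient of the proof.
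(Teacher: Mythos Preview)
This theorem is not proved in the present paper; it is quoted from \cite{DGHKS23}, and the paper only sketches the underlying method (see the discussion around \cref{thm:prev_kernel} and in \cref{sec:our_kernel}). That method is structurally different from your plan. The proof in \cite{DGHKS23} does \emph{not} locate periodic fragments of $F$ and $G$ separately and shrink them via a shift argument. Instead, it first decomposes $F$ into $\Oh(k)$ \emph{pieces} (subforests and contexts), constructs a \emph{piece matching} $\M\subseteq \Pc(F)\times\Pc(G)$ witnessed by some cost-$\le k$ forest alignment, and then replaces each matched pair $(f,g)$---where $f$ and $g$ are \emph{identical} as labeled pieces---by a single equivalent piece of size $\Oh(k^4)$. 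The reduction of a piece uses two ingredients: a horizontal lemma (forests with enough ``aperiodic'' content are all $\ted^w_{\le k}$-equivalent) and a vertical lemma (high context-powers $Q^e$ with $e\ge 6k$ are mutually equivalent). Iterating $\Oh(\log n)$ rounds, each shrinking a constant fraction of the mass, yields the $\Oh(k^5)$ bound.

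Your outline has two genuine gaps relative to this. First, you never explain how a periodic fragment in $F$ is paired with one in $G$; without the piece-matching step you cannot ``delete $p$ parentheses from both sides simultaneously'' and still claim universality, because you do not know that the two regions are the \emph{same} balanced string at nearby positions. Second, your shift argument ignores the consistency constraint of forest alignments (\cref{def:ta}): translating an alignment by one seed inside a periodic fragment will in general break the pairing of $F[\match{F}{\hat f}]$ with $G[\match{G}{\hat g}]$, unless the seed is balanced---a condition you never impose. The actual proofs circumvent shifting entirely: they show (via a node-overlap argument, see \cref{clm:horizontal_reduction} and \cref{clm:vertical_aperiodic_reduction}) that any optimal alignment must contain a \emph{synchronization point} where the $F$- and $G$-offsets agree, and then cut-and-paste at that point. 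Your sketch would need both the matching infrastructure and this synchronization idea to go through.
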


To bring the time complexity from $\Ohtilde(n+k^7)$ to $\Ohtilde(n+k^6)$, we adapt both~\cref{thm:aj,thm:prev_kernel}.
The main novel insight is that the dynamic-programming procedure behind~\cref{thm:aj} behaves predictably while processing regions with certain repetitive (periodic) structure.
Upon an appropriate relaxation of the invariant that the dynamic-programming values satisfy, processing each repetition of the period can be interpreted as a single min-plus matrix-vector multiplication.
When the period repeats many times, the same matrix is used each time, and thus we can raise the matrix to an appropriate power (with fast exponentiation) and then compute a single matrix-vector product; see \cref{sec:optimization} for details.
This can be seen as a natural counterpart of an optimization used in~\cite{CKW23} for string edit distance.%
\footnote{The matrices arising in the string edit distance computation in~\cite{CKW23} are $\Oh(k)\times \Oh(k)$ Monge matrices, which allows for computing each matrix-vector product in $\Oh(k)$ time and each matrix-matrix product in $\Oh(k^2)$ time. 
In the context of tree edit distance, the Monge property is no longer satisfied, so the complexities increase to $\Oh(k^2)$ and $\Oh(k^3)$ respectively; in fact, computing the (weighted) tree edit distance is, in general, as hard as computing the min-plus product of two $n\times n$ arbitrary matrices~\cite{NPSVXY25}.}

Intuitively, the worst-case instances for the kernelization algorithm of~\cref{thm:prev_kernel} are close to having the necessary structure for our optimization to improve the worst-case running time.
Unfortunately, the implementation in~\cite{DGHKS23} controls the forest sizes only, and it needs to be modified to keep track of the more subtle instance difficulty measure.
The biggest challenge is that the original algorithm proceeds in logarithmically many steps, each shrinking the forests by a constant fraction.
In every step, the forests are partitioned into $\Oh(k)$ pieces of size $\Oh(n/k)$, and a constant fraction of pieces is shrunk to size $\Oh(k^4)$ each.
A partition into pieces of equal ``difficulty'' would be much more challenging, so we instead take a different approach that lets us implement the reduction in a single step; see \cref{sec:our_kernel} for details.
Notably, the kernelization algorithms for weighted string and Dyck edit distance already have single-step implementations in~\cite{DGHKS23}, but they crucially rely on the fact that the unit-weight variants of these problems can be solved faster, unlike for tree edit distance.

\paragraph*{Related Work}
The classical definition of tree edit distance~\cite{S77} involves labeled rooted trees (or forests) with node labels.
Many other variants have also been studied allowing, among others, unordered~\cite{YHK14,SM20}, unrooted~\cite{SM20}, and edge-labeled~\cite{AFT10} trees; see also the surveys~\cite{A10,B05}.

The tree edit distance problem becomes easier not only when the computed distance is small but also when the input forests are of small depth~\cite{ZS89,CGMW22} or when the approximate distance suffices~\cite{AFT10,BGHS19,SS22}.
There is also a body of work focusing on practical performance and empirical evaluation, including sequential~\cite{PA15,PA16} and parallel~\cite{FLZ24} implementations.

\section{Preliminaries}
\newcommand{\PSigma}{\mathsf{P}_\Sigma}
\newcommand{\FSigma}{\mathcal{F}_\Sigma}
\newcommand{\FsSigma}{\mathcal{F}^{\mathsf{sub}}_\Sigma}
\newcommand{\eSigma}{\bar{\Sigma}}
\newcommand{\node}[2]{\mathsf{node}_{#1}(#2)}
\newcommand{\match}[2]{\mathsf{mate}_{#1}(#2)}

Consistently with~\cite{DGHKS23}, we identify forests with the underlying Euler tours, interpreted as balanced strings of parentheses. 
This representation is at the heart of space-efficient data structures on trees~\cite{MR01,NS14}, and it allows for a simple definition of tree edits and a seamless transfer of many tools from strings to forests.
Notably, Klein's algorithm~\cite{K98} already uses substrings of the Euler tours of the input forests to identify the dynamic-programming states.

For an alphabet $\Sigma$, let $\PSigma := \bigcup_{a\in \Sigma}\{\op_a,\cl_a\}$ denote the set parentheses with labels in~$\Sigma$.
As formalized next, a \emph{forest} with node labels over $\Sigma$ is a \emph{balanced} string of parentheses in~$\PSigma$. 

\begin{definition}\label{def:forest}
    The set of \emph{forests} with labels over $\Sigma$ (\emph{balanced strings} over~$\PSigma$) is the smallest subset $\FSigma\sub \PSigma^*$ satisfying the following conditions:
\begin{itemize}
    \item $\emptystring \in \FSigma$,
    \item $F\cdot G \in \FSigma$ for every $F,G\in \FSigma$,
    \item $\op_a \cdot F \cdot \cl_a \in \FSigma$ for every $F\in \FSigma$ and $a\in \Sigma$.
\end{itemize}
\end{definition}

For a forest $F$, we define the set of \emph{nodes} $V_F$ as the set of intervals $[i\dd j]\subseteq [0\dd |F|)$ such that $F[i]$ is an opening parenthesis, $F[j]$ is a closing parenthesis, and $F(i\dd j)$ is balanced.
A forest $F$ is a \emph{tree} if $[0\dd |F|)\in V_F$.
For a node $u= [i\dd j]\in V_F$, we denote the positions of the opening and the closing parenthesis by $o(u):=i$ and $c(u):= j$.
The label of $u$, that is, the character $a\in \Sigma$ such that $F[o(u)]=\op_a$ and $F[c(u)]=\cl_a$, is denoted by $\lbl(u)$.
A simple inductive argument shows that $V_F$ forms a laminar family and, for every $i\in [0\dd |F|)$, there is a unique node $u\in V_F$ such that $i\in \{o(u),c(u)\}$; we denote this node by $\node{F}{i}$.
We also write $\match{F}{i}$ for the paired parenthesis, that is, $\match{F}{i}=j$ if $\{i,j\}=\{o(u),c(u)\}$.

We say that a node $u\in V_F$ \emph{is contained} in a fragment $F[i\dd j)$ of a forest $F$ 
if $i \le o(u) < c(u)< j$; we denote by $V_{F[i\dd j)}\subseteq V_F$ the set of nodes contained in $F[i\dd j)$.
Moreover, $u$ \emph{enters} $F[i\dd j)$ if $o(u) < i \le c(u) < j$
and $u$ \emph{exits} $F[i\dd j)$ if $i \le o(u) < j \le c(u)$.
In either of these two cases, we also say that $u$ \emph{straddles} $F[i\dd j)$.

We denote by $F_{[i\dd j)}$ the \emph{subforest of $F$ induced by $F[i\dd j)$}, which we obtain from $F$ by deleting (the parentheses corresponding to) all nodes except for those contained in $F[i\dd j)$.
Alternatively, one can obtain $F_{[i\dd j)}$ from $F[i\dd j)$ by deleting the opening parenthesis of every node that exits $F[i\dd j)$ and the closing parenthesis of every node that enters $F[i\dd j)$.


\subsection{Tree Edits, Forest Alignments, and Tree Edit Distance}

For an alphabet $\Sigma$, we define $\eSigma := \Sigma\cup\{\emptystring\}$, where $\emptystring$ is the empty string.
We say that a function $w : \wtype$ is \emph{normalized} if $w(a,a)=0$ and $w(a,b)\ge 1$ hold for distinct $a,b\in \eSigma$. 

Tree edit distance is classically defined using elementary edits transforming $F\in \FSigma$:
\begin{description}
    \item[Node insertion] produces $F[0\dd i) \cdot \op_a \cdot F[i\dd j) \cdot \cl_a \cdot F[j\dd |F|)$ for a balanced fragment $F[i\dd j)$ and a label $a\in \Sigma$, at cost $w(\emptystring, a)$.
    \item[Node relabeling] produces $F[0\dd o(u)) \cdot \op_a \cdot F(o(u)\dd c(u)) \cdot \cl_a \cdot F(c(u)\dd |F|)$ for a node $u\in V_F$ and a label $a\in \Sigma$, at cost $w(\lbl(u),a)$.
    \item[Node deletion] produces $F[0\dd o(u)) \cdot F(o(u)\dd c(u)) \cdot F(c(u)\dd |F|)$ for a node $u\in V_F$, at cost $w(\lbl(u),\emptystring)$.
\end{description}
The tree edit distance $\ted^w(F,G)$ of two forests $F,G\in\FSigma$ is then defined as the minimum cost of a sequence of edits transforming $F$ to $G$.
In this context, without loss of generality, we can replace $w$ by its transitive closure. 
If, say $w(a,\emptystring) > w(a,b)+w(b,\emptystring)$, instead of directly deleting a node with label $a$, it is more beneficial to first change its label to $b$ and only then perform the deletion.
When $w$ satisfies the triangle inequality, we are guaranteed that an inserted or relabeled node is never modified (deleted or relabeled) again.
Consistently with modern literature~\cite{BGMW20,DGHKS23,NPSVXY25}, we use a more general \emph{alignment-based} definition of $\ted^w(F,G)$ that enforces the latter condition even if $w$ does not necessarily satisfy the triangle inequality.

\begin{definition}[Alignment Graph~{\cite{CKW23}}]\label{def:alignment-graph}
    For strings $X, Y\in \Sigma^*$ and a weight function $w: \wtype$,
    we define the \emph{alignment graph} $\AGw(X, Y)$ as a grid graph with vertices
    $\fragmentcc{0}{|X|}\times \fragmentcc{0}{|Y|}$ and the following directed edges:
    \begin{itemize}
        \item horizontal edges $(x,y)\to (x+1,y)$ of cost $\w{X\position{x}}{\emptystring}$
            for $(x,y)\in \fragmentco{0}{|X|}\times \fragmentcc{0}{|Y|}$,
        \item vertical edges $(x,y)\to (x,y+1)$ of cost $\w{\emptystring}{Y\position{y}}$
            for $(x,y)\in \fragmentcc{0}{|X|}\times \fragmentco{0}{|Y|}$, and
        \item diagonal edges $(x,y)\to (x+1,y+1)$ of cost $\w{X\position{x}}{Y\position{y}}$
            for $(x,y)\in \fragmentco{0}{|X|}\times \fragmentco{0}{|Y|}$.
            \qedhere
    \end{itemize}
\end{definition}

The alignment graph allows for a concise definition of a string \emph{alignment}.

\begin{definition}[Alignment]
    For strings $X, Y\in \Sigma^*$ and a weight function $w: \wtype$,
    an \emph{alignment} of $X\fragmentco{x}{x'}$ onto
    $Y\fragmentco{y}{y'}$, denoted by $\A: X\fragmentco{x}{x'} \onto Y\fragmentco{y}{y'}$,
    is a path from \((x,y)\) to \((x',y')\) in \(\AGw(X, Y)\), interpreted as a sequence of vertices.
    The \emph{cost} $\edwa{\A}{X\fragmentco{x}{x'}}{Y\fragmentco{y}{y'}}$ of the alignment
    \(\A\) is the total costs of the edges that belong to $\A$.

    We write
    $\Als(X\fragmentco{x}{x'}, Y\fragmentco{y}{y'})$
    for the set of all alignments of $X\fragmentco{x}{x'}$ onto $Y\fragmentco{y}{y'}$.
\end{definition}

\newcommand{\hx}{\hat{x}}
\newcommand{\hy}{\hat{y}}
The edges of an alignment $\A = \Als(X\fragmentco{x}{x'}, Y\fragmentco{y}{y'})$ can be interpreted as follows:
\begin{itemize}
    \item If $\A$ includes an edge $(\hx,\hy)\to (\hx+1,\hy)$ for some $\hx\in \fragmentco{x}{x'}$ and $\hy\in \fragmentcc{y}{y'}$, then $\A$ \emph{deletes} $X\position{\hx}$,  denoted by \(X\position{\hx} \ponto{\A} \varepsilon\).
    \item If $\A$ includes an edge $(\hx,\hy)\to (\hx,\hy+1)$ for some $\hx\in \fragmentcc{x}{x'}$ and $\hy\in \fragmentco{y}{y'}$, then  $\A$ \emph{inserts} $Y\position{\hy}$,  denoted by \(\varepsilon \ponto{\A} Y\position{\hy}\).
    \item If $\A$ includes an edge $(\hx,\hy)\to (\hx+1,\hy+1)$ for some $\hx\in \fragmentco{x}{x'}$ and $\hy\in \fragmentco{y}{y'}$, then  $\A$ \emph{aligns} $X\position{\hx}$ to $Y\position{\hy}$,  denoted by $X\position{\hx} \ponto{\A} Y\position{\hy}$.
    If $X\position{\hx}\ne Y\position{\hy}$, then $\A$ \emph{substitutes}  $X\position{\hx}$ for $Y\position{\hy}$.
    If  $X\position{\hx}= Y\position{\hy}$, then $\A$ \emph{matches} $X\position{\hx}$ with $Y\position{\hy}$.
\end{itemize}
Insertions, deletions, and substitutions are jointly called character \emph{edits}.

The weighted edit distance of fragments $X[x\dd x')$ and $Y[y\dd y')$ with respect to a weight function $w: \wtype$ is defined as the minimum cost of an alignment:
\[\ed^w(X[x\dd x'),Y[y\dd y'))=\min_{\A \in \aa(X[x\dd x'),Y[y\dd y'))} \ed^w_{\A}(X[x\dd x'),Y[y\dd y')).\]
We often consider alignments of the entire string $X$ onto the entire string $Y$; we then used simplified notation including $\aa(X,Y)$, $\edwa{\A}{X}{Y}$, and $\ed^w(X,Y)$.

\begin{definition}[Forest alignment]\label{def:ta}
    Consider forests $F,G\in \FSigma$.
    An alignment $\A\in \aa(F\fragmentco{f}{f'},G\fragmentco{g}{g'})$ is a \emph{forest alignment} if it satisfies the following consistency condition: 
\begin{quote}
    For every two aligned characters $F[\hat{f}]\ponto{\A} G[\hat{g}]$, also $F[\match{F}{\hat{f}}]\ponto{\A} G[\match{G}{\hat{g}}]$.
\end{quote}
   \noindent
    We write $\ta(F\fragmentco{f}{f'},G\fragmentco{g}{g'})\sub \aa(F\fragmentco{f}{f'},G\fragmentco{g}{g'})$ for the set of forest alignments.
\end{definition}

\begin{remark}\label{rmk:spare}
Consider a forest alignment $\A\in \ta(F\fragmentco{f}{f'},G\fragmentco{g}{g'})$. Then,
\begin{itemize}
    \item $\A$ deletes every character $F[\hat{f}]$ with $\hat{f}\in \fragmentco{f}{f'}$ and  $\match{F}{\hat{f}}\notin \fragmentco{f}{f'}$,
    \item $\A$ inserts every character $G[\hat{g}]$ with $\hat{g}\in \fragmentco{g}{g'}$ and  $\match{G}{\hat{g}}\notin \fragmentco{g}{g'}$.
\end{itemize}
\end{remark}

\newcommand{\EPSigma}{\overline{\PSigma}}
\newcommand{\PW}{\tilde{w}}

Define $\EPSigma = \PSigma \cup \{\emptystring\}$ and a mapping
$\lambda : \EPSigma \to \eSigma$ such that $\lambda(\op_a)=\lambda(\cl_a)=a$ for each $a\in \Sigma$, and $\lambda(\emptystring)=\emptystring$.
For a weight function $w: \wtype$, we define a corresponding weight function $\PW: \smash{\EPSigma}^2 \to \mathbb{R}_{\ge 0}$ so that $\PW(p,q)=\frac12 w(\lambda(p),\lambda(q))$ for all $p,q\in \EPSigma$.
The \emph{cost} of a forest alignment $\A\in \ta(F\fragmentco{f}{f'},G\fragmentco{g}{g'})$ with respect to a weight function $w$
is defined as $\ted^w_{\A}(F\fragmentco{f}{f'},G\fragmentco{g}{g'}) := \ed^{\PW}_{\A}(F\fragmentco{f}{f'},G\fragmentco{g}{g'})$.
Moreover, we define \[\ted^w(F\fragmentco{f}{f'},G\fragmentco{g}{g'})=\min_{\A\in \ta(F\fragmentco{f}{f'},G\fragmentco{g}{g'})}\ted^w_\A(F\fragmentco{f}{f'},G\fragmentco{g}{g'}).\]
For a threshold $k\in \mathbb{R}_{\ge 0}$, we set 
    \[\ted^w_{\le k}(F\fragmentco{f}{f'},G\fragmentco{g}{g'}) = \begin{cases}
        \ted^w(F\fragmentco{f}{f'},G\fragmentco{g}{g'}) & \text{if }\ted^w(F\fragmentco{f}{f'},G\fragmentco{g}{g'})\le k,\\
        \infty & \text{otherwise.}
    \end{cases}
    \]
By \cref{rmk:spare}, the following value is non-negative for every $\A\in \ta(F\fragmentco{f}{f'},G\fragmentco{g}{g'})$:
\begin{align*}
&\ated^w_{\A}(F\fragmentco{f}{f'},G\fragmentco{g}{g'}) \coloneq \\ &\qquad\qquad \ted^w_{\A}(F\fragmentco{f}{f'},G\fragmentco{g}{g'}) - \! \sum_{\substack{\hat{f}\in [f\dd f'):\\ \match{F}{\hat{f}}\notin [f\dd f')}} \!\PW(F[\hat{f}],\emptystring) - \!
\sum_{\substack{\hat{g}\in [g\dd g'):\\ \match{G}{\hat{g}}\notin [g\dd g')}} \!\PW(\emptystring,G[\hat{g}]).
\end{align*}
We naturally generalize this value to $\ated^w\!(F\fragmentco{f}{f'},G\fragmentco{g}{g'})$.

\begin{observation}\label{obs:atedvsted}
   For all forests $F,G\in \FSigma$, fragments $F\fragmentco{f}{f'}$ and $G\fragmentco{g}{g'}$,
   and weight functions $w:\wtype$, we have $\ated^w(F\fragmentco{f}{f'},G\fragmentco{g}{g'})=\ted^w(F_{\fragmentco{f}{f'}},G_{\fragmentco{g}{g'}})$.
\end{observation}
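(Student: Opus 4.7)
\medskip
\noindent
\textbf{Proof plan.} The plan is to exhibit a cost-preserving bijection between forest alignments of the fragments $F\fragmentco{f}{f'}$ and $G\fragmentco{g}{g'}$ and forest alignments of the induced subforests $F_{\fragmentco{f}{f'}}$ and $G_{\fragmentco{g}{g'}}$, in such a way that the cost in $\ated^w_\A$ on one side equals the cost in $\ted^w_\B$ on the other. Taking minima then yields the claimed identity.

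First I would unpack the definition of the induced subforest to identify the positions that survive. Let $I_F \subseteq \fragmentco{f}{f'}$ be the set of positions $\hat{f}$ with $\match{F}{\hat{f}}\in \fragmentco{f}{f'}$, and similarly $I_G$; by definition of $F_{\fragmentco{f}{f'}}$, writing the positions of $I_F$ in increasing order gives a reindexing $\pi_F \colon I_F \to \fragmentco{0}{|F_{\fragmentco{f}{f'}}|}$ such that $F[\hat{f}] = F_{\fragmentco{f}{f'}}[\pi_F(\hat{f})]$ and parenthesis matching is preserved, and analogously $\pi_G$. The complementary positions are exactly the "straddling'' ones summed over in the definition of $\ated^w$.

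Next I would construct the projection. Given $\A \in \ta(F\fragmentco{f}{f'},G\fragmentco{g}{g'})$, \cref{rmk:spare} forces $\A$ to delete every $F[\hat{f}]$ with $\hat{f}\in \fragmentco{f}{f'}\setminus I_F$ via a horizontal edge, and to insert every $G[\hat{g}]$ with $\hat{g}\in \fragmentco{g}{g'}\setminus I_G$ via a vertical edge. Removing exactly these edges from $\A$ and reindexing remaining coordinates by $\pi_F,\pi_G$ produces a path $\B$ in $\AGw(F_{\fragmentco{f}{f'}},G_{\fragmentco{g}{g'}})$ from $(0,0)$ to $(|F_{\fragmentco{f}{f'}}|,|G_{\fragmentco{g}{g'}}|)$. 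The three edge types of $\A$ restricted to $I_F$ and $I_G$ translate to the three edge types of $\B$ with identical per-edge costs (both use $\tilde w$ applied to the same labels). Summing, one obtains
\[
\ted^w_\B(F_{\fragmentco{f}{f'}},G_{\fragmentco{g}{g'}}) = \ted^w_\A(F\fragmentco{f}{f'},G\fragmentco{g}{g'}) - \!\sum_{\hat{f}\in \fragmentco{f}{f'}\setminus I_F}\!\PW(F[\hat{f}],\emptystring) - \!\sum_{\hat{g}\in \fragmentco{g}{g'}\setminus I_G}\!\PW(\emptystring,G[\hat{g}]) = \ated^w_\A(F\fragmentco{f}{f'},G\fragmentco{g}{g'}).
\]
The key verification here is that $\B$ is itself a forest alignment: if $\B$ aligns $F_{\fragmentco{f}{f'}}[\pi_F(\hat{f})]$ to $G_{\fragmentco{g}{g'}}[\pi_G(\hat{g})]$, then $\A$ aligns $F[\hat{f}]$ to $G[\hat{g}]$, so by the consistency of $\A$ it also aligns $F[\match{F}{\hat{f}}]$ to $G[\match{G}{\hat{g}}]$; since both mates lie in $I_F, I_G$ respectively and $\pi_F,\pi_G$ commute with matching, $\B$ aligns the corresponding mates.

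Finally I would describe the inverse lift. Given $\B\in \ta(F_{\fragmentco{f}{f'}},G_{\fragmentco{g}{g'}})$, insert a horizontal edge at the unique position where $\pi_F$ skips each $\hat{f}\in \fragmentco{f}{f'}\setminus I_F$ and a vertical edge for each $\hat{g}\in \fragmentco{g}{g'}\setminus I_G$; this produces a well-defined path $\A$ in $\AGw(F,G)$ from $(f,g)$ to $(f',g')$, and consistency of $\A$ follows from consistency of $\B$ together with the fact that the inserted edges are all deletions or insertions (never diagonal, so they impose no new consistency constraints). The cost identity above, now read in the other direction, shows that the two constructions are mutually inverse cost-preservingly. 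I do not foresee a genuine obstacle; the main care needed is only bookkeeping to check that the insertion positions of the forced edges in the lift are unambiguous, which follows from the fact that the order of positions in $\fragmentco{f}{f'}$ and $\fragmentco{g}{g'}$ is preserved along any alignment path.
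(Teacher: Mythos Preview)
Your proof is correct and is precisely the natural argument one would give. Note that the paper itself provides no proof of this observation; it is stated without justification, so there is no ``paper's own proof'' to compare against. Your bijection via projecting out the forced horizontal and vertical edges (the straddling positions, per \cref{rmk:spare}) and reindexing through $\pi_F,\pi_G$ is exactly the right mechanism. One minor remark: the lift need not be a literal bijection, since at a position where both a forced horizontal and a forced vertical edge must be inserted their relative order is not determined; but this is harmless, as you only need that every $\A$ projects to some $\B$ of the stated cost and every $\B$ lifts to some $\A$ of the stated cost, which suffices to equate the two minima.
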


\section{\texorpdfstring{\boldmath $\Oh(n k^2 \log n)$}{O(nk² log n)}-Time Algorithm}\label{sec:our_aj}
\newcommand{\dpt}{\mathsf{dp}}
\newcommand{\heavy}{\text{heavy}}
\newcommand{\sz}{\mathsf{size}}
\newcommand{\mgets}{\stackrel{\min}{\gets}}
\newcommand{\res}{\mathsf{result}}

In this section, we reinterpret Klein's algorithm~\cite{K98} and develop its $\Oh(nk^2\log n)$-time variant.

\subsection{Klein's Algorithm}

Klein's algorithm~\cite{K98} uses dynamic programming to compute $\ted(F_{[l_F\dd r_F)},G_{[l_G\dd r_G)})=\ated(F[l_F\dd r_F),G[l_G\dd r_G))$ for $\Oh(n\log n)$ selected fragments $F[l_F\dd r_F)$ of $F$ and all $\Oh(n^2)$ fragments $G[l_G\dd r_G)$ of $G$.
We modify the algorithm slightly so that the computed value $\ted(F[l_F\dd r_F),G[l_G\dd r_G))$ includes the costs of deleting the single parentheses of nodes straddling $F[l_F\dd r_F)$ and inserting the single parentheses of nodes straddling $G[l_G\dd r_G)$.

\SetKwFunction{Klein}{Klein}
\begin{algorithm}[H]
    \caption{$\protect\Klein(l_F,r_F,l_G,r_G)$: Klein's algorithm for computing the tree edit distance}\label{alg:Klein}
    \KwIn{Two fragments $F[l_F \dd r_F)$ and $G[l_G \dd r_G)$ of the input forests}
    \KwOut{Compute and store the value of $\dpt[l_F, r_F, l_G, r_G]$}
        \lIf{$l_F = r_F$}{\label{alg:klein:empty_f}%
            $\res \gets \sum_{i\in [l_G\dd r_G)}  \PW(\emptystring, G[i])$%
        }
        \lElseIf{$l_G = r_G$}{\label{alg:klein:empty_g}%
            $\res \gets \sum_{i\in [l_F\dd r_F)}  \PW(F[i], \emptystring)$%
        }
        \Else{
            $u_F = \node{F}{l_F}$, $v_F = \node{F}{r_F-1}$, $u_G = \node{G}{l_G}$, $v_G = \node{G}{r_G-1}$\;
            \If{$u_F \notin V_{F[l_F\dd r_F)}$ \KwSty{or} ${\boldmath (}\;v_F\in V_{F[l_F\dd r_F)}$ \KwSty{and} $\sz(u_F) \le \sz(v_F)\;{\boldmath)}$}{\label{alg:klein:upd_left}
                $\res \gets \dpt[l_F+1, r_F, l_G, r_G] + \PW(F[l_F], \emptystring)$\;\label{alg:klein:upd_left:del}
                $\res \mgets \dpt[l_F, r_F, l_G+1, r_G] + \PW(\emptystring, G[l_G])$\;\label{alg:klein:upd_left:ins}
                \If{$u_F \in V_{F[l_F\dd r_F)}$ \KwSty{and} $u_G \in V_{G[l_G\dd r_G)}$}{
                $\res \mgets \PW(F[l_F], G[l_G])+\dpt[l_F+1, c(u_F), l_G+1, c(u_G)] \allowbreak{} \qquad + \PW(F[c(u_F)], G[c(u_G)])  + \dpt[c(u_F)+1, r_F, c(u_G)+1, r_G]$\;\label{alg:klein:upd_left:match}%
                }
            }
            \Else{\label{alg:klein:upd_right}
                $\res \gets \dpt[l_F, r_F-1, l_G, r_G] +  \PW(F[r_F-1], \emptystring)$\;\label{alg:klein:upd_right:del}
                $\res \mgets \dpt[l_F, r_F, l_G, r_G-1] +  \PW(\emptystring, G[r_G-1])$\;\label{alg:klein:upd_right:ins}
                \If{$v_F\in V_{F[l_F\dd r_F)}$ \KwSty{and} $v_G\in V_{G[l_G\dd r_G)}$}{
                $\res \mgets \dpt[l_F, o(v_F), l_G, o(v_G)] + \PW(F[o(v_F)], G[o(v_G)]) \allowbreak{} \qquad + \dpt[o(v_F)+1, r_F-1, o(v_G)+1, r_G-1] + \PW(F[r_F-1], G[r_G-1])$\;\label{alg:klein:upd_right:match}%
                }
            }
        }
        $\dpt[l_F,r_F,l_G,r_G] \gets \res$
\end{algorithm}

The algorithm's implementation is provided in~\cref{alg:Klein}. We use an operator $x \mgets y$ that assigns $x \gets y$ if $y < x$.
We also remember which of these assignments were applied so that we can later trace back the optimal alignment based on this extra information stored.

In the corner case when $F[l_F\dd r_F)$ or $G[l_G\dd r_G)$ is empty, then the unique (and thus optimal) forest alignment pays for inserting or deleting all characters in the other fragment.
If both $F[l_F\dd r_F)$ and $G[l_G\dd r_G)$ are non-empty, the algorithm considers nodes $u_F=\node{F}{l_F}$, $u_G=\node{G}{l_G}$, $v_F=\node{F}{r_F-1}$, and $v_G=\node{G}{r_G-1}$.

Let us first suppose that $u_F$ and $v_F$ are contained in $F[l_F\dd r_F)$.
If the subtree of $v_F$ is at least as large as the subtree of $u_F$, that is, $\sz(v_F)\ge \sz(u_F)$, where $\sz(x)=c(x)-o(x)+1$, then algorithm considers three possibilities: $F[l_F]$ is deleted, $G[l_G]$ is inserted, or $F[l_F]$ is aligned with $G[l_G]$.
In the first two possibilities, we can pay for the deleted or inserted opening parenthesis and align the remaining fragments; see Lines~\ref{alg:klein:upd_left:del}--\ref{alg:klein:upd_left:ins}.
In the third possibility, the consistency condition in the definition of the forest alignments requires that $F[\match{F}{l_F}]$ is also aligned with $G[\match{G}{l_G}]$. 
In particular, the node $u_G$ needs to be contained in $G[l_G\dd r_G)$ so that $\match{G}{l_G}=c(u_G)$ and $\match{F}{l_F}=c(u_F)$.
Thus, we align $F(l_F\dd c(u_F))$ with $G(l_G\dd c(u_G))$, align $F(c(u_F)\dd r_F)$ with $G(c(u_G)\dd r_G)$,
and pay for aligning $u_F$ with $u_G$, that is $F[l_F]$ with $G[l_G]$ and $F[c(u_F)]$ with $G[c(u_G)]$; see Line~\ref{alg:klein:upd_left:match}.
If $\sz(v_F) < \sz(u_F)$, the algorithm handles $v_F$ and $v_G$ in a symmetric way; see Lines~\ref{alg:klein:upd_right}--\ref{alg:klein:upd_right:match}.

If $u_F\notin V_{F[l_F\dd r_F)}$, we follow Lines~\ref{alg:klein:upd_left}--\ref{alg:klein:upd_left:match} and effectively consider deleting $F[l_F]$ and inserting $G[l_G]$.
Else, if $v_F \notin V_{F[l_F\dd r_F)}$, we follow the symmetric Lines~\ref{alg:klein:upd_right}--\ref{alg:klein:upd_right:match}.

Zhang and Shasha~\cite{ZS89} use a very similar dynamic programming; in a baseline version, their algorithm always constructs the alignment from left to right instead of picking the side depending on $u_F$ and $v_F$.
Klein's optimization allows for an improved running time of $\Oh(n^3 \log n)$ instead of $\Oh(n^4)$.
In \cref{app:our_aj}, we recall the proof of the following lemma; the underlying insights will be necessary to analyze the optimized algorithm in \cref{sec:optimization}.

\begin{restatable}{lemma}{lemklein}\label{lem:klein}
    The recursive implementation of \cref{alg:Klein} visits $\Oh(n \log n)$ fragments~of~$F$.
\end{restatable}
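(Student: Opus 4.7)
The plan is to follow Klein's classical heavy-path analysis~\cite{K98}, adapted to the Euler-tour formulation used here. The first observation is that the set $\mathcal{V}$ of $F$-fragments $F[l_F\dd r_F)$ visited by the recursion depends only on $F$: the peel-left/peel-right choice in line~\ref{alg:klein:upd_left} is based solely on $l_F$, $r_F$, and the subtree sizes of $u_F$ and $v_F$, while the insertion branches (Lines~\ref{alg:klein:upd_left:ins} and~\ref{alg:klein:upd_right:ins}) pass the same $F$-fragment to their recursive calls. Hence $\mathcal{V}$ is generated from $F[0\dd|F|)$ by the shrinking rules $F[l_F\dd r_F)\mapsto F[l_F+1\dd r_F)$ in the peel-left case, together with the pair $F[l_F+1\dd c(u_F))$, $F[c(u_F)+1\dd r_F)$ when the matching branch is taken (and symmetrically on the right), so bounding $|\mathcal{V}|$ reduces to a purely combinatorial question about $F$.

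The second ingredient is the heavy-light decomposition: for every internal node of $F$, and also at the forest's root level, designate one child (resp.\ root-level tree) of maximum subtree size as \emph{heavy} and the remaining ones as \emph{light}, so that every root-to-leaf path traverses at most $\log_2 n$ light edges. I would then charge each $F[l_F\dd r_F)\in\mathcal{V}$ to the pair $(v_L,v_R)\in V_F\times V_F$ with $l_F\in\{o(v_L),c(v_L)+1\}$ and $r_F\in\{o(v_R),c(v_R)+1\}$. The key claim, proved by induction on the recursion, is that each node of $F$ can arise as $v_L$ in at most $\Oh(\log n)$ fragments (and symmetrically as $v_R$). The intuition is that the algorithm always splits off the \emph{smaller} extremal subtree: a matching split that advances the leftmost boundary past a node $u_F$ and into the interior of the next extremal subtree must descend along a light edge of the ancestor chain of the new $v_L$, while descents along heavy edges only generate linear peel-only chains that do not produce fresh values of $v_L$.

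The main obstacle is the careful handling of the boundary cases. First, fragments where $u_F$ or $v_F$ straddles $F[l_F\dd r_F)$ force the peel direction regardless of subtree sizes (by the \textbf{or} in line~\ref{alg:klein:upd_left}); such fragments form short peel-only chains, each of which can be charged to the next split (or to the base case if no split occurs). Second, peel-only reductions (the non-matching branches) create new fragments without branching, and one must check that the total length of such chains does not exceed the $\Oh(n\log n)$ budget; this follows because each chain following a split-off of a node $u$ has length at most $\sz(u)$, and the heavy-light amortization on the sequence of light descents ensures that these lengths sum to $\Oh(n\log n)$. Combining the $\Oh(\log n)$ charges per node over the $n$ nodes of $F$, and accounting separately for the $\Oh(1)$ straddling cases at each node, yields $|\mathcal{V}|=\Oh(n\log n)$, completing the proof of \cref{lem:klein}.
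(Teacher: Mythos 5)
Your high-level plan (heavy--light decomposition plus a charging argument, with the observation that the set of visited $F$-fragments depends only on $F$) points in the right direction, but the key claim your proof rests on is false. It is not true that each node of $F$ arises as $v_L$ in only $\Oh(\log n)$ visited fragments. Consider a tree whose root has a first child $c_1$ of subtree size $n/2$ followed by $\Theta(n)$ leaf children. Starting from the fragment $F(o(v)\dd c(v))$ for the root $v$, the algorithm sees $u_F=c_1$ with $\sz(u_F)=\Theta(n)$ and $v_F$ equal to the last leaf child with $\sz(v_F)=2$, so the left-branch test in Line~\ref{alg:klein:upd_left} fails and the algorithm repeatedly peels single characters off the \emph{right} end until only $F(o(c_1)\dd c(c_1)]$ remains. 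This produces $\Theta(n)$ visited fragments, every one of which has $l_F=o(c_1)$, so the single node $c_1$ serves as $v_L$ for $\Theta(n)$ fragments. (Symmetrically, in a star whose heavy child is the rightmost leaf, that leaf serves as $v_R$ for $\Theta(n)$ fragments.) Hence the induction you propose cannot go through, and the sentence in your last paragraph conceding that peel-only chains have length up to $\sz(u)$ already contradicts the per-node $\Oh(\log n)$ bound you are summing at the end.

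The paper's accounting is structurally different: each visited fragment is charged to its \emph{lowest enclosing node} $v$ (the node with $o(v)<l_F\le r_F\le c(v)$, or the virtual root), not to the nodes at its endpoints. The fragments charged to $v$ are shown to form a single chain that starts at $F(o(v)\dd c(v))$, shrinks by exactly one character per step on the side dictated by the branch taken, and stops at $F(o(\heavy_v)\dd c(\heavy_v)]$; the two claims requiring proof are that this chain never erodes the heavy child's subtree prematurely, and that every recursive call in the matching lines lands either at the head of a child's chain or further along the current chain. The number of fragments charged to $v$ is then $\sz(v)-\sz(\heavy_v)$, which for a single node can be $\Theta(n)$; the $\Oh(n\log n)$ bound emerges only after summing, because $\sum_v(\sz(v)-\sz(\heavy_v))=2|V_F|+\sum_{u\text{ light}}\sz(u)$ and each node has $\Oh(\log n)$ light ancestors. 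In short, the amortization must run over subtree sizes of light nodes rather than over a per-node $\Oh(\log n)$ count, so your argument needs to be rebuilt around the chain/lowest-enclosing-node decomposition.
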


\subparagraph*{Akmal and Jin's Algorithm}
Akmal and Jin~\cite{AJ21} reduce the time complexity of Klein's algorithm to $\Oh(nk^2 \log n)$ when computing $\ted_{\le k}(F,G)$.
Their solutions prunes some dynamic programming states so that, in the surviving states, the differences between the sizes of the subforests $F_{[l_F \dd r_F)}$ and $G_{[l_G \dd r_G)}$ is at most $k$.
This condition also holds for the difference between the sizes of $F_{[0 \dd r_F)}\setminus F_{[l_F \dd r_F)}$ and $G_{[0 \dd l_G)} \setminus G_{[l_G \dd r_G)}$, as well as between $F_{[l_F \dd |F|)} \setminus F_{[l_F \dd r_F)}$ and $G_{[l_G \dd |G|)} \setminus G_{[l_G \dd r_G)}$; see \cite[Lemma 12]{AJ21}.
As shown in \cite[Lemma 13]{AJ21}, for each subforest  $F_{[l_F \dd r_F)}$, there are $\Oh(k^2)$ subforests $G_{[l_G \dd r_G)}$ satisfying these three conditions. 
With a careful implementation, the total number of states becomes $\Oh(nk^2 \log n)$.

\subsection{Our Algorithm}
\newcommand{\width}{\mathsf{width}}
Our variant of Klein's algorithm simply prunes all states with $|l_F-l_G|>2k$ or $|r_F-r_G|> 2k$.
In other words, we modify \cref{alg:Klein} so that $\dpt[l_F,r_F,l_G,r_G]$ is set to $\infty$ if either condition holds, and the existing instructions are executed otherwise. 
This does not increase the number $\Oh(n\log n)$ of visited fragments $F[l_F\dd r_F)$; see \cref{lem:klein}.
For each of these fragments, trivially, at most $\Oh(k^2)$ fragments of $G$ survive pruning, so the total number of states is $\Oh(nk^2 \log n)$.
The correctness of the pruning rules follows form the fact that, for all forests $F,G$, the width of any alignment in $\ta(F,G)$ does not exceed twice its cost,
where the width of an alignment $\A\in \aa(X[x\dd x'),Y[y\dd y'))$ is defined as $\width(\A)=\max\{|\hx-\hy| : (\hx,\hy)\in \A\}$.

In \cref{app:our_aj}, we formalize this intuition using the notion of bounded forest alignments.

\begin{definition}
    Let us fix a threshold $k\in \Zp$ and consider fragments $F[f\dd f')$ and $G[g\dd g')$ of forests $F,G\in \FSigma$.
    We call a forest alignment $\A \in \ta(F[f\dd f'),G[g\dd g'))$ \emph{bounded} if $\width(\A)\le 2k$.
    The family of \emph{bounded forest alignments} is $\bta{k}(F[f\dd f'),G[g\dd g')) \subseteq \ta(F[f\dd f'),G[g\dd g'))$.
    For a weight function $w : \wtype$, we denote 
    \[\bted{k}^w(F[f\dd f'),G[g\dd g')) =\min_{\A \in \bta{k}(F[f\dd f'),G[g\dd g'))} \ted_{\A}^w(F[f\dd f'),G[g\dd g')).\]
\end{definition}

\begin{restatable}{lemma}{lemgood}\label{lem:good}
    The $\dpt$ values computed using the pruned version of \cref{alg:Klein} satisfy 
    \[\ted^w(F[l_F\dd r_F),G[l_G\dd r_G)) \le \dpt[l_F, r_F, l_G, r_G] \le \bted{k}^w(F[l_F\dd r_F),G[l_G\dd r_G)).\]
\end{restatable}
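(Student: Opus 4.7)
The plan is to prove both inequalities by induction on $(r_F - l_F) + (r_G - l_G)$. The base cases ($l_F = r_F$ or $l_G = r_G$) are immediate: \cref{alg:Klein} stores the cost of the unique forest alignment consisting of pure insertions or pure deletions, which equals both $\ted^w$ and $\bted{k}^w$ whenever the state is not pruned; pruning only inflates $\dpt$ to $\infty$, which preserves the lower bound trivially.

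For the lower bound $\ted^w \le \dpt$, I would observe that in each unpruned update branch, the value assigned to $\res$ is the cost of a concrete forest alignment of $F[l_F\dd r_F)$ onto $G[l_G\dd r_G)$ obtained by prepending or appending the corresponding horizontal, vertical, or pair of diagonal edges to the forest alignments witnessed by the recursively computed $\dpt$ values. By the inductive hypothesis those recursive values are already $\ge \ted^w$ of the respective subproblems, so every candidate $\res$, and hence their minimum, is $\ge \ted^w(F[l_F\dd r_F), G[l_G\dd r_G))$.

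For the upper bound $\dpt \le \bted{k}^w$, I fix an arbitrary $\A \in \bta{k}(F[l_F\dd r_F), G[l_G\dd r_G))$ of cost $c = \ted_\A^w$ and exhibit one update branch with $\res \le c$. Applying $\width(\A) \le 2k$ at the endpoints of $\A$ gives $|l_F - l_G|, |r_F - r_G| \le 2k$, so the state is unpruned and either Case L or Case R of \cref{alg:Klein} is executed. Assume Case L (Case R is symmetric). The first edge of $\A$ out of $(l_F, l_G)$ is horizontal, vertical, or diagonal, matching the delete, insert, or match branch of the algorithm. Stripping the first edge---or, in the match case, also the twin diagonal edge at $(c(u_F), c(u_G))$ forced by forest-alignment consistency---splits $\A$ into bounded forest alignment(s) on strictly smaller subproblems (since widths of subpaths are at most $\width(\A) \le 2k$). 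The inductive hypothesis then bounds the corresponding $\dpt$ value(s) by the cost of the sub-alignment(s), and adding back the cost of the stripped edges yields $\res \le c$.

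The main subtlety is verifying that the match branch is actually available whenever $\A$'s first edge is diagonal, as the algorithm only takes it when $u_F \in V_{F[l_F\dd r_F)}$ and $u_G \in V_{G[l_G\dd r_G)}$. If $\A$ aligns $F[l_F]$ with $G[l_G]$, forest-alignment consistency forces $F[\match{F}{l_F}] \ponto{\A} G[\match{G}{l_G}]$, which in turn requires $\match{F}{l_F} \in [l_F\dd r_F)$ and $\match{G}{l_G} \in [l_G\dd r_G)$. Combined with $l_F \in \{o(u_F), c(u_F)\}$, this forces $l_F = o(u_F)$ and $c(u_F) < r_F$, i.e.\ $u_F \in V_{F[l_F\dd r_F)}$; the analogous conclusion for $u_G$ is symmetric. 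A similar argument handles the diagonal branch in Case R at the right endpoints. This case analysis, rather than any technical calculation, is the principal obstacle, because it must be checked that every configuration of ``which side the algorithm processes'' is compatible with every possible first or last move of a bounded forest alignment.
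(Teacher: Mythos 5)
Your proposal is correct and follows essentially the same route as the paper's proof: the lower bound is obtained by assembling a concrete forest alignment from the sub-alignments witnessing each update rule, and the upper bound by peeling the first (or last, in the right branch) edge off an optimal bounded alignment and checking that the corresponding update rule is available, with the diagonal case resolved exactly as you describe via the consistency condition forcing $u_F$ and $u_G$ to be contained in their fragments. The only point you assert rather than verify—that the subpaths obtained after stripping the twin diagonal edges are themselves \emph{forest} alignments—is justified in the paper by noting that no node simultaneously exits $F(o(u_F)\dd c(u_F))$ and enters $F(c(u_F)\dd r_F)$, so the consistency condition imposes no new constraints on the subalignments.
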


In particular, the $\dpt[0,|F|,0,|G|]$ entry stores a value between $\ted^w(F,G)$ and $\bted{k}^w(F,G)$. 
The following observation implies that this is enough to retrieve $\ted_{\le k}^w(F,G)$.
\begin{observation}\label{obs:good_is_good}
    If $\ted^w(F,G)\le k$, then $\ted^w(F,G)=\bted{k}^w(F,G)$.
\end{observation}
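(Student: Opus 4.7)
The plan is to prove the two inequalities between $\ted^w(F,G)$ and $\bted{k}^w(F,G)$ separately. One direction, $\ted^w(F,G) \le \bted{k}^w(F,G)$, is immediate from the inclusion $\bta{k}(F,G) \subseteq \ta(F,G)$, since every bounded forest alignment is in particular a forest alignment. The substantive direction is $\bted{k}^w(F,G) \le \ted^w(F,G)$, which I would establish under the hypothesis $\ted^w(F,G)\le k$ by showing that every optimal forest alignment is automatically bounded.

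Concretely, I would fix any optimal $\A^* \in \ta(F,G)$ with $\ted^w_{\A^*}(F,G)=\ted^w(F,G)\le k$ and argue that $\width(\A^*) \le 2k$. Since $\ted^w_{\A^*} = \ed^{\PW}_{\A^*}$, each horizontal edge $(\hx,\hy)\to(\hx+1,\hy)$ of $\A^*$ contributes cost $\PW(F[\hx],\emptystring) = \frac{1}{2}\, w(\lambda(F[\hx]),\emptystring)\ge \frac{1}{2}$: indeed, $F[\hx]$ is a parenthesis $\op_a$ or $\cl_a$, so $\lambda(F[\hx])=a\ne \emptystring$, and the normalization of $w$ gives $w(a,\emptystring)\ge 1$. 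A symmetric bound applies to every vertical edge $(\hx,\hy)\to(\hx,\hy+1)$. Consequently, $\A^*$ contains at most $2k$ non-diagonal edges in total.

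To convert this into a width bound, I would observe that along any path starting at $(0,0)$, the coordinate difference $\hx - \hy$ changes by $+1$ on a horizontal edge, by $-1$ on a vertical edge, and by $0$ on a diagonal edge. Hence, at any vertex $(\hx,\hy)$ visited by $\A^*$, the value $|\hx-\hy|$ is bounded by the number of non-diagonal edges traversed in the corresponding prefix, and therefore by $2k$. This yields $\width(\A^*)\le 2k$, so $\A^* \in \bta{k}(F,G)$ and $\bted{k}^w(F,G) \le \ted^w_{\A^*}(F,G) = \ted^w(F,G)$, closing the equality. I do not foresee a real obstacle here; the only subtlety is the factor $\frac{1}{2}$ in the definition of $\PW$, which is exactly what makes the ``cost at least $\frac{1}{2}$'' bound per non-diagonal edge translate into a width bound of $2k$ rather than $k$, matching the pruning threshold used in the modified Klein algorithm.
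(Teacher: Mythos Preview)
Your proposal is correct and follows essentially the same argument as the paper: take an optimal forest alignment, observe that each non-diagonal edge costs at least $\tfrac12$ by normalization of $w$ (and the definition of $\PW$), deduce that there are at most $2k$ such edges, and conclude $\width(\A^*)\le 2k$ since the coordinate difference starts at $0$ and changes by $\pm 1$ only on non-diagonal edges. The trivial converse inequality via $\bta{k}(F,G)\subseteq \ta(F,G)$ is also handled identically.
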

\begin{proof}
    Consider the optimal forest alignment $\A\in \ta(F,G)$ with $\ted^w_{\A}(F,G) \le k$.
    For each edge $(f,g)\to (f',g')$, either $f-g=f'-g'$ (if the edge is diagonal) or $\big|(f-g)-(f'-g')|=1$ and the edge cost is at least $\tfrac12$ (if the edge is vertical or horizontal).
    Since $(0,0)\in \A$ and the total cost of edges in $\A$ is at most $k$, every $(f,g)\in \A$ satisfies $|f-g|\le 2k$, i.e., $\width(\A)\le 2k$ and $\A \in \bta{k}(F,G)$.
    Hence, $\bted{k}^w(F,G)\le \ted^w_{\A}(F,G) = \ted^w(F,G)$.
    The converse inequality  $\ted^w(F,G)\le \bted{k}^w(F,G)$ holds trivially due to  $\bta{k}(F,G)\subseteq \ta(F,G)$.
\end{proof}

With minor implementation details needed to avoid logarithmic overheads for memoization using a sparse table $\dpt$ (Akmal and Jin~\cite{AJ21} ignore this issue), we achieve the following result.

\thmaj*

\section{Faster Algorithm for Repetitive Inputs}\label{sec:optimization}
In this section, we present an optimized version of our $\Oh(nk^2 \log n)$-time algorithm capable of exploiting certain repetitive structures within the input forests $F$ and $G$.
The following notion of \emph{free pairs} captures the structure that our algorithm is able to utilize.
\begin{definition}[Free pair, free block]\label{def:free}
    Consider forests $F,G\in \FSigma$ and a fixed threshold $k\in \Zp$.
    We call a pair of fragments $F[p_F \dd q_F)$ and $G[p_G \dd q_G)$ a \emph{free pair} if 
    \begin{itemize}
    \item $|p_F-p_G|\le 2k$, and
    \item there exists a balanced string $R\in \FSigma$ with $4k \le |R| < 8k$ such that $F[p_F-|R| \dd q_F+|R|)=R^{e+2}=G[p_G-|R| \dd q_G+|R|)$ holds for some integer exponent $e\in \Zp$.
    \end{itemize}
    For that free pair, we call the fragment $F[p_F \dd q_F)$ a \emph{free block} with period $R$ and exponent~$e$.
\end{definition}

\newcommand{\FB}{\mathbf{F}}
Our improved algorithm assumes that the input forests $F$ and $G$ are augmented with a collection $\FB$ of disjoint free blocks $F[p_F\dd q_F)$, each associated with the underlying period~$R$, exponent $e$, and the corresponding fragment $G[p_G\dd q_G)$.
The speed-up compared to the algorithm of \cref{sec:our_aj} is noticeable if the free blocks in $\FB$ jointly cover most of the characters of $F$, that is, the number of remaining \emph{non-free} characters is asymptotically smaller than $|F|$.

\begin{restatable}{theorem}{thmoptimization}\label{thm:optimization}
    There exists a deterministic algorithm that, given forests $F,G\in \FSigma$ of total length $n$, oracle access to a normalized weight function $w:\wtype$,
    a threshold $k\in \Zp$, and a collection $\FB$ of $t$ disjoint free blocks in $F$ such that $m$ characters of $F$ are not contained in any free block, computes $\ted^w_{\le k}(F,G)$ in $\Oh(n\log n + mk^2\log n + tk^3\log n)$ time.
\end{restatable}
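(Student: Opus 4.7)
The plan is to augment the pruned Klein recursion from \cref{sec:our_aj} with a short-circuit: whenever the recursion is about to shrink an $F$-fragment through the periodic interior of a free block, it jumps across many copies of the period at once by applying a precomputed min-plus power of a per-period transition matrix. For each visited fragment $F[l_F\dd r_F)$ I maintain the whole vector of $\Oh(k^2)$ surviving entries $\dpt[l_F,r_F,l_G,r_G]$ (over pairs $(l_G,r_G)$ with $|l_F-l_G|\le 2k$ and $|r_F-r_G|\le 2k$), and update this vector either by a single-character peel (normal case) or by a matrix-exponentiation jump (periodic case).

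First I would refine the charging argument behind \cref{lem:klein} for the modified recursion. The modified peel rule is: once the current fragment $F[l_F\dd r_F)$ becomes strictly contained in the padded neighborhood $F[p_F-|R|\dd q_F+|R|)$ of some free block from $\FB$, we peel a full copy of the period $R$ at a time instead of a single character, choosing the same side as the original rule (based on $\sz(u_F)$ vs.\ $\sz(v_F)$). Klein's charging argument, re-done for the modified recursion, charges each non-free character of $F$ to $\Oh(\log n)$ fragments and each free block to $\Oh(\log n)$ boundary fragments, yielding $\Oh((m+t)\log n)$ visited fragments in total. Each of these costs $\Oh(k^2)$ to process by the vectorized per-$F$-fragment recursion, contributing $\Oh((m+t)k^2\log n)$ time for all ordinary peels.

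The core of the optimization is the per-period transition. Fix a free block with period $R$ where $|R|\in[4k\dd 8k)$ and exponent $e$. Because $F$ and $G$ agree with $R^{e+2}$ on wide neighborhoods and $|p_F-p_G|\le 2k$, the set of surviving $(l_G,r_G)$ pairs for any period-aligned $F[l_F\dd r_F)$ inside the block is exactly $\{(l_F+\Delta l,r_F+\Delta r):\Delta l,\Delta r\in[-2k\dd 2k]\}$, and the $\sz(u_F)$ vs.\ $\sz(v_F)$ tie-breaking in \cref{alg:Klein} depends only on the position modulo $|R|$ (since $|R|>2k$, the relevant $u_F,u_G,v_F,v_G$ lie wholly within the current period). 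Reparameterizing the DP vector by $(\Delta l,\Delta r)$ makes the transition $\mathbf{v}[l_F,r_F]\mapsto\mathbf{v}[l_F+|R|,r_F]$ (or $\mathbf{v}[l_F,r_F-|R|]$, depending on the peel direction) a min-plus multiplication by a fixed matrix $M_R$ of dimension $\Oh(k^2)\times\Oh(k^2)$, shrinking to $\Oh(k)\times\Oh(k)$ after observing that the inactive end of the $G$-fragment is unaffected by a one-period peel. Precomputing $M_R^{2^i}$ for $i=0,\ldots,\lceil\log_2 e\rceil$ by repeated squaring in the $(\min,+)$ semiring costs $\Oh(k^3\log n)$ per free block, and each subsequent jump over $e'$ consecutive periods costs $\Oh(k^3\log n)$ (or $\Oh(k^2\log n)$ if we apply the powers $M_R^{2^i}$ directly to a vector), for $\Oh(tk^3\log n)$ in total. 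Correctness of the jump follows by unfolding one period of the per-character recursion, and the final answer $\dpt[0,|F|,0,|G|]$ equals $\ted^w_{\le k}(F,G)$ by \cref{lem:good} and \cref{obs:good_is_good}.

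\textbf{The main obstacle} is the translation-invariance claim: that the transition matrix really depends only on $R$ and $w$, not on the absolute position of the current period inside the block. Two subtleties arise: the peel-direction rule of \cref{alg:Klein} must be shift-invariant (which follows from $|R|>2k$ and the identical periodic contexts of width $|R|$ guaranteed on both sides by the definition of a free pair), and the reparameterization $(\Delta l,\Delta r)$ must remain well-defined across period boundaries (which follows from the pruning bounds $|l_F-l_G|\le 2k$, $|r_F-r_G|\le 2k$ combined with $|p_F-p_G|\le 2k$ and the fact that the $G$-side is equally periodic). Once both are verified, the remaining analysis is a routine combination of Klein's recursion, the width-$2k$ pruning from \cref{sec:our_aj}, and standard min-plus matrix exponentiation.
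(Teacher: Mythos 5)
Your overall architecture matches the paper's: precompute an $\Oh(k)\times\Oh(k)$ min-plus transition matrix per free block, raise it to the $e$-th power by repeated squaring in $\Oh(k^3\log n)$ time per block, run the pruned Klein recursion on the remaining fragments, and charge $\Oh(m\log n)$ ordinary fragments plus $\Oh(t\log n)$ jump applications via a heavy-path argument restricted to non-free characters. However, two steps you leave open are exactly where the real work lies, and the first one is attacked from the wrong direction.

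First, the correctness of the jump. You frame the key difficulty as showing that the composition of $|R|$ single-character Klein peels is a fixed, translation-invariant min-plus matrix, and you worry about whether the $\sz(u_F)$-versus-$\sz(v_F)$ tie-breaking is shift-invariant inside the block. This is not how the paper proceeds, and for good reason: the peel direction for a fragment with one endpoint inside the block and the other far outside depends on $\sz(v_F)$ for a node $v_F$ whose subtree is unrelated to the period, so "depends only on the position modulo $|R|$" is not justified; moreover, the match rule of \cref{alg:Klein} splits the problem into two independent subcalls rather than composing per-character matrices, so "unfolding one period of the per-character recursion" does not literally yield a matrix product. The paper instead defines $M_R[i,j]=\ted^w(R,R^3[|R|+i\dd 2|R|+j))$ \emph{intrinsically} (computed by a fresh run of Klein's algorithm on $R$ versus $R^3$), and proves in \cref{lem:ted_matrix_update} that any optimal bounded forest alignment decomposes at the period boundaries into independent alignments of the copies of $R$. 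The two facts that make this work are that $R$ is \emph{balanced}, so no node straddles a period boundary and the consistency condition of \cref{def:ta} imposes no constraint coupling distinct copies of $R$, and that $\width(\A)+|p_F-p_G|\le 4k\le |R|$ confines the image of each copy to the corresponding window of $R^3$. This decomposition lemma is the missing idea; once you have it, translation invariance of Klein's internal transition is irrelevant.

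Second, you do not argue that the DP entries you skip are never consulted. After the jump, the recursion never computes $\dpt[l_F,r_F,\cdot,\cdot]$ for fragments whose boundary falls strictly inside a free block, yet the single-character peels and, more delicately, the match rule (which recurses on $F(l_F\dd c(u_F))$ and $F(c(u_F)\dd r_F)$) could in principle reference such fragments. The paper's proof has a case analysis showing that a fragment that neither starts with, ends with, nor partially contains a free block only spawns subcalls with the same property (using again that free blocks are balanced), and separately that heavy-path boundaries $o(\heavy_v),c(\heavy_v)$ never fall inside a free block because the padded occurrence $R^{e+2}$ leaves a non-free copy of $R$ on each side. Without these two checks your $\Oh(m\log n)$ and $\Oh(t\log n)$ counts, and indeed the well-definedness of the recursion, do not follow.
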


Intuitively, the algorithm skips all the free blocks, which allows reducing the $\Oh(nk^2 \log n)$ running time of \cref{thm:aj} to $\Oh(mk^2 \log n)$, i.e., the algorithm needs to pay for non-free characters only.
Nevertheless, processing each free block takes $\Oh(k^3 \log n)$ extra time, and $\Oh(n\log n)$-time preprocessing time is still needed to avoid overheads for memoization.

\subparagraph*{Processing Free Blocks}
To understand our speed-up, let us consider a free pair with period $R$ and exponent $e$, that is, $F[p_F \dd q_F)=R^e=G[p_G \dd q_G)$.
We picked the parameters so that $|R|\ge 4k \ge \width(\A)+|p_F-p_G|$, and thus every alignment  $\A \in \bta{k}(F,G)$ aligns $F[p_F \dd q_F)=R^e$ with a fragment $G[p'_G\dd q'_G)$ contained in $G[p_G-|R|\dd q_G+|R|)=R^{e+2}$.
By the same argument, the image of every copy of $R$ within $F[p_F \dd q_F)=R^e$ is contained within the corresponding copy of $R^3$ within $G[p_G-|R|\dd q_G+|R|)=R^{e+2}$.
Moreover, when we align $F[p_F \dd q_F)$ to $G[p'_F\dd q'_F)\subseteq G[p_G-|R|\dd q_G+|R|)$, it suffices to partition $G[p'_F\dd q'_F)$ into $e$ fragments and \emph{independently} optimally align each copy of $R$ in $F[p_F \dd q_F)$ with the corresponding fragment of $G[p'_F\dd q'_F)$.
This is because $R$ is balanced, so every node of $R^e$ is contained within a single copy of $R$, and thus the consistency condition in \cref{def:ta} does not impose any constraints affecting multiple copies of $R$.

The optimal costs of aligning $R$ with relevant fragments of $R^3$ can be encoded in the following matrix, constructible in $\Oh(|R|^3 \log |R|)$ time using \cref{alg:Klein} (Klein's algorithm).

\begin{definition}\label{def:ted_matrix}
    For a balanced string $R\in \FSigma$, we define a matrix $M_R$ of size $(2|R|+1) \times (2|R|+1)$ with indices $i,j$ in the range $[-|R|\dd |R|]$ as follows:
    \[
        M_R[i,j] = \begin{cases}
            \ted^w(R, R^3[|R|+i\dd 2|R|+j)) & \text{if }|R|+i \le 2|R|+j,\\
            \infty & \text{otherwise.}
        \end{cases}
    \]
\end{definition}

In order to derive the optimal costs of aligning $R^e=F[p_F \dd q_F)$ with the relevant fragments of $R^{e+2}=G[p_G-|R|\dd q_G+|R|)$, we simply compute the $e$-th power of $M_R$ with respect to the min-plus product.
Formally, the min-plus product of matrices $A\in \mathbb{R}^{I\times J}$ and $B\in \mathbb{R}^{J\times K}$ is a matrix $C\in \mathbb{R}^{I\times K}$ such that $C[i,k]=\min_{j\in J} A[i,j]+B[j,k]$ for $(i,k)\in I\times K$.

For each free block $F[p_F \dd q_F)=R^e\in \FB$, we construct the matrix $M_R^e$ in $\Oh(k^3 \log n)$ time using \cref{alg:Klein} followed by fast exponentiation, which reduces to computing $\Oh(\log e)$ min-plus products.
We apply the matrix whenever we are tasked with filling a $\dpt[l_F,r_F,l_G,r_G]$ entry such that $F[p_F\dd q_F)$ is a prefix or a suffix of $F[l_F\dd r_F)$.
Formally, if $l_F\le p_F  < q_F = r_F$, then we use the following formula instead of following \cref{alg:Klein}:
\begin{equation}\label{eq:free_right}
    \dpt[l_F, r_F, l_G, r_G] \gets \min_{p_G' \in [p_F-|R|\dd p_F+|R|]} \dpt[l_F, p_F, l_G, p'_G] + M_R^e[p'_G-p_G,r_G-q_G].
\end{equation}
In the symmetric case of $l_F= p_F  < q_F \le r_F$, we apply
\begin{equation}\label{eq:free_left}
    \dpt[l_F, r_F, l_G, r_G] \gets \min_{q_G' \in [q_F-|R|\dd q_F+|R|]} \dpt[q_F, r_F, q'_G, r_G] + M_R^e[l_G-p_G,q'_G-q_G].
\end{equation}

In \cref{app:optimization}, we formalize the intuition above to prove that \cref{eq:free_right,eq:free_left} preserve the invariant of \cref{lem:good}, that is, \[\ted^w(F[l_F\dd r_F),G[l_G\dd r_G)) \le \dpt[l_F, r_F, l_G, r_G] \le \bted{k}^w(F[l_F\dd r_F),G[l_G\dd r_G)).\]
For \eqref{eq:free_right}, this boils down to the following lemma; the case of \eqref{eq:free_left} is symmetric.
\begin{restatable}{lemma}{lemtedmatrixupdate}\label{lem:ted_matrix_update}
    Consider fragments $F[l_F\dd r_F)$, $G[l_G\dd r_G)$ of forests $F,G\in \FSigma$, a normalized weight function $w : \eSigma^2 \to \mathbb{R}_{\ge 0}$, and a threshold $k\in \Zp$ such that $|r_F-r_G|\le 2k$.
    If there is a free pair $F[p_F\dd q_F)=G[p_G\dd q_G)=R^e$ such that $F[p_F\dd q_F)$ is a suffix of $F[l_F\dd r_F)$, then
    \begin{align*}&\bted{k}^w(F[l_F\dd r_F),G[l_G\dd r_G))\\
        &\qquad\ge \min_{p'_G\in [p_F-|R|\dd p_F+|R|]} \bted{k}^w(F[l_F\dd p_F),G[l_G\dd p'_G))+M_R^e[p'_G-p_G,r_G-q_G]\\
        &\qquad\ge \min_{p'_G\in [p_F-|R|\dd p_F+|R|]} \ted^w(F[l_F\dd p_F),G[l_G\dd p'_G))+M_R^e[p'_G-p_G,r_G-q_G]\\
        &\qquad\ge \ted^w(F[l_F\dd r_F),G[l_G\dd r_G)).
    \end{align*}
\end{restatable}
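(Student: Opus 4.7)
The proof splits into the three declared inequalities; the middle one, $\bted{k}^w \ge \ted^w$, is immediate from $\bta{k}(\cdot,\cdot) \sub \ta(\cdot,\cdot)$, so the real work lies in the first and third, which are essentially mirror images.

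For the first inequality, I would take an optimal $\A \in \bta{k}(F[l_F\dd r_F), G[l_G\dd r_G))$ and let $p'_G$ be the unique second coordinate with $(p_F, p'_G) \in \A$. Boundedness of $\A$ yields $|p'_G - p_F| \le 2k$, and combined with $|p_F - p_G| \le 2k$ (from the free-pair definition) and $|R| \ge 4k$, this places $p'_G - p_G \in [-|R|, |R|]$, so the pair indexes a valid entry of $M_R^e$. Split $\A$ at $(p_F, p'_G)$ into subpaths $\A_1$ and $\A_2$. The key step is showing that $\A_1 \in \bta{k}(F[l_F\dd p_F), G[l_G\dd p'_G))$: boundedness is inherited, and for the forest-consistency condition one takes any aligned pair $F[\hat f] \ponto{\A_1} G[\hat g]$ and invokes \cref{rmk:spare} to deduce $\match{F}{\hat f} \in [l_F, r_F)$. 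The balancedness of $R^e = F[p_F\dd q_F)$ then rules out $\match{F}{\hat f} \in [p_F, r_F)$ (either a closing paren inside $R^e$ paired with an opening before $p_F$, or the symmetric case, would force a negative-level prefix of $R^e$), leaving $\match{F}{\hat f} \in [l_F, p_F)$. Monotonicity of $\A$ through $(p_F, p'_G)$ then forces $\match{G}{\hat g} \in [l_G, p'_G)$, so the consistency edge lies inside $\A_1$.

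For the cost of $\A_2$, let $y_s$ denote the unique second coordinate with $(p_F + s|R|, y_s) \in \A$ for $s \in \{0, \ldots, e\}$, so $y_0 = p'_G$, $y_e = r_G$, and set $i_s := y_s - p_G - s|R|$. Boundedness of $\A$ gives $|i_s| \le 4k \le |R|$. Since $R$ is balanced, every node of $R^e$ lives inside a single copy of $R$, so the subpath of $\A_2$ between $(p_F + (s-1)|R|, y_{s-1})$ and $(p_F + s|R|, y_s)$ is a forest alignment of the $s$-th copy of $R$ with $G[y_{s-1}\dd y_s)$, a fragment that coincides with $R^3[|R| + i_{s-1}\dd 2|R| + i_s)$. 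Its cost is therefore at least $M_R[i_{s-1}, i_s]$. Summing over $s$ and minimising over the intermediate $i_s$ yields the lower bound $M_R^e[i_0, i_e] = M_R^e[p'_G - p_G, r_G - q_G]$, which, combined with the bound on $\A_1$, proves the first inequality.

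The third inequality is proved by exhibiting a witness. Fix a minimiser $p'_G$, take an optimal $\A_1' \in \ta(F[l_F\dd p_F), G[l_G\dd p'_G))$, and let $\A_2'$ be the concatenation of $e$ optimal forest alignments $R \to R^3[\cdot]$ whose total cost realises $M_R^e[p'_G - p_G, r_G - q_G]$. Their concatenation is a path from $(l_F, l_G)$ to $(r_F, r_G)$ whose cost equals the sum of the two summands. Forest consistency is again supplied by balancedness: every aligned pair $F[\hat f] \ponto{} G[\hat g]$ has $\hat f$ either inside $[l_F, p_F)$, in which case the balancedness-of-$R^e$ argument from the first inequality pins $\match{F}{\hat f}$ to $[l_F, p_F)$ and consistency is provided by $\A_1'$, or inside a single copy of $R$, in which case consistency is provided by the relevant per-copy alignment; nodes of $F$ and $G$ that straddle $F[l_F\dd r_F)$ respectively $G[l_G\dd r_G)$ have their inside parenthesis deleted or inserted by \cref{rmk:spare} applied to $\A_1'$ and the per-copy alignments. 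The main obstacle throughout is precisely this consistency bookkeeping at the split column $p_F$; once the balancedness of $R^e$ is invoked to eliminate cross-split nodes, the cost decomposition is routine.
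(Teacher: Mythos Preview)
Your proposal is correct and follows essentially the same approach as the paper: both split the optimal bounded alignment at the boundaries of the $e$ copies of $R$, use $\width(\A)\le 2k$ together with $|p_F-p_G|\le 2k$ and $|R|\ge 4k$ to place each split index in $[-|R|\dd |R|]$, and invoke the balancedness of $R$ to argue that the resulting subpaths are forest alignments whose costs are bounded by the corresponding $M_R$ entries. The only organizational difference is that the paper packages the argument as an auxiliary single-step claim (peeling off one copy of $R$ at a time and applying it $e$ times), whereas you carry out the full $e$-fold decomposition directly; your treatment of the forest-consistency bookkeeping at the split columns is in fact more explicit than the paper's.
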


We further argue in \cref{app:optimization} that the recursive implementation of \cref{alg:Klein} augmented with the optimizations of \eqref{eq:free_right} and \eqref{eq:free_left} visits $\Oh(m \log n)$ fragments $F[l_F\dd r_F)$, where $m$ is the number of non-free characters, including $\Oh(t \log n)$ fragments $F[l_F\dd r_F)$ for which \eqref{eq:free_right} or \eqref{eq:free_left} apply.
Specifically, our optimized algorithm visits fragments $F[l_F\dd r_F)$ visited by the original \cref{alg:Klein} and satisfying the following additional property: every free block $F[p_F\dd q_F)\in \FB$ is either disjoint with $F[l_F\dd r_F)$ or contained in $F[l_F\dd r_F)$. 
Our implementation, formalized as \cref{alg:free_block}, takes $\Oh(n\log n)$ extra preprocessing time to list fragments visited by \cref{alg:Klein} and filter those satisfying the aforementioned property.

\section{Universal Kernel with Improved Repetitiveness Guarantees}\label{sec:our_kernel}

In this section, we outline our approach to strengthen \cref{thm:prev_kernel} into the following result:

\begin{restatable}{theorem}{thmkernel}\label{thm:our_kernel}
    There exists a linear-time algorithm that, given forests $F$, $G$ and an integer $k\in \Zp$,
    constructs forests $F'$, $G'$ of size $\Oh(k^5)$ such that $\ted_{\le k}^w(F',G')=\ted_{\le k}^w(F,G)$ holds for every normalized \emph{quasimetric} $w$ (weight function satisfying the triangle inequality), and a collection of $\Oh(k^3)$ disjoint free blocks in $F'$ with $\Oh(k^4)$ non-free characters. 
\end{restatable}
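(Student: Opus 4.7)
The plan is to refine the kernelization procedure behind Theorem~\ref{thm:prev_kernel} so that, in a single pass, the output exposes explicit periodic structure that can be accounted for as free blocks. The existing kernel identifies long ``agreement regions''---aligned substrings of $F$ and $G$ that coincide---and compresses each one by retaining only $\Oh(k)$-character boundaries while collapsing the interior. I would keep this outer scheme, but insist that each collapsed interior be rewritten as a high power $R^e$ of a balanced period $R\in\FSigma$ of length $|R|\in[4k,8k)$. Such a period can be located by scanning the Euler tour of the agreement region: provided the region is sufficiently long, one can always carve out a balanced factor of the required length (e.g.\ the Euler tour of an appropriately-sized subforest, possibly after removing a short caterpillar tail). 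Each collapsed interior is then extended on either side by an additional copy of $R$ drawn from the surrounding agreement region, meeting the buffer condition of Definition~\ref{def:free}; since those buffers already belong to the agreement region, they cost no extra size.

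Correctness---that is, $\ted^w_{\le k}(F',G') = \ted^w_{\le k}(F,G)$ for every normalized quasimetric $w$---would follow the same template as the original kernel. For one direction, I would take an optimal alignment $\A$ of $F',G'$ of cost at most $k$, invoke Observation~\ref{obs:good_is_good} to conclude $\width(\A)\le 2k$, and then reinsert the original interior in place of each $R^e$ to obtain an alignment of $F,G$ of the same cost; the buffer width $|R|\ge 4k$ prevents the alignment from ``leaking'' across block boundaries, so the modifications remain localized to a single block. For the converse, I would lift an optimal bounded alignment of $F,G$ symmetrically, aligning matching content for free and appealing to the quasimetric inequality on $w$ to confirm that no cheaper alignment is created by exploiting the rewritten periodic interior.

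The main obstacle is producing only $\Oh(k^3)$ free blocks through a \emph{single} reduction pass. The iterative kernel of~\cite{DGHKS23} performs $\Theta(\log n)$ rounds, so naively reusing its structure would yield $\Oh(k^3\log n)$ free blocks. To eliminate the logarithmic factor, I would abandon the round structure and compute one global decomposition of $F$ and $G$ into maximal agreement regions, matched position by position up to the $2k$-offset permitted by Definition~\ref{def:free}. A counting argument analogous to the one in~\cite{DGHKS23} bounds the total boundary material by $\Oh(k^4)$; since each agreement region is separated from its neighbors by $\Omega(k)$ non-free characters, the number of agreement regions---and hence of free blocks---cannot exceed $\Oh(k^3)$. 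The technical core will be to assemble this global decomposition in linear time while simultaneously verifying the periodicity and offset conditions of Definition~\ref{def:free}; I expect this to reuse the suffix-tree and longest-common-extension machinery of the original kernel, now applied monolithically rather than recursively, so that the construction still runs in $\Oh(n)$ time.
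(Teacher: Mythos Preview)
Your plan diverges from the paper's and has a genuine correctness gap. The paper does not rewrite agreement regions as powers $R^e$; instead, it (i) builds a size-$\Oh(k)$ piece matching via Theorem~\ref{thm:decomposition}, crucially handling both subforests and \emph{contexts} (your sketch treats everything as contiguous string fragments and never mentions contexts), (ii) replaces each matched piece $P$ by a $\ted^w_{\le k}$-equivalent $P'$ with at most $\Oh(k^3)$ red characters using Lemmas~\ref{lem:alg:forest_reduction} and~\ref{lem:alg:context_reduction}, and (iii) harvests free blocks from the periodic blocks \emph{already present} in each $P'$ via Lemma~\ref{lem:findgood} and a balanced-cyclic-rotation argument (Lemma~\ref{lem:balanced_cyclic_rotation}). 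The $\Oh(k^3)$ bound on free blocks is $\Oh(k)$ pieces times $\Oh(k^2)$ periodic blocks per piece (from $|\I|\le \tfrac{1}{2k}|\red{P'}|$), not the separation argument you suggest.

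The step that would fail is your replacement $P\mapsto R^e$. The paper's equivalence lemma (Lemma~\ref{lem:horizontal_aperiodic_reduction}) shows two forests are $\ted^w_{\le k}$-equivalent only when \emph{both} have at least $158k^2$ red characters; its proof hinges on a synchronization-point argument (Claim~\ref{clm:horizontal_reduction}) that exploits aperiodicity and simply does not go through for a periodic $R^e$. Your direct ``reinsert the original interior and get the same cost'' claim is not true in general: an optimal alignment of $F',G'$ may perform edits inside the $R^e$ block whose cost depends on the periodic structure of $R^e$, and there is no reason the original aperiodic $P$ admits edits of the same cost at the corresponding positions. This is exactly why Algorithm~\ref{alg:forest_reduction}, when it must replace a piece, substitutes the \emph{aperiodic} forest $\op_a^{79k^2}\cl_a^{79k^2}$ rather than anything periodic---so that the equivalence lemma applies in both directions. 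Free blocks in the paper only arise from pieces that were already periodic enough to be kept unchanged.
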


Compared to \cref{thm:prev_kernel}, we require the presence of $\Oh(k^3)$ free blocks that jointly capture all but $\Oh(k^4)$ characters of the output forest $F'$; consult \cref{def:free}.
At the very high level, the proofs of both \cref{thm:prev_kernel,thm:our_kernel} consist in three steps: decomposing the input forests $F$ and $G$ into several pieces, identifying pieces that can be matched exactly, and replacing (pairs of) identical pieces with smaller equivalent counterparts.

\subparagraph*{Forest Decompositions and Piece Matchings}

Following~\cite{DGHKS23}, we say that a \emph{piece} of a forest $F$ is a \emph{subforest}---a balanced fragment $F[i\dd j)$---or a \emph{context}---a pair of fragments $\langle F[i\dd i');F[j'\dd j)\rangle$ such that $F[i\dd j)$ is a tree and $F[i'\dd j')$ is balanced. 
We denote the set of pieces contained in a fragment $F[i\dd j)$ of $F$ by $\Pc(F[i\dd j))$; we set $\Pc(F)=\Pc(F[0\dd |F|))$.

In isolation from $F$, a context can be interpreted as a pair of non-empty strings $C=\langle C_L;C_R \rangle\in \PSigma^+ \times \PSigma^+$ such that $C_L\cdot C_R$ is a tree. 
The \emph{composition} of contexts $C,D$ results in a context $C\star D := \langle C_L\cdot D_L; D_R\cdot C_R \rangle$.
Moreover, the composition of a context $C$ and a forest $H$ results in a tree $C \star H := C_L\cdot H \cdot C_R$.
For any \emph{decomposition} of a forest $F$ into disjoint pieces, one can recover $F$ using the concatenation and composition operations.

We define the \emph{depth} of a context $C = \langle C_L;C_R \rangle$ to be the number nodes of $C_L\cdot C_R$ with the opening parenthesis in $C_L$ and the closing parenthsis in $C_R$.
Note that the depth of the context $C\star D$ is equal to the sum of the depths of $C$ and $D$.

The following notion formalizes the concept of a matching between pieces of $F$ and~$G$.

\begin{definition}\label{def:matching}
    For two forests $F$ and $G$ and a fixed threshold $k\in \Zp$, a \emph{piece matching} between $F$ and $G$ is a set of pairs $\M \subseteq \Pc(F)\times \Pc(G)$ such that:
    \begin{itemize}
        \item across all pairs $(f,g)\in \M$, the pieces $f\in \Pc(F)$ are pairwise disjoint, and
        \item there exists a forest alignment $\A \in \ta(F,G)$ of width at most $2k$ (i.e., $\A \in \bta{k}(F,G)$) that matches $f$ to $g$ perfectly for every $(f,g)\in \M$.
    \end{itemize}
\end{definition}

The kernelization algorithm behind \cref{thm:prev_kernel} repeatedly identifies a piece matching $\M$ of size $|\M|=\Oh(k)$ covering $\Omega(n)$ vertices of $F$ and replaces each pair of matching pieces $(f,g)\in \M$ with a pair of ``equivalent'' pieces $(f',g')$ of size $\Oh(k^4)$. 
After $\Oh(\log n)$ steps, this yields forests of size $\Oh(k^5)$. 
Our strategy relies on the following new result:

\begin{restatable}{theorem}{thmdecomposition}\label{thm:decomposition}
    There exists a linear-time algorithm that, given forests $F,G\in \FSigma$ and a threshold $k\in \Zz$, either certifies that $\ted(F,G)> k$ or constructs a size-$\Oh(k)$ piece matching $\M$ between $F$ and $G$ that leaves $\Oh(k^4)$ unmatched characters.
\end{restatable}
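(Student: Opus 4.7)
The plan is to work in the bounded-width regime of Observation~\ref{obs:good_is_good}: whenever $\ted^w(F,G) \le k$, every cost-minimising forest alignment lies in $\bta{k}(F,G)$ and performs at most $2k$ non-diagonal edges (each contributing $\ge \tfrac12$ to the $\PW$-cost). The proof then splits into a combinatorial part showing that such an alignment induces a size-$\Oh(k)$ piece matching with only $\Oh(k)$ unmatched characters, and an algorithmic part that constructs such a matching in linear time (up to an $\Oh(k^4)$ slack) without knowing the alignment.

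For the combinatorial part, fix a hypothetical $\A\in\bta{k}(F,G)$ of cost $\le k$. Removing the $\Oh(k)$ positions of $F$ incident to non-diagonal edges of $\A$ cuts the Euler tour into $\Oh(k)$ maximal matched runs $F\fragmentco{f}{f'}\ponto{\A} G\fragmentco{g}{g'}$. A matched run need not itself be a valid piece in the sense of Section~\ref{sec:our_kernel}, because some parentheses in $F\fragmentco{f}{f'}$ may have their partners outside the run. Using the consistency condition of Definition~\ref{def:ta}, such boundary parentheses split into those whose partners lie in another matched run and those whose partners are edited; a charging argument bounds the total number of boundary parentheses to trim by $\Oh(k)$. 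After trimming, the runs can be regrouped under their nearest shared ancestors in $F$ into $\Oh(k)$ pieces, each a subforest or a context in the sense defined in Section~\ref{sec:our_kernel}.

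Algorithmically, since $\A$ is unknown, I would build in linear time a deterministic $\LCE$ (longest common extension) structure on the concatenation of $F$ and $G$ separated by a sentinel, and scan $F$ from left to right within the $2k$-diagonal band of $G$. At the current position $f$, I would query the lengths $\ell_g = \LCE(F\fragmentco{f}{|F|}, G\fragmentco{g}{|G|})$ for every $g$ with $|f-g|\le 2k$, pick the offset $g$ maximising $\ell_g$, and, if the winning $\ell_g$ exceeds a threshold $\Theta(k^3)$, clip the matched fragment down to the largest contained piece (subforest or context) and add it to $\M$, advancing $f$ past the match. Otherwise, I mark $F[f]$ as unmatched and advance $f$ by one. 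The algorithm aborts with ``$\ted^w(F,G)>k$'' whenever $|\M|$ exceeds a fixed multiple of $k$ or the unmatched count exceeds a fixed multiple of $k^4$.

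The main obstacle will be the charging argument linking the algorithmic execution to the hypothetical $\A$: every short-$\ell$ step must be attributed either to one of the $\Oh(k)$ edits of $\A$ within the $2k$-band or to a structural boundary event localised within $\Oh(k^3)$ characters of such an edit, yielding the promised $\Oh(k)\cdot\Oh(k^3)=\Oh(k^4)$ unmatched bound. A secondary subtlety is the piece-clipping step: the characters lost when turning a matched fragment into a valid piece must also be charged against edits of $\A$, which should follow from the local application of the same combinatorial trimming analysis around each matched fragment. The exact thresholds $\Theta(k^3)$ and constants in the abort conditions are tuned precisely so that a feasible $\A$ always survives the abort tests, while an infeasible instance (where $\ted(F,G)>k$) necessarily triggers one of them.
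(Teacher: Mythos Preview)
Your greedy LCE scheme has a genuine gap at the clipping step: a long string match in the Euler tours need not contain any long piece. Take $k=1$, $F=\op_a^{m+1}\cl_a^{m+1}$, and $G=\op_a^{m}\cl_a^{m}$ with $m$ large; then $\ted(F,G)=1\le k$. At $f=0$ the maximum LCE over all offsets in $[-2k\dd 2k]$ is $m$ (attained at offset $0$), so you would try to clip $F[0\dd m)=\op_a^m$. This fragment contains no nonempty balanced substring, and a single fragment cannot be clipped to a context (which by definition is a \emph{pair} of fragments). Whether you then advance past the empty piece or past the whole LCE, you either produce $\Theta(m)$ tiny pieces or leave $\Theta(m)$ characters of $F$ unmatched, against the required $|\M|=\Oh(k)$ and $\Oh(k^4)=\Oh(1)$ unmatched characters. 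The underlying issue is that your scan treats $F$ and $G$ as strings and defers structural validity to a post-hoc clipping, whereas pieces are a forest-level notion. A secondary gap is the requirement in \cref{def:matching} that a \emph{single} forest alignment of width $\le 2k$ realise all pairs in $\M$: you choose offsets independently per step, so images in $G$ of consecutive matches may overlap by up to $4k$, and even when they do not you give no argument that the gaps can be completed to a consistent forest alignment.

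The paper avoids both issues by working the other way around. It first applies \cref{lem:decomp} to decompose $F$ into $\Oh(|F|/k^3)$ pieces of size at most $k^3$, then runs a dynamic program (\cref{lem:dp}) over this decomposition that matches whole pieces to fragments of $G$ and minimises the number of unmatched pieces of $\DC$ plus unmatched fragments of $G$; by \cref{lem:good_matching_exists} this cost is at most $6k$ whenever $\ted(F,G)\le k$, so at most $6k$ pieces of size $\le k^3$ are left unmatched. Finally, a merging pass (\cref{lem:merge}) collapses adjacent clean matched pieces to bring $|\M|$ down to $\Oh(k)$. Because every candidate is already a legitimate piece and the DP's recursion mirrors the decomposition hierarchy (with a balancedness check in the context case), the output is automatically realisable by one bounded-width forest alignment, and no clipping is ever needed.
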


The proof of \cref{thm:decomposition}, presented in \cref{app:decomposition}, reuses a subroutine of \cite{DGHKS23} to construct a decomposition $\DC\subseteq \Pc(F)$ of $F$ into $\Oh(n/k^3)$ pieces of size $\Oh(k^3)$ each.
The next step is to build a piece matching $\M \subseteq \DC\times \Pc(G)$ that leaves at most $k$ pieces of $\DC$ unmatched.
We modify a dynamic-programming procedure from~\cite{DGHKS23} so that, additionally, the unmatched characters of $G$ form $\Oh(k)$ fragments.
As a result, even though the obtained matching is of size $|\M|=\Oh(n/k^3)$, it is possible to reduce its size to $\Oh(k)$.
For this, it suffices to repeatedly identify pairs of adjacent pieces $f,f'\in \Pc(F)$ matched to adjacent pieces $g,g'\in \Pc(G)$, and then replace these pieces with their unions $f\cup f'$ and $g\cup g'$ (as formalized in \cref{def:adjacent}, two disjoint pieces are adjacent if their union, consisting of the characters contained in at least one of these pieces, can be interpreted as a piece).

\subparagraph*{Periodic Blocks and Red Characters}
The main ingredient of our kernelization algorithm is a procedure that replaces a pair of matching pieces $(f,g)\in \M$ with a pair of ``equivalent'' smaller pieces $(f',g')$.
Unlike~\cite{DGHKS23}, where the goal was to reduce the piece size to $\Oh(k^4)$, we aim to identify $\Oh(k^2)$ disjoint free blocks with $\Oh(k^3)$ non-free nodes within the replacement piece $f'$.
Unfortunately, free blocks lack a canonical construction, and it would be tedious to maintain a specific selection while the forests change. 
Instead, we use the following notions:

\begin{definition}\label{def:bf}
    For a fixed threshold $k\in \Zp$, we say that a string $S\in \PSigma^*$ forms a \emph{periodic block} if it satisfies the following properties:
        \begin{itemize}
        \item $|S| \ge 42k$, that is, the fragment is of length at least $42k$, and
        \item $S$ has a string period of length at most $4k$ with equally many opening and closing parentheses.
    \end{itemize}
    For a string $T\in \PSigma^*$, we denote by $\Bk{T}$ the set of fragments of $T$ that are periodic blocks.
    For a context $C=\context{C_L}{C_R}$, we denote by $\Bk{C}$ the disjoint union of $\Bk{C_L}$ and $\Bk{C_R}$.
\end{definition}

We partition the characters of $T\in \PSigma^*$ into \emph{black} and \emph{red} based on the family $\Bk{T}$.
\begin{definition}\label{def:redblack}
    A character $T[i]$ of a string $T\in \PSigma^*$ is \emph{black} if there exists a periodic block $T[l\dd r)\in \Bk{T}$ such that $i\in [l+5k\dd r-5k)$. The remaining characters are \emph{red}.
    We denote by $\black{T}$ and $\red{T}$ the sets of black and red characters of $T$. 

    For a context $C=\context{C_L}{C_R}$, the sets $\black{C}=\black{C_L}\sqcup \black{C_R}$ and $\red{C}=\red{C_L}\sqcup \red{C_R}$ are defined as disjoint unions.
\end{definition}

\subparagraph*{Piece Reduction}
The following definitions in~\cite{DGHKS23} formalize the concept of equivalent pieces.

\begin{definition}[{\cite[Definition 3.4]{DGHKS23}}]
    For a threshold $k\in \Zz$ and a weight function $w$, forests $P,P'$ are called \emph{$\ted_{\le k}^w$-equivalent}
    if 
    \[\ted_{\le k}^w(F,G) = \ted_{\le k}^w(F[0 \dd l_F) \cdot P' \cdot F[r_F\dd |F|),G[0 \dd l_G) \cdot P' \cdot G[r_G\dd |G|))\]
    holds for all forests $F$ and $G$ with matching pieces $F[l_F\dd r_F)=P=G[l_G\dd r_G)$ satisfying $|l_F-r_G|\le 2k$.
\end{definition}

\begin{definition}[{\cite[Definition 3.9]{DGHKS23}}]
    For a threshold $k\in \Zz$ and a weight function $w$, contexts $P=\context{P_L}{P_R}$ and $P'=\context{P'_L}{P'_R}$ are called \emph{$\ted_{\le k}^w$-equivalent}
    if 
    \begin{multline*}\ted_{\le k}^w(F,G) = \ted_{\le k}^w(F[0 \dd l_F) \cdot P'_L \cdot F[l'_F \dd r'_F) \cdot P'_R \cdot F[r_F\dd |F|),\\ G[0 \dd l_G) \cdot P'_L \cdot G[l'_G\dd r'_G) \cdot P'_R \cdot G[r_G \dd |G|))\end{multline*}
    holds for all forests $F$ and $G$ with matching pieces $\context{F[l_F\dd l'_F)}{F[r'_F\dd r_F)}=P=\context{G[l_G\dd l'_G)}{\allowbreak G[r'_G\dd r_G)}$ satisfying $|l_F-l_G|\le 2k$ and  $|r_F-r_G|\le 2k$.
\end{definition}
In \cref{app:piece_reduction}, we slightly modify the arguments of~\cite{DGHKS23} to prove the following results:

\begin{restatable}[{see \cite[Lemma 3.17]{DGHKS23}}]{lemma}{lemforestreduction}\label{lem:alg:forest_reduction}
    There is a linear-time algorithm that, given a forest $P$ and a threshold $k\in \Zp$, computes a forest $P'$ with $|P'| \le |P|$ and $|\red{P'}| \le 158k^2$ such that $P$ and $P'$ are $\ted_{\le k}^w$-equivalent for every normalized quasimetric weight function $w$.
\end{restatable}

\begin{restatable}[{see \cite[Lemma 3.18]{DGHKS23}}]{lemma}{lemcontextreduction}\label{lem:alg:context_reduction}
    There is a linear-time algorithm that, given a context $P$ and a threshold $k\in \Zp$, computes a context $P'$ with $|P'| \le |P|$ and $|\red{P'}| \le 1152k^3$ such that $P$ and $P'$ are $\ted_{\le k}^w$-equivalent for every normalized quasimetric weight function $w$.
\end{restatable}

\subparagraph{Complete Kernelization Algorithm}
In \cref{app:kernel}, we combine the above ingredients to formally prove \cref{thm:our_kernel}.
Our procedure first applies \cref{thm:decomposition}.
Then, for every pair of matched pieces $(f,g)\in \M$, we use \cref{lem:alg:forest_reduction} or~\ref{lem:alg:context_reduction} (depending on piece type) to obtain an equivalent piece $P'$ with $\Oh(k^3)$ red characters.
The remaining (black) characters in $P'$ can be traced back to $\Oh(k^2)$ periodic blocks
and each periodic block within $P'$ yields a free block in the output forest $F'$ that covers the underlying black characters.

\section{Summary}\label{sec:conclusions}
\thmmain*
\begin{proof}
First, suppose that the task is to compute $\ted^w_{\le k}(F,G)$ for a given threshold $k$.
In this case, we use the algorithm of \cref{thm:our_kernel}, resulting in a pair of forests $F',G'$ of size $\Oh(k^5)$ such that $\ted^{w}_{\le k}(F',G')=\ted^{w}_{\le k}(F,G)$, as well as a collection of $\Oh(k^3)$ free blocks in $F'$ with $\Oh(k^4)$ non-free characters.
Based on this, the algorithm of \cref{thm:optimization} computes $\ted^{w}_{\le k}(F',G')=\ted^{w}_{\le k}(F,G)$ in $\Oh(k^5\cdot \log k^5 + k^4\cdot k^2 \log k^5 + k^3\cdot k^3 \log k^5)=\Oh(k^6 \log k)$ time.
Including the $\Oh(n)$ running time of \cref{thm:our_kernel}, we get $\Oh(n+k^6 \log k)$ total time.

In the absence of a given threshold, we consider a geometric sequence of thresholds $(d_i)_{i\in \Zz}$, with $d_i = 2^i\cdot \lceil{(n/\log n)^{1/6}}\rceil$, and we compute $\ted^w_{\le d_i}(F,G)$ for subsequent $i\in \Zz$ until $\ted^w_{\le d_j}(F,G)\le d_j$ holds for some $j\in \Zz$, which indicates $\ted^w(F,G)=\ted^w_{\le d_j}(F,G)$.

Since $d_0 = \Oh((n/\log n)^{1/6})$, the initial iteration costs $\Oh(n+d_0^6 \log d_0) = \Oh(n)$ time. 
Consequently, if  $k\le d_0$, then the whole algorithm runs in $\Oh(n)$ time.

Due to $d_i \ge d_0 \ge (n/\log n)^{1/6}$, the running time of the $i$th iteration iteration is $\Oh(d_i^6 \log d_i)$.
This sequence grows geometrically, so the total running time of the algorithm is dominated by the running time of the last iteration, which is $\Oh(d_j^6 \log d_j)$.
If $k > d_0$, then $j>0$ and, since the algorithm has not terminated one iteration earlier, $d_j = 2d_{j-1} < 2k$.
Consequently, the overall running time is $\Oh(k^6 \log k)$ when $k>d_0$.
\end{proof}

\bibliography{ted}

\begin{thebibliography}{10}

\bibitem{AJ21}
Shyan Akmal and Ce~Jin.
\newblock Faster algorithms for bounded tree edit distance.
\newblock In Nikhil Bansal, Emanuela Merelli, and James Worrell, editors, {\em 48th International Colloquium on Automata, Languages, and Programming, {ICALP} 2021, July 12-16, 2021, Glasgow, Scotland (Virtual Conference)}, volume 198 of {\em LIPIcs}, pages 12:1--12:15. Schloss Dagstuhl - Leibniz-Zentrum f{\"{u}}r Informatik, 2021.
\newblock \href {https://doi.org/10.4230/LIPIcs.ICALP.2021.12} {\path{doi:10.4230/LIPIcs.ICALP.2021.12}}.

\bibitem{A10}
Tatsuya Akutsu.
\newblock Tree edit distance problems: Algorithms and applications to bioinformatics.
\newblock {\em {IEICE} Trans. Inf. Syst.}, 93-D(2):208--218, 2010.
\newblock \href {https://doi.org/10.1587/TRANSINF.E93.D.208} {\path{doi:10.1587/TRANSINF.E93.D.208}}.

\bibitem{AFT10}
Tatsuya Akutsu, Daiji Fukagawa, and Atsuhiro Takasu.
\newblock Approximating tree edit distance through string edit distance.
\newblock {\em Algorithmica}, 57(2):325--348, 2010.
\newblock \href {https://doi.org/10.1007/s00453-008-9213-z} {\path{doi:10.1007/s00453-008-9213-z}}.

\bibitem{AN94}
Arne Andersson and Stefan Nilsson.
\newblock A new efficient radix sort.
\newblock In {\em 35th Annual Symposium on Foundations of Computer Science, Santa Fe, New Mexico, USA, 20-22 November 1994}, pages 714--721. {IEEE} Computer Society, 1994.
\newblock \href {https://doi.org/10.1109/SFCS.1994.365721} {\path{doi:10.1109/SFCS.1994.365721}}.

\bibitem{BI18}
Arturs Backurs and Piotr Indyk.
\newblock Edit distance cannot be computed in strongly subquadratic time (unless {SETH} is false).
\newblock {\em {SIAM} J. Comput.}, 47(3):1087--1097, 2018.
\newblock \href {https://doi.org/10.1137/15M1053128} {\path{doi:10.1137/15M1053128}}.

\bibitem{BIINTT17}
Hideo Bannai, Tomohiro I, Shunsuke Inenaga, Yuto Nakashima, Masayuki Takeda, and Kazuya Tsuruta.
\newblock The ``runs'' theorem.
\newblock {\em {SIAM} J. Comput.}, 46(5):1501--1514, 2017.
\newblock \href {https://doi.org/10.1137/15m1011032} {\path{doi:10.1137/15m1011032}}.

\bibitem{BK99}
John Bellando and Ravi Kothari.
\newblock Region-based modeling and tree edit distance as a basis for gesture recognition.
\newblock In {\em 1oth International Conference on Image Analysis and Processing {(ICIAP} 1999), 27-29 September 1999, Venice, Italy}, pages 698--703. {IEEE} Computer Society, 1999.
\newblock \href {https://doi.org/10.1109/ICIAP.1999.797676} {\path{doi:10.1109/ICIAP.1999.797676}}.

\bibitem{B05}
Philip Bille.
\newblock A survey on tree edit distance and related problems.
\newblock {\em Theor. Comput. Sci.}, 337(1–3):217–239, jun 2005.
\newblock \href {https://doi.org/10.1016/j.tcs.2004.12.030} {\path{doi:10.1016/j.tcs.2004.12.030}}.

\bibitem{BGHS19}
Mahdi Boroujeni, Mohammad Ghodsi, MohammadTaghi Hajiaghayi, and Saeed Seddighin.
\newblock 1+\emph{{\(\epsilon\)}} approximation of tree edit distance in quadratic time.
\newblock In Moses Charikar and Edith Cohen, editors, {\em Proceedings of the 51st Annual {ACM} {SIGACT} Symposium on Theory of Computing, {STOC} 2019, Phoenix, AZ, USA, June 23-26, 2019}, pages 709--720. {ACM}, 2019.
\newblock \href {https://doi.org/10.1145/3313276.3316388} {\path{doi:10.1145/3313276.3316388}}.

\bibitem{BGMW20}
Karl Bringmann, Pawe\l{} Gawrychowski, Shay Mozes, and Oren Weimann.
\newblock Tree edit distance cannot be computed in strongly subcubic time (unless apsp can).
\newblock {\em ACM Trans. Algorithms}, 16(4), jul 2020.
\newblock \href {https://doi.org/10.1145/3381878} {\path{doi:10.1145/3381878}}.

\bibitem{BGK03}
Peter Buneman, Martin Grohe, and Christoph Koch.
\newblock Path queries on compressed xml.
\newblock In {\em Proceedings of the 29th International Conference on Very Large Data Bases - Volume 29}, VLDB '03, page 141–152. VLDB Endowment, 2003.
\newblock \href {https://doi.org/10.1016/b978-012722442-8/50021-5} {\path{doi:10.1016/b978-012722442-8/50021-5}}.

\bibitem{CKW23}
Alejandro Cassis, Tomasz Kociumaka, and Philip Wellnitz.
\newblock Optimal algorithms for bounded weighted edit distance.
\newblock In {\em 64th {IEEE} Annual Symposium on Foundations of Computer Science, {FOCS} 2023, Santa Cruz, CA, USA, November 6-9, 2023}, pages 2177--2187. {IEEE}, 2023.
\newblock \href {https://doi.org/10.1109/FOCS57990.2023.00135} {\path{doi:10.1109/FOCS57990.2023.00135}}.

\bibitem{CGMW22}
Panagiotis Charalampopoulos, Pawel Gawrychowski, Shay Mozes, and Oren Weimann.
\newblock On the hardness of computing the edit distance of shallow trees.
\newblock In Diego Arroyuelo and Barbara Poblete, editors, {\em String Processing and Information Retrieval - 29th International Symposium, {SPIRE} 2022, Concepci{\'{o}}n, Chile, November 8-10, 2022, Proceedings}, volume 13617 of {\em Lecture Notes in Computer Science}, pages 290--302. Springer, 2022.
\newblock \href {https://doi.org/10.1007/978-3-031-20643-6_21} {\path{doi:10.1007/978-3-031-20643-6_21}}.

\bibitem{Cha99}
Sudarshan~S. Chawathe.
\newblock Comparing hierarchical data in external memory.
\newblock In Malcolm~P. Atkinson, Maria~E. Orlowska, Patrick Valduriez, Stanley~B. Zdonik, and Michael~L. Brodie, editors, {\em VLDB'99, Proceedings of 25th International Conference on Very Large Data Bases, September 7-10, 1999, Edinburgh, Scotland, {UK}}, pages 90--101. Morgan Kaufmann, 1999.
\newblock URL: \url{http://www.vldb.org/conf/1999/P8.pdf}.

\bibitem{CAM02}
Gregory Cobena, Serge Abiteboul, and Am{\'{e}}lie Marian.
\newblock Detecting changes in {XML} documents.
\newblock In Rakesh Agrawal and Klaus~R. Dittrich, editors, {\em Proceedings of the 18th International Conference on Data Engineering, San Jose, CA, USA, February 26 - March 1, 2002}, pages 41--52. {IEEE} Computer Society, 2002.
\newblock \href {https://doi.org/10.1109/ICDE.2002.994696} {\path{doi:10.1109/ICDE.2002.994696}}.

\bibitem{DGHKS23}
Debarati Das, Jacob Gilbert, MohammadTaghi Hajiaghayi, Tomasz Kociumaka, and Barna Saha.
\newblock Weighted edit distance computation: Strings, trees, and dyck.
\newblock In Barna Saha and Rocco~A. Servedio, editors, {\em Proceedings of the 55th Annual {ACM} Symposium on Theory of Computing, {STOC} 2023, Orlando, FL, USA, June 20-23, 2023}, pages 377--390. {ACM}, 2023.
\newblock \href {https://doi.org/10.1145/3564246.3585178} {\path{doi:10.1145/3564246.3585178}}.

\bibitem{DGHKSS22}
Debarati Das, Jacob Gilbert, MohammadTaghi Hajiaghayi, Tomasz Kociumaka, Barna Saha, and Hamed Saleh.
\newblock $\tilde{O}(n+poly(k))$-time algorithm for bounded tree edit distance.
\newblock In {\em 63rd {IEEE} Annual Symposium on Foundations of Computer Science, {FOCS} 2022}. {IEEE}, 2022.
\newblock \href {https://arxiv.org/abs/2209.07524} {\path{arXiv:2209.07524}}.

\bibitem{RGSF04}
Davi de~Castro~Reis, Paulo~Braz Golgher, Altigran~Soares da~Silva, and Alberto H.~F. Laender.
\newblock Automatic web news extraction using tree edit distance.
\newblock In Stuart~I. Feldman, Mike Uretsky, Marc Najork, and Craig~E. Wills, editors, {\em Proceedings of the 13th international conference on World Wide Web, {WWW} 2004, New York, NY, USA, May 17-20, 2004}, pages 502--511. {ACM}, 2004.
\newblock \href {https://doi.org/10.1145/988672.988740} {\path{doi:10.1145/988672.988740}}.

\bibitem{DMRW10}
Erik~D. Demaine, Shay Mozes, Benjamin Rossman, and Oren Weimann.
\newblock An optimal decomposition algorithm for tree edit distance.
\newblock {\em ACM Trans. Algorithms}, 6(1), dec 2010.
\newblock \href {https://doi.org/10.1145/1644015.1644017} {\path{doi:10.1145/1644015.1644017}}.

\bibitem{D23}
Anita D{\"{u}}rr.
\newblock Improved bounds for rectangular monotone min-plus product and applications.
\newblock {\em Inf. Process. Lett.}, 181:106358, 2023.
\newblock \href {https://doi.org/10.1016/J.IPL.2023.106358} {\path{doi:10.1016/J.IPL.2023.106358}}.

\bibitem{FLZ24}
Dayi Fan, Rubao Lee, and Xiaodong Zhang.
\newblock {X-TED:} massive parallelization of tree edit distance.
\newblock {\em Proc. {VLDB} Endow.}, 17(7):1683--1696, 2024.
\newblock URL: \url{https://www.vldb.org/pvldb/vol17/p1683-fan.pdf}, \href {https://doi.org/10.14778/3654621.3654634} {\path{doi:10.14778/3654621.3654634}}.

\bibitem{FLMM09}
Paolo Ferragina, Fabrizio Luccio, Giovanni Manzini, and S.~Muthukrishnan.
\newblock Compressing and indexing labeled trees, with applications.
\newblock {\em J. ACM}, 57(1), nov 2009.
\newblock \href {https://doi.org/10.1145/1613676.1613680} {\path{doi:10.1145/1613676.1613680}}.

\bibitem{FGKKPS24}
Dvir Fried, Shay Golan, Tomasz Kociumaka, Tsvi Kopelowitz, Ely Porat, and Tatiana Starikovskaya.
\newblock An improved algorithm for the \emph{k}-dyck edit distance problem.
\newblock {\em {ACM} Trans. Algorithms}, 20(3):26, 2024.
\newblock \href {https://doi.org/10.1145/3627539} {\path{doi:10.1145/3627539}}.

\bibitem{G97}
Dan Gusfield.
\newblock {\em Algorithms on Strings, Trees, and Sequences: Computer Science and Computational Biology}.
\newblock Cambridge University Press, USA, 1997.
\newblock \href {https://doi.org/10.1017/cbo9780511574931} {\path{doi:10.1017/cbo9780511574931}}.

\bibitem{10.1137/0213024}
Dov Harel and Robert~Endre Tarjan.
\newblock Fast algorithms for finding nearest common ancestors.
\newblock {\em SIAM J. Comput.}, 13(2):338–355, may 1984.
\newblock \href {https://doi.org/10.1137/0213024} {\path{doi:10.1137/0213024}}.

\bibitem{HTGK03}
Matthias H{\"{o}}chsmann, Thomas T{\"{o}}ller, Robert Giegerich, and Stefan Kurtz.
\newblock Local similarity in {RNA} secondary structures.
\newblock In {\em 2nd {IEEE} Computer Society Bioinformatics Conference, {CSB} 2003, Stanford, CA, USA, August 11-14, 2003}, pages 159--168. {IEEE} Computer Society, 2003.
\newblock \href {https://doi.org/10.1109/CSB.2003.1227315} {\path{doi:10.1109/CSB.2003.1227315}}.

\bibitem{K98}
Philip~N. Klein.
\newblock Computing the edit-distance between unrooted ordered trees.
\newblock In {\em Proceedings of the 6th Annual European Symposium on Algorithms}, ESA '98, page 91–102, Berlin, Heidelberg, 1998. Springer-Verlag.
\newblock \href {https://doi.org/10.1007/3-540-68530-8_8} {\path{doi:10.1007/3-540-68530-8_8}}.

\bibitem{KSK01}
Philip~N. Klein, Thomas~B. Sebastian, and Benjamin~B. Kimia.
\newblock Shape matching using edit-distance: an implementation.
\newblock In S.~Rao Kosaraju, editor, {\em Proceedings of the Twelfth Annual Symposium on Discrete Algorithms, January 7-9, 2001, Washington, DC, {USA}}, pages 781--790. {ACM/SIAM}, 2001.
\newblock URL: \url{http://dl.acm.org/citation.cfm?id=365411.365779}.

\bibitem{KTSK00}
Philip~N. Klein, Srikanta Tirthapura, Daniel Sharvit, and Benjamin~B. Kimia.
\newblock A tree-edit-distance algorithm for comparing simple, closed shapes.
\newblock In David~B. Shmoys, editor, {\em Proceedings of the Eleventh Annual {ACM-SIAM} Symposium on Discrete Algorithms, January 9-11, 2000, San Francisco, CA, {USA}}, pages 696--704. {ACM/SIAM}, 2000.
\newblock URL: \url{http://dl.acm.org/citation.cfm?id=338219.338628}.

\bibitem{KociumakaPhD}
Tomasz Kociumaka.
\newblock {\em Efficient Data Structures for Internal Queries in Texts}.
\newblock PhD thesis, University of Warsaw, 2018.
\newblock URL: \url{https://mimuw.edu.pl/~kociumaka/files/phd.pdf}.

\bibitem{LV88}
Gad~M. Landau and Uzi Vishkin.
\newblock Fast string matching with $k$ differences.
\newblock {\em Journal of Computer and System Sciences}, 37(1):63--78, 1988.
\newblock \href {https://doi.org/10.1016/0022-0000(88)90045-1} {\path{doi:10.1016/0022-0000(88)90045-1}}.

\bibitem{M21}
Xiao Mao.
\newblock Breaking the cubic barrier for (unweighted) tree edit distance.
\newblock In {\em 62nd {IEEE} Annual Symposium on Foundations of Computer Science, {FOCS} 2021, Denver, CO, USA, February 7-10, 2022}, pages 792--803. {IEEE}, 2021.
\newblock \href {https://doi.org/10.1109/FOCS52979.2021.00082} {\path{doi:10.1109/FOCS52979.2021.00082}}.

\bibitem{MR01}
J.~Ian Munro and Venkatesh Raman.
\newblock Succinct representation of balanced parentheses and static trees.
\newblock {\em {SIAM} J. Comput.}, 31(3):762--776, 2001.
\newblock \href {https://doi.org/10.1137/S0097539799364092} {\path{doi:10.1137/S0097539799364092}}.

\bibitem{Mye86}
Eugene~W. Myers.
\newblock An {O(ND)} difference algorithm and its variations.
\newblock {\em Algorithmica}, 1(2):251--266, 1986.
\newblock \href {https://doi.org/10.1007/BF01840446} {\path{doi:10.1007/BF01840446}}.

\bibitem{NS14}
Gonzalo Navarro and Kunihiko Sadakane.
\newblock Fully functional static and dynamic succinct trees.
\newblock {\em {ACM} Trans. Algorithms}, 10(3):16:1--16:39, 2014.
\newblock \href {https://doi.org/10.1145/2601073} {\path{doi:10.1145/2601073}}.

\bibitem{NPSVXY25}
Jakob Nogler, Adam Polak, Barna Saha, Virginia {Vassilevska Williams}, Yinzhan Xu, and Christopher Ye.
\newblock Faster weighted and unweighted tree edit distance and {APSP} equivalence.
\newblock {\em CoRR}, abs/2411.06502, 2024.
\newblock \href {https://arxiv.org/abs/2411.06502} {\path{arXiv:2411.06502}}, \href {https://doi.org/10.48550/ARXIV.2411.06502} {\path{doi:10.48550/ARXIV.2411.06502}}.

\bibitem{PT87}
Robert Paige and Robert~Endre Tarjan.
\newblock Three partition refinement algorithms.
\newblock {\em {SIAM} J. Comput.}, 16(6):973--989, 1987.
\newblock \href {https://doi.org/10.1137/0216062} {\path{doi:10.1137/0216062}}.

\bibitem{PA15}
Mateusz Pawlik and Nikolaus Augsten.
\newblock Efficient computation of the tree edit distance.
\newblock {\em {ACM} Trans. Database Syst.}, 40(1):3:1--3:40, 2015.
\newblock \href {https://doi.org/10.1145/2699485} {\path{doi:10.1145/2699485}}.

\bibitem{PA16}
Mateusz Pawlik and Nikolaus Augsten.
\newblock Tree edit distance: Robust and memory-efficient.
\newblock {\em Inf. Syst.}, 56:157--173, 2016.
\newblock \href {https://doi.org/10.1016/J.IS.2015.08.004} {\path{doi:10.1016/J.IS.2015.08.004}}.

\bibitem{SKK04}
Thomas~B. Sebastian, Philip~N. Klein, and Benjamin~B. Kimia.
\newblock Recognition of shapes by editing their shock graphs.
\newblock {\em {IEEE} Trans. Pattern Anal. Mach. Intell.}, 26(5):550--571, 2004.
\newblock \href {https://doi.org/10.1109/TPAMI.2004.1273924} {\path{doi:10.1109/TPAMI.2004.1273924}}.

\bibitem{SS22}
Masoud Seddighin and Saeed Seddighin.
\newblock $3+\epsilon$ approximation of tree edit distance in truly subquadratic time.
\newblock In {\em 13th Innovations in Theoretical Computer Science Conference, {ITCS} 2022}, volume 215, pages 115:1--115:22, 2022.
\newblock \href {https://doi.org/10.4230/LIPIcs.ITCS.2022.115} {\path{doi:10.4230/LIPIcs.ITCS.2022.115}}.

\bibitem{S77}
Stanley~M. Selkow.
\newblock The tree-to-tree editing problem.
\newblock {\em Information Processing Letters}, 6(6):184--186, 1977.
\newblock \href {https://doi.org/10.1016/0020-0190(77)90064-3} {\path{doi:10.1016/0020-0190(77)90064-3}}.

\bibitem{SZ90}
Bruce~A. Shapiro and Kaizhong Zhang.
\newblock Comparing multiple {RNA} secondary structures using tree comparisons.
\newblock {\em Comput. Appl. Biosci.}, 6(4):309--318, 1990.
\newblock \href {https://doi.org/10.1093/bioinformatics/6.4.309} {\path{doi:10.1093/bioinformatics/6.4.309}}.

\bibitem{SM20}
Raghavendra Sridharamurthy, Talha~Bin Masood, Adhitya Kamakshidasan, and Vijay Natarajan.
\newblock Edit distance between merge trees.
\newblock {\em {IEEE} Trans. Vis. Comput. Graph.}, 26(3):1518--1531, 2020.
\newblock \href {https://doi.org/10.1109/TVCG.2018.2873612} {\path{doi:10.1109/TVCG.2018.2873612}}.

\bibitem{T79}
Kuo-Chung Tai.
\newblock The tree-to-tree correction problem.
\newblock {\em J. ACM}, 26(3):422–433, jul 1979.
\newblock \href {https://doi.org/10.1145/322139.322143} {\path{doi:10.1145/322139.322143}}.

\bibitem{T05}
H\'{e}l\`{e}ne Touzet.
\newblock A linear tree edit distance algorithm for similar ordered trees.
\newblock In {\em Proceedings of the 16th Annual Conference on Combinatorial Pattern Matching}, CPM'05, page 334–345, Berlin, Heidelberg, 2005. Springer-Verlag.
\newblock \href {https://doi.org/10.1007/11496656_29} {\path{doi:10.1007/11496656_29}}.

\bibitem{Ukk85}
Esko Ukkonen.
\newblock Algorithms for approximate string matching.
\newblock {\em Information and Control}, 64(1):100--118, 1985.
\newblock International Conference on Foundations of Computation Theory.
\newblock URL: \url{https://www.sciencedirect.com/science/article/pii/S0019995885800462}, \href {https://doi.org/10.1016/S0019-9958(85)80046-2} {\path{doi:10.1016/S0019-9958(85)80046-2}}.

\bibitem{WDC03}
Yuan Wang, David~J. DeWitt, and Jin{-}yi Cai.
\newblock X-diff: An effective change detection algorithm for {XML} documents.
\newblock In Umeshwar Dayal, Krithi Ramamritham, and T.~M. Vijayaraman, editors, {\em Proceedings of the 19th International Conference on Data Engineering, March 5-8, 2003, Bangalore, India}, pages 519--530. {IEEE} Computer Society, 2003.
\newblock \href {https://doi.org/10.1109/ICDE.2003.1260818} {\path{doi:10.1109/ICDE.2003.1260818}}.

\bibitem{W95}
Michael~S. Waterman.
\newblock {\em Introduction to computational biology - maps, sequences, and genomes: interdisciplinary statistics}.
\newblock {CRC} Press, 1995.

\bibitem{W18}
R.~Ryan Williams.
\newblock Faster all-pairs shortest paths via circuit complexity.
\newblock {\em {SIAM} J. Comput.}, 47(5):1965--1985, 2018.
\newblock \href {https://doi.org/10.1137/15M1024524} {\path{doi:10.1137/15M1024524}}.

\bibitem{YHK14}
Yoshiyuki Yamamoto, Kouichi Hirata, and Tetsuji Kuboyama.
\newblock Tractable and intractable variations of unordered tree edit distance.
\newblock {\em Int. J. Found. Comput. Sci.}, 25(3):307--330, 2014.
\newblock \href {https://doi.org/10.1142/S0129054114500154} {\path{doi:10.1142/S0129054114500154}}.

\bibitem{YDCC13}
Xuchen Yao, Benjamin~Van Durme, Chris Callison{-}Burch, and Peter Clark.
\newblock Answer extraction as sequence tagging with tree edit distance.
\newblock In Lucy Vanderwende, Hal~Daum{\'{e}} III, and Katrin Kirchhoff, editors, {\em Human Language Technologies: Conference of the North American Chapter of the Association of Computational Linguistics, Proceedings, June 9-14, 2013, Westin Peachtree Plaza Hotel, Atlanta, Georgia, {USA}}, pages 858--867. The Association for Computational Linguistics, 2013.
\newblock URL: \url{https://aclanthology.org/N13-1106/}.

\bibitem{ZS89}
Kaizhong Zhang and Dennis~E. Shasha.
\newblock Simple fast algorithms for the editing distance between trees and related problems.
\newblock {\em {SIAM} J. Comput.}, 18(6):1245--1262, 1989.
\newblock \href {https://doi.org/10.1137/0218082} {\path{doi:10.1137/0218082}}.

\end{thebibliography}

\appendix

\section{\texorpdfstring{\boldmath $\Oh(n k^2 \log n)$}{O(nk² log n)}-Time Algorithm: Details}\label{app:our_aj}
In this section, we fill in the missing details from \cref{sec:our_aj}. 

\subsection{Klein's Algorithm}

Below, we explain why \cref{alg:Klein} can be implemented in $\Oh(n^3 \log n)$ time.
Recall that, for a node $u\in V_F$, we defined $\sz(u)=|F[o(u)\dd c(v)]|=c(v)-o(v)+1$.

\begin{definition}\label{def:heavy}
    We call a node $u$ of a forest $F$ \emph{heavy} if $\sz(u)\ge \sz(u')$ holds for every left sibling $u'$ of $u$ and $\sz(u)>\sz(u')$ holds for every right sibling $u'$ of $u$; otherwise, $u$ is \emph{light}.
    For a non-leaf node $v$, we denote the heavy child of $v$ by $\heavy_v$ (it is the child of $v$ with the largest subtree size; in case of a tie, the rightmost among the tied children).
    Moreover, we denote the heavy root of $F$ by $\heavy_\bot$, where $\bot$ can be interpreted as the virtual root of $F$.
\end{definition}

\lemklein*
\begin{proof}
    We create a mapping from the visited fragment to the vertices of $F$ as follows:
    For each visited fragment $F[l_F \dd r_F)$, we map it to the lowest node $v$ that encloses $F[l_F \dd r_F)$, that is $o(v) < l_F \le r_F \le c(v)$.
    If no such node exists, we set $v$ to be the virtual root $\bot$ of $F$.
    In other words, $v$ is the lowest common ancestor of $\node{F}{l_F-1}$ and $\node{F}{r_F}$, with the convention that it is equal to $\bot$ if $l_F=0$, $r_F=|F|$, or $\node{F}{l_F-1}$ and $\node{F}{r_F}$ have no common ancestor.

    For each non-leaf node $v$ (including the virtual root $\bot$), consider the following sequence of fragments starting from $F(o(v)\dd c(v))$ (equal to $F$ if $v=\bot$) and ending at $F(o(\heavy_v) \dd c(\heavy_v)]$:
    at each step, the fragment $F[l_F \dd r_F)$ shrinks by one character, either $F[l_F]$ or $F[r_F-1]$ depending on whether \cref{alg:Klein} uses the branch of Line~\ref{alg:klein:upd_left}~or~\ref{alg:klein:upd_right}.

    \begin{claim}
        The sequence constructed for $v$ is well-defined, that is, $F(o(\heavy_v) \dd c(\heavy_v)]$ can indeed be obtained by repeatedly applying the described successor rule.
    \end{claim}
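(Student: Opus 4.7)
The plan is to establish an invariant on $(l_F, r_F)$ that is preserved by the successor rule and that forces the sequence to reach $F(o(\heavy_v)\dd c(\heavy_v)]$. Concretely, I would prove: at every fragment in the sequence strictly before the target, we have $l_F \le o(\heavy_v)$ and $r_F \ge c(\heavy_v)+1$. The initial fragment $F(o(v)\dd c(v))$ satisfies this, since $\heavy_v$ is a child of $v$ and hence $o(v)+1 \le o(\heavy_v)$ and $c(\heavy_v)+1 \le c(v)$. Once both inequalities become equalities, one more application of the rule (with $u_F = v_F = \heavy_v$, so the tie in $\sz$ sends the rule left) turns $l_F$ into $o(\heavy_v)+1$, producing the target.

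The inductive step is a case split on which boundaries are tight. The two critical cases are the asymmetric ones. When $l_F = o(\heavy_v)$ and $r_F > c(\heavy_v)+1$, we have $u_F = \heavy_v$, and $v_F$ sits inside or straddles the subtree of some right sibling $c_j$ of $\heavy_v$; the strict inequality $\sz(c_j) < \sz(\heavy_v)$ guaranteed by the rightmost tie-breaking in \cref{def:heavy} gives $\sz(v_F) \le \sz(c_j) < \sz(u_F)$, so the rule shrinks from the right and preserves $r_F \ge c(\heavy_v)+1$. Symmetrically, when $l_F < o(\heavy_v)$ and $r_F = c(\heavy_v)+1$, the node $u_F$ sits in (or straddles) the subtree of some left sibling $c_i$, and the non-strict inequality $\sz(c_i) \le \sz(\heavy_v) = \sz(v_F)$ forces a left shrink. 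The case in which both inequalities in the invariant are strict is trivial, as either direction preserves them.

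The main obstacle will be the subtlety that $u_F$ or $v_F$ may straddle the fragment rather than be contained, in which case the rule branches on containment before any size comparison. I would handle this by observing that the rule's structure aligns with the desired direction: a non-contained $u_F$ forces a left shrink, and the invariant guarantees this is only possible when $l_F < o(\heavy_v)$ (because $l_F = o(\heavy_v)$ would yield the contained $u_F = \heavy_v$), so $l_F+1 \le o(\heavy_v)$ and the invariant survives. The symmetric reasoning applies to a non-contained $v_F$, which forces a right shrink from a position with $r_F > c(\heavy_v)+1$.

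Termination is immediate since each step removes one character from the fragment; combined with the invariant, this forces the sequence to eventually settle at $l_F = o(\heavy_v)+1$ and $r_F = c(\heavy_v)+1$, which is exactly $F(o(\heavy_v)\dd c(\heavy_v)]$.
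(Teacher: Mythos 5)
Your proposal is correct and follows essentially the same route as the paper: you maintain the invariant that every fragment before the target contains $F[o(\heavy_v)\dd c(\heavy_v)]$, verify it on the critical boundary cases $l_F=o(\heavy_v)$ and $r_F=c(\heavy_v)+1$ using the strict/non-strict size dominance of the heavy child over its right/left siblings (including the straddling subcases), and finish with one last left step from $F[o(\heavy_v)\dd c(\heavy_v)]$ where $u_F=v_F=\heavy_v$. No gaps.
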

    \begin{claimproof}
        We prove a stronger statement that $F[o(\heavy_v) \dd c(\heavy_v)]$ belongs to the sequence constructed for $v$. 
        The branch of Line~\ref{alg:klein:upd_left} applies to this fragment (with $u_F=v_F=\heavy_v\in V_{F[l_F\dd r_F)}$), and thus the next element in the sequence is $F(o(\heavy_v) \dd c(\heavy_v)]$.

        Consider a fragment $F[l_F \dd r_F)$ with $[o(\heavy_v) \dd c(\heavy_v)]\subsetneq [l_F \dd r_F) \subseteq (o(v)\dd c(v))$. It suffices to prove that its successor still contains $[o(\heavy_v) \dd c(\heavy_v)]$.
        This condition may only be violated if $l_F=o(\heavy_v)$ or $r_F-1=c(\heavy_v)$, so we consider these two cases.
        \begin{description}
            \item[$l_F=o(\heavy_v)$.] In this case, $u_F = \heavy_v \in V_{F[l_F\dd r_F)}$ whereas $v_F$ is a proper ancestor of $v$ located strictly to the right of $\heavy_v$. In particular, a right sibling of $\heavy_v$ is an ancestor of $v_F$, and thus $\sz(v_F) < \sz(\heavy_v) = \sz(u_F)$ holds by \cref{def:heavy}.
            Hence, the algorithm chooses the branch of Line~\ref{alg:klein:upd_right}, and thus the successor fragment $F[l_F\dd r_F-1)$ still contains $F[o(\heavy_v) \dd c(\heavy_v)]$.
            \item[$r_F-1=c(\heavy_v)$.] In this case, $v_F = \heavy_v \in V_{F[l_F\dd r_F)}$ whereas $u_F$ is a proper ancestor of $v$ located strictly to the left of $\heavy_v$. In particular, a left sibling of $\heavy_v$ is an ancestor of $v_F$, and thus $\sz(v_F) \le \sz(\heavy_v) = \sz(u_F)$ holds by \cref{def:heavy}.
            Hence, the algorithm chooses the branch of Line~\ref{alg:klein:upd_left}, and thus the successor fragment $F[l_F+1\dd r_F)$ still contains $F[o(\heavy_v) \dd c(\heavy_v)]$.\claimqedhere
        \end{description}
    \end{claimproof}

    We next show that the constructed sequences capture everything that \cref{alg:Klein} visits.

    \begin{claim}
    Every visited non-empty $F[l_F\dd r_F)$ fragment mapped to a node $v\in V_F\cup\{\bot\}$ belongs to the sequence constructed to $v$.
    \end{claim}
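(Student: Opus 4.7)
The plan is to argue by induction on the recursion tree of \cref{alg:Klein}, showing that every recursive call on a fragment $F[l_F \dd r_F)$ mapped to $v$ must coincide with some element of the sequence constructed for $v$. The base case is the initial call, which visits $F = F[0\dd |F|) = F(o(\bot)\dd c(\bot))$ (under the virtual convention $o(\bot)=-1$ and $c(\bot)=|F|$), the start of $\bot$'s sequence.

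For the inductive step, I fix a non-initial visited fragment $f^* = F[l^*_F \dd r^*_F)$ mapped to $v$, let $f = F[l_F \dd r_F)$ be the parent call in the recursion tree, and split cases by the line of \cref{alg:Klein} that produced $f^*$. If $f^*$ arises via a single-character shrink (line~\ref{alg:klein:upd_left:del} or~\ref{alg:klein:upd_right:del}), then either $f$ is itself mapped to $v$---so the inductive hypothesis places $f$ on the sequence and $f^*$ is its direct successor---or $f$ is mapped to a proper ancestor $w$ of $v$, in which case the lowest-enclosing analysis forces the removed character to be $F[o(v)]$ or $F[c(v)]$, pinning $f^*$ to the entry $F(o(v)\dd c(v))$ of $v$'s sequence. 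If $f^*$ arises from a match branch (line~\ref{alg:klein:upd_left:match} or~\ref{alg:klein:upd_right:match}), it is either the ``inside'' $F(o(u_F)\dd c(u_F))$---which forces $u_F = v$ and $f^*$ to be exactly $v$'s entry---or the ``rest'' $F[c(u_F)+1\dd r_F)$, which is the delicate case.

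The main obstacle is this last ``rest'' subcase. When the match fires at $f$ on $v$'s sequence and the rest is mapped to $v$, the plan is to show that the sequence from $f$ onward performs exactly $\sz(u_F)$ consecutive left-shrinks before any other action, so that the $\sz(u_F)$-th shrink lands precisely on $f^*$. The structural input is that the match was only enabled because the algorithm chose the left branch at $f$, i.e., $v_F \in V_f$ and $\sz(u_F) \le \sz(v_F)$. A short calculation then shows that the subtrees of $u_F$ and $v_F$ must be disjoint (otherwise one would contain the other, contradicting $v_F \in V_f$ together with $c(u_F) < r_F$), so $v_F$ remains inside every intermediate fragment $f_i = F[l_F + i \dd r_F)$; meanwhile the node $u_{F,i} = \node{F}{l_F + i}$ either straddles $f_i$ or lies strictly inside $u_F$'s subtree with $\sz(u_{F,i}) \le \sz(u_F) \le \sz(v_F) = \sz(v_{F,i})$. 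In both alternatives the algorithm stays in the left branch at $f_i$, so the sequence continues shrinking on the left until it reaches $f^*$, as required.
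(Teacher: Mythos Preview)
Your proof is correct and follows essentially the same approach as the paper: both induct on the recursion, handling the single-character shrinks and the two halves of the match branch separately, and the heart of both arguments is exactly your ``delicate case'' computation showing that after a match at $f$ the sequence performs $\sz(u_F)$ consecutive left-shrinks (resp.\ $\sz(v_F)$ right-shrinks) to reach the ``rest'' fragment. Your disjointness-of-subtrees argument and the verification that the left branch keeps firing at each intermediate $f_i$ are identical in content to the paper's.

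One small imprecision worth tightening: in the single-shrink case where $f$ is mapped to a proper ancestor $w$ of $v$, the lowest-enclosing analysis alone only pins \emph{one} endpoint of $f^*$ (the removed character is $F[o(v)]$ or $F[c(v)]$); it does not by itself force the other endpoint, so it does not directly ``pin $f^*$ to $F(o(v)\dd c(v))$''. What you actually need here is the additional observation (implicit in the first claim of the proof) that every element of $w$'s sequence is mapped to $w$, so if $f^*$ is mapped to $v\ne w$ then $f$ must be the \emph{last} element $F(o(\heavy_w)\dd c(\heavy_w)]$ of $w$'s sequence, whence $v=\heavy_w$ and $f^*=F(o(v)\dd c(v))$. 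The paper states this transition explicitly; your sketch would benefit from doing the same.
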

    \begin{claimproof}
    The claim is satisfied initially when the algorithm starts with $F=F(o(\bot)\dd c(\bot))$.

    For any execution of \cref{alg:Klein}, the recursive calls in Lines~\ref{alg:klein:upd_left:ins} and~\ref{alg:klein:upd_right:ins} keep $F[l_F\dd r_F)$ intact.
    The calls Lines~\ref{alg:klein:upd_left:del} and~\ref{alg:klein:upd_right:del} move to the next fragment in the sequence, except when $F[l_F \dd r_F)=F(o(\heavy_v) \dd c(\heavy_v)]$ is the last fragment in the sequence, in which case Line~\ref{alg:klein:upd_right:del} proceeds to $F(o(\heavy_v) \dd c(\heavy_v))$, the first fragment assigned to $\heavy_v$.

    The main challenge is to analyze the recursive calls in Lines~\ref{alg:klein:upd_left:match} and \ref{alg:klein:upd_right:match}.    
    Line~\ref{alg:klein:upd_left:match} makes two recursive calls.
    The first call involves $F(o(u_F)\dd c(u_F))$, which is the first fragment assigned to $u_F$.
    The second call involves $F(c(u_F)\dd r_F)$, which is either empty (if $u_F=v_F$) or, as we prove next, one of the subsequent elements in the sequence of $v$.
    In other words, we need to show that \cref{alg:Klein} would never use the branch of Line~\ref{alg:klein:upd_right} as $F[l_F\dd r_F)$ is gradually shrunk to $F(c(u_F)\dd r_F)$.
    For this, consider an intermediate fragment $F[l'_F\dd r_F)$ and the node $u'_F=\node{F}{l'_F}$.
    Note that $u'_F$ is an ancestor of $u_F$ and thus $\sz(u'_F)\le \sz(u_F) \le \sz(v_F)$.
    Since also $v_F\in V_{F[l'_F\dd r_F)}$, the algorithm indeed chooses the branch of Line~\ref{alg:klein:upd_left} independently of whether $u'_F\in V_{F[l'_F\dd r_F)}$.

    The analysis of the recursive calls in Line~\ref{alg:klein:upd_right:match} is mostly symmetric.
    The second call involves  $F(o(v_F)\dd c(v_F))$, which is the first fragment assigned to $v_F$.
    The first call involves $F[l_F\dd o(v_F))$, which is either empty (if $u_F=v_F$) or, as we prove next, one of the subsequent elements in the sequence of $v$.
    In other words, we need to show that \cref{alg:Klein} would never use the branch of Line~\ref{alg:klein:upd_left} as $F[l_F\dd r_F)$ is gradually shrunk to $F[l_F\dd o(v_F))$.
    For this, consider an intermediate fragment $F[l_F\dd r'_F)$ and the node $v'_F=\node{F}{r'_F-1}$.
    Note that $v'_F$ is an ancestor of $v_F$ and thus $\sz(v'_F)\le \sz(v_F) < \sz(u_F)$.
    Since also $u_F\in V_{F[l_F\dd r'_F)}$, the algorithm indeed chooses the branch of Line~\ref{alg:klein:upd_left} independently of whether $v'_F\in V_{F[l_F\dd r'_F)}$.
    \end{claimproof}

    Finally, observe that the number of fragments mapped to $v$ is $\sz(v) - \sz(\heavy_v)$,
    which is $2$ plus the total size of all the siblings of $\heavy_v$, i.e., $2$ plus the total size of all the light children on $v$.
    The sum of $\sz(u)$ over all light vertices is equal to twice the sum of the number of light ancestors for each vertex.
    Since every root-to-leaf path contains at most $\log n$ light nodes~\cite{10.1137/0213024}, the summation is bounded by $\Oh(n \log n)$.
\end{proof}

\subsection{Our Algorithm}
For reference, we provide \cref{alg:new_nk2logn} as the modified version of \cref{alg:Klein}.

\SetKwFunction{NewDP}{BoundedKlein}
\begin{algorithm}
    \caption{$\protect\NewDP(k,l_F,r_F,l_G,r_G)$: Our new pruning rule for Klein's algorithm}
    \label{alg:new_nk2logn}
    \KwIn{A threshold $k\in \Zp$ and fragments $F[l_F \dd r_F)$ and $G[l_G \dd r_G)$ of the input forests}
    \KwOut{Compute and stores the value of $\dpt[l_F, r_F, l_G, r_G]$}
        \If{$|l_F - l_G| \le 2k$ \KwSty{and} $|r_F - r_G| \le 2k$}{
            $\Klein(l_F,r_F,l_G,r_G)$ \tcp*[r]{Call Klein's update rule}
        }
        \Else{
            $\dpt[l_F, r_F, l_G, r_G] \gets \infty$ \tcp*[r]{Set the value to $\infty$}\label{ln:infty}
        }
\end{algorithm}

\lemgood*
\begin{proof}
    Let us first prove that $\ted^w(F[l_F\dd r_F),G[l_G\dd r_G))\le \dpt[l_F, r_F, l_G, r_G]$.
    This already follows from the correctness of \cref{alg:Klein}, but we provide a full proof for completeness since our formulation of \cref{alg:Klein} is slightly different than in the literature.
    We consider several cases depending on where the $\dpt[l_F, r_F, l_G, r_G]$ values originates from.
    \begin{itemize}
        \item Line~\ref{ln:infty}. In this case, trivially $\dpt[l_F, r_F, l_G, r_G]=\infty \ge \ted^w(F[l_F\dd r_F),G[l_G\dd r_G))$.
        \item Line~\ref{alg:klein:empty_f}. In this case, $F[l_F\dd r_F)$ is empty and there exist a forest alignment that simply inserts every character in $G[l_G\dd r_G)$. Its cost, computed correctly in Lines~\ref{alg:klein:empty_f}, is an upper bound on $\ted^w(F[l_F\dd r_F),G[l_G\dd r_G))$.
        \item Line~\ref{alg:klein:empty_g}. This case is symmetric to the previous one.
        \item Line~\ref{alg:klein:upd_left:del}. In this case, pick an optimal forest alignment $\A\in \ta(F(l_F\dd r_F),G[l_G\dd r_G))$. 
        Prepending an edge $(l_F,l_G)\to (l_F+1,l_G)$, we can extend it to a forest alignment $\B \in \ta(F[l_F\dd r_F),G[l_G\dd r_G))$. Thus, by the inductive assumption
        \begin{align*}
        \dpt[l_F,r_F,l_G,r_G] &= \dpt[l_F+1,r_F,l_G,r_G]+\PW(F[l_F],\emptystring) \\
        &\ge \ted^w(F(l_F\dd r_F),G[l_G\dd r_G))+\PW(F[l_F],\emptystring) \\
        & = \ted^w_{\A}(F(l_F\dd r_F),G[l_G\dd r_G))+\PW(F[l_F],\emptystring) \\
        & = \ted^w_{\B}(F[l_F\dd r_F),G[l_G\dd r_G)) \\
        & \ge \ted^w(F[l_F\dd r_F),G[l_G\dd r_G)).
        \end{align*}
        \item Line~\ref{alg:klein:upd_left:del}. In this case, pick an optimal forest alignment $\A\in \ta(F[l_F\dd r_F),G(l_G\dd r_G))$. 
        Prepending an edge $(l_F,l_G)\to (l_F,l_G+1)$, we can extend it to a forest alignment $\B \in \ta(F[l_F\dd r_F),G[l_G\dd r_G))$. Thus, by the inductive assumption
        \begin{align*}
        \dpt[l_F,r_F,l_G,r_G] &= \dpt[l_F,r_F,l_G+1,r_G]+\PW(\emptystring,G[l_G]) \\
        &\ge \ted^w(F[l_F\dd r_F),G(l_G\dd r_G))+\PW(\emptystring,G[l_G]) \\
        & = \ted^w_{\A}(F[l_F\dd r_F),G(l_G\dd r_G))+\PW(\emptystring,G[l_G])  \\
        & = \ted^w_{\B}(F[l_F\dd r_F),G[l_G\dd r_G)) \\
        & \ge \ted^w(F[l_F\dd r_F),G[l_G\dd r_G)).
        \end{align*}
        \item Line~\ref{alg:klein:upd_left:match}. In this case, we consider optimal forest alignments $\A\in \ta(F(l_F\dd c(u_F)),\allowbreak G(l_G\dd c(u_G)))$ and $\A'\in \ta(F(c(u_F)\dd r_F),G(c(u_G)\dd r_G))$.
        We concatenate the edge $(l_F,l_G)\to (l_F+1,l_G+1)$, the alignment $\A$, the edge $(c(u_F),c(u_G))\to (c(u_F)+1,c(u_G)+1)$, and the alignment $\A'$ to obtain a forest alignment $\B \in \ta(F[l_F\dd r_F),G[l_G\dd r_G))$.
        Note that $\B$ matches the parentheses of $u_F$ to the corresponding parentheses of $u_G$, which satisfies the consistency condition of \cref{def:ta}.
        For the remaining diagonal edges in $\B$, the relevant subalignment $\A$ or $\A'$ already satisfies the consistency condition, and hence so does $\B$.
        Thus, by the inductive assumption
        \begin{align*}
            \dpt[l_F,l_G,r_F,r_G] &= \PW(F[l_F], G[l_G])+\dpt[l_F+1, c(u_F), l_G+1, c(u_G)] \\
            & \quad + \PW(F[c(u_F)], G[c(u_G)])  + \dpt[c(u_F)+1, r_F, c(u_G)+1, r_G] \\
            &\ge \PW(F[l_F], G[l_G])+\ted^w(F(l_F\dd c(u_F)),G(l_G\dd c(u_G))) \\
            & \quad + \PW(F[c(u_F)], G[c(u_G)])  + \ted_{\A'}^w(F(c(u_F)\dd r_F),G(c(u_G)\dd r_G))\\
            &= \PW(F[l_F], G[l_G])+\ted^w(F(l_F\dd c(u_F)),G(l_G\dd c(u_G))) \\
            & \quad + \PW(F[c(u_F)], G[c(u_G)])  + \ted_{\A'}^w(F(c(u_F)\dd r_F),G(c(u_G)\dd r_G))\\
            &= \ted^w_{\B}(F[l_F\dd r_F),G[l_G\dd r_G)) \\
            & \ge \ted^w(F[l_F\dd r_F),G[l_G\dd r_G)).
        \end{align*}
        \item Lines~\ref{alg:klein:upd_right:del},~\ref{alg:klein:upd_right:ins}, and~\ref{alg:klein:upd_right:match}. These three cases are symmetric to the previous three, and the arguments are analogous.
    \end{itemize}

    To prove the inequality $\dpt[l_F,r_F,l_G,r_G]\le \bted{k}^w(F[l_F\dd r_F),G[l_G\dd r_G))$, consider an optimal bounded forest alignment $\A \in \bta{k}(F[l_F\dd r_F),G[l_G\dd r_G))$ so that $\ted_{\A}^w(F[l_F\dd r_F),\allowbreak G[l_G\dd r_G))=\bted{k}^w(F[l_F\dd r_F),G[l_G\dd r_G))$.
    Since $\width(\A)\le 2k$, we have $|l_F - l_G| \le 2k$ and $|r_F - r_G| \le 2k$.
    If $F[l_F\dd r_F)$ or $G[l_G\dd r_G)$ is empty, then $\A$ must insert or delete every character in the other fragment. This is covered by Lines~\ref{alg:klein:empty_f} and~\ref{alg:klein:empty_g}.
    Otherwise, we consider the first or the last edge of $\A$ depending on which branch \cref{alg:Klein} follows.   
    If the algorithm follows the branch of Line~\ref{alg:klein:upd_left}, we consider three case:
    \begin{itemize}
        \item The first edge of $\A$ is $(l_F,l_G)\to (l_F+1,l_G)$, that is, $\A$ deletes $F[l_F]$.
        Then, the subpath of $\A$ from $(l_F+1,l_G)$ to $(r_F,r_G)$ is a bounded forest alignment. 
        In this case, due to Line~\ref{alg:klein:upd_left:del} and the inductive hypothesis, we have
        \begin{align*}
            \dpt[l_F,l_G,r_F,r_G] &\le \dpt[l_F+1,r_G,l_G,r_G]+\PW(F[l_F],\emptystring) \\
            & \le \bted{k}^w(F[l_F+1\dd r_F),G[l_G\dd r_G))+\PW(F[l_F],\emptystring) \\
            &\le \ted^w_{\A}(F[l_F+1\dd r_F),G[l_G\dd r_G))+\PW(F[l_F],\emptystring) \\
            &= \ted^w_{\A}(F[l_F\dd r_F),G[l_G\dd r_G)) \\
            &=  \bted{k}^w(F[l_F\dd r_F),G[l_G\dd r_G)).
        \end{align*}
        \item The first edge of $\A$ is $(l_F,l_G)\to (l_F,l_G+1)$, that is, $\A$ inserts $G[l_G]$.
        Then, the subpath of $\A$ from $(l_F,l_G+1)$ to $(r_F,r_G)$ is a bounded forest alignment. 
        In this case, due to Line~\ref{alg:klein:upd_left:ins} and the inductive hypothesis, we have
        \begin{align*}
            \dpt[l_F,l_G,r_F,r_G] &\le \dpt[l_F,r_G,l_G+1,r_G]+\PW(\emptystring,G[l_G]) \\
            & \le \bted{k}^w(F[l_F\dd r_F),G[l_G+1\dd r_G))+\PW(\emptystring,G[l_G]) \\
            &\le \ted^w_{\A}(F[l_F\dd r_F),G[l_G+1\dd r_G))+\PW(\emptystring,G[l_G]) \\
            &= \ted^w_{\A}(F[l_F\dd r_F),G[l_G\dd r_G)) \\
            &=  \bted{k}^w(F[l_F\dd r_F),G[l_G\dd r_G)).
        \end{align*}
        \item Otherwise, the first edge of $\A$ must be $(l_F,l_G)\to (l_F+1,l_G+1)$.
        In this case, by \cref{def:ta}, $\A$ must also align $F[\match{F}{l_F}]$ and $G[\match{F}{l_G}]$.
        In particular, we must have $u_F=\node{F}{l_F}\in V_{F[l_F\dd r_F)}$ and $v_F=\node{G}{l_G}\in V_{G[l_G\dd r_G)}$,
        and the alignment $\A$ must also contain an edge $(c(u_F),c(u_G))\to (c(u_F)+1,c(u_G)+1)$.
        Then, the subpaths of $\A$ from $(l_F+1,l_G+1)$ to $(c(u_F),c(u_G))$  and from $(c(u_F)+1,c(u_G)+1)$ to $(r_F,r_G)$ are bounded forest alignments.
        This is because no node of $F$ simultaneously exits $F(o(u_F)\dd c(u_F))$ and enters $F(c(u_F)\dd r_F)$;
        consequently, the consistency condition of \cref{def:ta} does not impose any constraint on the considered subalignments that $\A$ does not satisfy already. 
        In this case, due to Line~\ref{alg:klein:upd_left:match} and the inductive hypothesis,
        \begin{align*}
            \dpt[l_F,l_G,r_F,r_G] &\le \PW(F[l_F],G[l_G])+\dpt[l_F+1,c(u_F),l_G+1,c(u_G)]\\
            &\quad + \PW(F[c(u_F)],G[c(u_G)])+\dpt[c(u_F)+1,r_F,c(u_G)+1,r_G] \\
            & \le \PW(F[l_F],G[l_G])+\bted{k}^w(F(l_F\dd c(u_F)),G(l_G\dd c(u_G)))\\
            &\quad + \PW(F[c(u_F)],G[c(u_G)])+\bted{k}^w(F(c(u_F)\dd r_F),G(c(u_G)\dd r_G))\\
            & \le \PW(F[l_F],G[l_G])+\ted_{\A}^w(F(l_F\dd c(u_F)),G(l_G\dd c(u_G)))\\
            &\quad + \PW(F[c(u_F)],G[c(u_G)])+\ted_{\A}^w(F(c(u_F)\dd r_F),G(c(u_G)\dd r_G))\\
            &= \ted^w_{\A}(F[l_F\dd r_F),G[l_G\dd r_G)) \\
            &=  \bted{k}^w(F[l_F\dd r_F),G[l_G\dd r_G)).
        \end{align*}
    \end{itemize}
    If the algorithm follows the branch of Line~\ref{alg:klein:upd_right}, then a symmetric argument shows that  
    Lines~\ref{alg:klein:upd_right:del},~\ref{alg:klein:upd_right:ins}, and~\ref{alg:klein:upd_right:match}
    jointly guarantee that $\dpt[l_F,l_G,r_F,r_G] \le \ted^w_{\A}(F[l_F\dd r_F),G[l_G\dd r_G))=\bted{k}^w(F[l_F\dd r_F),G[l_G\dd r_G))$.
\end{proof}

\thmaj*
\begin{proof}
By \cref{lem:good}, applying \cref{alg:new_nk2logn} results in $\dpt[l_F,r_F,l_G,r_G]$ values satisfying the invariant $\ted^w(F[l_F\dd r_F),G[l_G\dd r_G)) \le \dpt[l_F, r_F, l_G, r_G] \le \bted{k}^w(F[l_F\dd r_F),G[l_G\dd r_G))$.
In particular, $\ted^w(F,G) \le \dpt[0,|F|,0,|G|] \le \bted{k}^w(F,G)$.
Thus, it suffices to return $\dpt[0,|F|,0,|G|]$ if this value does not exceed $k$ or $\infty$ otherwise.
If $\ted^w(F,G)> k$, then $\dpt[0,|F|,0,|G|] \ge \ted^w(F,G) \ge k$, so we correctly return $\infty$.
Otherwise, \cref{obs:good_is_good} guarantee that $\dpt[0,|F|,0,|G|]=\bted{k}^w(F,G)=\ted^w(F,G)\le k$ is also computed correctly.

By \cref{lem:klein}, the recursive implementation of \cref{alg:Klein} (and thus of \cref{alg:new_nk2logn}) visits $\Oh(n\log n)$ fragments of $F$.
Due to the pruning rule, at most $\Oh(k^2)$ fragments of $G$ are visited for each visited fragment of $F$ (the pruned states are not counted since they can be charged to the parent state).
Overall, this gives $\Oh(nk^2\log n)$ dynamic-programming states for which we need to store the computed $\dpt$ values.
After $\Oh(n)$ time preprocessing, which computes the prefix sums $\ed^{\PW}(F[0\dd f),\emptystring)$ and $\ed^{\PW}(\emptystring, G[0\dd g))$, each application of \cref{alg:new_nk2logn} can be implement in $\Oh(1)$ time excluding recursive calls.
Nevertheless, we need to use memoization to avoid processing the same dynamic-programming state multiple times, and this induces an $\Oh(\log n)$-factor overhead if we store the (sparse) $\dpt$ table using a standard dynamic dictionary.

To avoid this overhead, we utilize the structure discussed in the proof of \cref{lem:klein}.
For each non-leaf node $v$ (including the virtual root $\bot$), we store the sequence of fragments $F[l_F\dd r_F)$ assigned to the node $v$.
For each of these fragments, we store the $\dpt[l_F,r_F,l_G,r_G]$ values in a two-dimensional table indexed by $l_G\in [l_F-2k\dd l_F+2k]$ and $r_G\in [r_G-2k\dd r_G+2k]$.
Overall, this yields a space complexity of $\Oh(nk^2 \log n)$.
The index of any fragment $F[l_F\dd r_F)$ assigned to $v$ within the sequence of fragments assigned to $v$ can be computed in $\Oh(1)$ time: it is equal to $\sz(v)-2-(r_F-l_F)$, where $\sz(\bot)=|F|+2$.
Thus, as long as we keep track of the nodes assigned to each non-empty fragment $F[l_F\dd r_F)$ (the proof of \cref{lem:klein} already explains how to do so), we can access the corresponding two-dimensional table in $\Oh(1)$ time.
Empty fragments $F[l_F\dd r_F)$ can be handled using a separate table indexed by $l_F=r_F$.

This yields an $\Oh(nk^2 \log n)$-time implementation of the algorithm computing $\ted_{\le k}^w(F,G)$ for a given threshold $k$.
In the absence of a given threshold, we consider a geometric sequence of thresholds $(d_i)_{i\in \Zz}$ with $d_i=2^i$ and compute $\ted^w_{\le d_i}(F,G)$ for subsequent $i\in \Zz$ until $\ted^w_{\le d_j}(F,G)\le d_j$ holds for some $j\in \Zz$, which indicates that $\ted^w(F,G)=\ted^w_{\le d_j}(F,G)$.
This happens for $j=\lceil \log \ted^w(F,G)\rceil$. The running times of subsequent iterations form a geometric progression dominated by $\Oh(nd_j^2 \log n)=\Oh(nk^2\log n)$ as long as $F\ne G$.
\end{proof}
\section{Faster Algorithm for Repetitive Inputs: Details}\label{app:optimization}
In this section, we fill in the missing details from \cref{sec:optimization}.
Recall that the correctness of our optimization hinges on the following result.

\lemtedmatrixupdate*
\begin{proof}
    For every $t\in [0\dd e]$, let us denote $x_t=p_F+|R|t$ and $y_t=p_G+|R|t$.
    We prove the following auxiliary claim:
    \begin{claim}\label{clm:ted_matrix_update_proof_single_induction_step}
        For every $t\in [0\dd e)$ and $z_{t+1}\in [y_{t+1}-|R|\dd y_{t+1}+|R|]$, we have
        \begin{align*}&\bted{k}^w(F[l_F\dd x_{t+1}),G[l_G\dd z_{t+1}))\\
            &\qquad\ge \min_{z_t\in [y_t-|R|\dd y_t+|R|]} \bted{k}^w(F[l_F\dd x_t),G[l_G\dd z_t))+M_R[z_t-y_t,z_{t+1}-y_{t+1}]\\
            &\qquad\ge \min_{z_t\in [y_t-|R|\dd y_t+|R|]} \ted^w(F[l_F\dd x_t),G[l_G\dd z_t))+M_R[z_t-y_t,z_{t+1}-y_{t+1}]\\
            &\qquad \ge \ted^w(F[l_F\dd x_{t+1}),G[l_G\dd z_{t+1})).
        \end{align*}
    \end{claim}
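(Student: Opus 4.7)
The plan is to establish the three inequalities in turn. For the first, I would take an optimal alignment $\A \in \bta{k}(F[l_F\dd x_{t+1}),G[l_G\dd z_{t+1}))$ realizing $\bted{k}^w(F[l_F\dd x_{t+1}),G[l_G\dd z_{t+1}))$ and split it at a vertex of the form $(x_t, z_t)$ on the underlying grid path. Such a vertex exists by monotonicity of the $x$-coordinate (since $l_F \le p_F \le x_t < x_{t+1}$), and any such vertex satisfies $z_t \in [y_t-|R|\dd y_t+|R|]$: indeed $\width(\A)\le 2k$ forces $|x_t-z_t|\le 2k$, while $x_t-y_t = p_F-p_G$ has $|p_F-p_G|\le 2k$ by \cref{def:free}, so $|z_t-y_t|\le 4k\le |R|$.

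Splitting $\A = \A_1\cdot \A_2$ at $(x_t,z_t)$, the cost decomposes additively. The crucial subtlety is that both halves must be forest alignments. Here the balancedness of $R = F[x_t\dd x_{t+1})$ does all the work: no node of $F$ can have one parenthesis in $[l_F\dd x_t)$ and its mate in $[x_t\dd x_{t+1})$, so the straddling convention of \cref{def:ta} and \cref{rmk:spare} is inherited cleanly on each side. For any aligned pair $F[\hat{f}]\ponto{\A}G[\hat{g}]$ with $\hat{f},\match{F}{\hat{f}}\in [l_F\dd x_t)$, monotonicity of the path also places both $\hat{g}$ and $\match{G}{\hat{g}}$ inside $[l_G\dd z_t)$, and symmetrically for $\A_2$. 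Consequently $\A_1 \in \bta{k}(F[l_F\dd x_t), G[l_G\dd z_t))$ (the width of a subpath is at most the width of the whole path), while $\A_2 \in \ta(R, G[z_t\dd z_{t+1}))$, where $G[z_t\dd z_{t+1}) = R^3[z_t-y_t+|R|\dd z_{t+1}-y_{t+1}+2|R|)$ by the periodic structure imposed by the free pair definition; hence the cost of $\A_2$ is at least $M_R[z_t-y_t, z_{t+1}-y_{t+1}]$ by \cref{def:ted_matrix}. Adding up the two costs and minimizing over valid $z_t$ yields the first inequality.

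The second inequality is immediate from $\bta{k}\subseteq \ta$. For the third, I would fix any $z_t\in [y_t-|R|\dd y_t+|R|]$, pick optimal (unbounded) forest alignments of $F[l_F\dd x_t)$ onto $G[l_G\dd z_t)$ and of $R$ onto $G[z_t\dd z_{t+1})$, and concatenate them. The same balancedness argument guarantees that the concatenation is a valid forest alignment of $F[l_F\dd x_{t+1})$ onto $G[l_G\dd z_{t+1})$, and its cost upper-bounds $\ted^w(F[l_F\dd x_{t+1}), G[l_G\dd z_{t+1}))$, which is the required inequality. I expect the main obstacle to be the consistency verification at the split point, but it is uniformly handled by exploiting the balancedness of the period $R$.
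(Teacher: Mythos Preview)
Your proposal is correct and follows essentially the same approach as the paper: split an optimal bounded alignment at a vertex $(x_t,z_t)$, use the width bound plus $|p_F-p_G|\le 2k$ to place $z_t\in[y_t-|R|\dd y_t+|R|]$, invoke balancedness of $R=F[x_t\dd x_{t+1})$ to ensure both subpaths are forest alignments, and identify $G[z_t\dd z_{t+1})$ with the appropriate fragment of $R^3$ to bound the second piece by $M_R[z_t-y_t,z_{t+1}-y_{t+1}]$; the second inequality is the trivial containment and the third is the concatenation argument you describe. The paper's write-up is terser but the logical structure is identical.
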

    \begin{claimproof}
        Consider an optimal alignment $\A \in \bta{k}(F[l_F\dd x_{t+1}),G[l_G\dd z_{t+1}))$.
        Due to $x_t \in [x_0\dd x_{t+1}]\subseteq [l_F\dd x_{t+1}]$, the alignment $\A$ contains a pair $(x_t, z_t)$ such that $|z_t-x_t| \le \width(A)$.
        Moreover, by the triangle inequality, $|z_t-y_t| \le |z_t-x_t|+|x_t-y_t| \le \width(A)+|p_F-p_G| \le 4k \le |R|$.
        Since $F[x_t\dd x_{t+1})=R$ is balanced, the subpaths of $\A$ from $(l_F,l_G)$ to $(x_t,z_t)$ and from $(x_t,z_t)$ to $(x_{t+1},z_{t+1})$ are bounded forest alignments.
        The latter one aligns $F[x_t\dd x_{t+1})=R$ with $G[z_{t}\dd z_{t+1})=R^3[|R|+z_t-y_t\dd 2|R|+z_{t+1}-y_{t-1})$, so its cost is at least $M_R[z_t-y_t,z_{t+1}-y_{t+1}]$.
        Consequently,
        \begin{align*}&\bted{k}^w(F[l_F\dd x_{t+1}),G[l_G\dd z_{t+1}))\\
            &\qquad = \ted^w_{\A}(F[l_F\dd x_{t+1}),G[l_G\dd z_{t+1})) \\
            &\qquad = \ted^w_{\A}(F[l_F\dd x_{t}),G[l_G\dd z_{t}))+\ted^w_{\A}(F[x_{t}\dd x_{t+1}),G[z_{t}\dd z_{t+1})) \\
            &\qquad \ge \bted{k}^w(F[l_F\dd x_t),G[l_G\dd z_t))+\ted^w(F[x_{t}\dd x_{t+1}),G[z_{t}\dd z_{t+1}))\\
            & \qquad = \bted{k}^w(F[l_F\dd x_t),G[l_G\dd z_t)) + M_R[z_t-y_t,z_{t+1}-y_{t+1}] \\
            & \qquad \ge  \min_{z_t\in [y_t-|R|\dd y_t+|R|]} \bted{k}^w(F[l_F\dd x_t),G[l_G\dd z_t))+M_R[z_t-y_t,z_{t+1}-y_{t+1}].
        \end{align*}
        This completes the proof of the first inequality.
        The second inequality holds trivially due to $\bta{k}(F[l_F\dd x_t),G[l_G\dd z_t))\subseteq \ta(F[l_F\dd x_t),G[l_G\dd z_t))$.
        As for the third inequality, pick any $z_t\in [y_t-|R|\dd y_t+|R|]$,
        an optimal alignment $\A \in \ta(F[l_F\dd x_t),G[l_G\dd z_t))$ of cost $\ted^w(F[l_F\dd x_t),G[l_G\dd z_t))$, and an optimal alignment $\A'\in \ta(F[x_t\dd x_{t+1}),G[z_{t}\dd z_{t+1}))$ of cost $\ted^w(F[x_{t}\dd x_{t+1}),G[z_{t}\dd z_{t+1}))$.
        Note that $F[x_t\dd x_{t+1})=R$ and $G[z_{t}\dd z_{t+1})=R^3[|R|+z_t-y_t\dd 2|R|+z_{t+1}-y_{t-1})$, so the cost of $\A'$ is equal to $M_R[z_t-y_t,z_{t+1}-y_{t+1}]$.
        The concatenation of alignments $\A$ and $\A'$ is a forest alignment in $\ta(F[l_F\dd x_{t+1}),G[l_G\dd z_{t+1}))$, and thus its cost is at least as large as $\ted^w(F[l_F\dd x_{t+1}),G[l_G\dd y_{t+1}))$.
    \end{claimproof}
    The lemma can be proved by repeatedly using \cref{clm:ted_matrix_update_proof_single_induction_step} for subsequent values of $t\in [0\dd e)$.
    In the final step, we also rely on the fact that $r_G\in [y_e-|R|\dd y_e+|R|]$.
    This holds because $y_e=q_G$, $|q_F-q_G|=|p_F-p_G|\le 2k$, $|r_F-r_G|\le 2k$, $x_{e}=r_F=q_F$, and $|R|\ge 4k$.
\end{proof}

\begin{lemma}\label{lem:klein_fragments}
    There exists an algorithm that, given a forest $F\in \FSigma$ of size $n$, in $\Oh(n \log n)$ time finds the set $\I$ of fragments of $F$ that the recursive formulation of \cref{alg:Klein} visits (for any non-empty $G\in \FSigma$). 
\end{lemma}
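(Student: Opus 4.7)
The plan is to reuse the structural decomposition already established in the proof of \cref{lem:klein}. Recall that that proof assigns to every non-leaf node $v\in V_F\cup\{\bot\}$ a sequence of fragments, starting at $F(o(v)\dd c(v))$ (equal to $F$ if $v=\bot$) and ending at $F(o(\heavy_v)\dd c(\heavy_v)]$, where consecutive fragments differ by a single character stripped off the left or the right exactly according to the branch rule of \cref{alg:Klein}. The proof further shows that every non-empty fragment visited by the recursive implementation of \cref{alg:Klein} (for any non-empty $G$) belongs to the sequence of its associated node, and the total number of fragments across all sequences is $\sum_v(\sz(v)-\sz(\heavy_v))=\Oh(n\log n)$ by the standard heavy-path argument. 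Hence it suffices to enumerate the union of all these sequences in $\Oh(n\log n)$ total time.

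First, I would perform standard $\Oh(n)$ preprocessing on the balanced-parentheses representation of $F$: build the array $i\mapsto \node{F}{i}$, store $o(u)$, $c(u)$, and $\sz(u)$ for every $u\in V_F$, and compute $\heavy_v$ for every non-leaf $v\in V_F\cup\{\bot\}$ (the child of $v$ maximizing $\sz$, with ties broken rightward, as in \cref{def:heavy}). With these structures in place, the membership predicates $u_F\in V_{F[l_F\dd r_F)}$ and $v_F\in V_{F[l_F\dd r_F)}$ reduce to comparing $c(u_F)$ with $r_F$ and $o(v_F)$ with $l_F$, and the size comparison is a single array lookup. The main loop then iterates once over every non-leaf $v$: it initializes $(l_F,r_F)\gets (o(v)+1,c(v))$, or $(0,|F|)$ when $v=\bot$, records $F[l_F\dd r_F)$ in $\I$, and repeatedly applies one shrink step in $\Oh(1)$ time by transcribing the branch rule of \cref{alg:Klein} verbatim, halting as soon as $(l_F,r_F)=(o(\heavy_v)+1,c(\heavy_v)+1)$. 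The per-node cost is proportional to $\sz(v)-\sz(\heavy_v)$, so the overall cost is $\Oh(n\log n)$ as argued in the proof of \cref{lem:klein}.

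The one thing to check carefully is that our simulated transition matches the one from the structural proof, so that the enumerated sequence coincides exactly with the sequence assigned to $v$ and we capture every fragment that \cref{alg:Klein} could possibly visit; this is immediate once the branch rule is copied faithfully and fed the constant-time containment and size tests supplied by preprocessing. Because the structural proof maps each visited fragment to its unique lowest enclosing node, the enumerated sequences are automatically pairwise disjoint, so $\I$ comes out duplicate-free with no extra deduplication pass. I do not anticipate any genuine obstacle beyond this bookkeeping: all the analytical work sits in \cref{lem:klein}, and the present lemma is essentially an implementation statement saying that the sequences appearing in that proof can be walked explicitly in amortized constant time per fragment.
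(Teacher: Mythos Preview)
Your proposal is correct and reaches the same $\Oh(n\log n)$ bound via the same key ingredient (\cref{lem:klein}), but the implementation differs from the paper's. The paper simply runs \cref{alg:Klein} \emph{virtually} with the $G$-side suppressed: the branch choice in Line~\ref{alg:klein:upd_left} versus Line~\ref{alg:klein:upd_right} depends only on $l_F,r_F$, so one recurses on the $F$-fragments produced by Lines~\ref{alg:klein:upd_left:del}/\ref{alg:klein:upd_right:del} and (optimistically assuming the $G$-dependent guard is satisfied) Lines~\ref{alg:klein:upd_left:match}/\ref{alg:klein:upd_right:match}, memoizing to avoid revisiting fragments; the $\Oh(n\log n)$ count of distinct fragments then follows directly from \cref{lem:klein}. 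You instead iterate over all non-leaf $v$ and walk the sequence from $F(o(v)\dd c(v))$ to $F(o(\heavy_v)\dd c(\heavy_v)]$ established in the proof of \cref{lem:klein}. Your route is slightly more structural and, because those sequences are pairwise disjoint (each fragment maps to its unique lowest enclosing node), you get duplicate-freeness without any memoization; the paper's route is shorter to state but implicitly relies on memoization to keep the running time $\Oh(n\log n)$. One small point you gloss over: you only argue the inclusion ``visited $\subseteq$ union of sequences'' (which is what \cref{lem:klein} gives), not the converse; in fact the two sets coincide, but even without that observation a superset of size $\Oh(n\log n)$ suffices for every downstream use of $\I$.
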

\begin{proof}
    We can virtually run Algorithm~\ref{alg:Klein} without giving $l_G$ and $r_G$ to the algorithm, and just keep track of the dp entries that the algorithm fetches.
    We can ignore lines~\ref{alg:klein:empty_f} and~\ref{alg:klein:empty_g} of the algorithm.
    The decision of which of the two branches in lines~\ref{alg:klein:upd_left} and~\ref{alg:klein:upd_right} to take is solely based on $l_F$ and $r_F$.
    In each branch, we have three update rules.
    The first rule removes a single character of $F$, which we make a recursive call for it in our algorithm.
    The second rule does not change the fragment, so we can ignore it.
    The third rule is applied based on some condition, which we assume is always satisfied and make recursive calls for it.
    \Cref{lem:klein} shows that our algorithm runs in $\Oh(n \log n)$ time and visits $\Oh(n \log n)$ fragments of~$F$.
\end{proof}

\SetKwFunction{FreeBlockR}{FreeBlock$_{\mathtt{Right}}$}
\SetKwFunction{FreeBlockL}{FreeBlock$_{\mathtt{Left}}$}

\thmoptimization*
\begin{proof}
We implement our final algorithm as \cref{alg:free_block}, with a subroutine $\FreeBlockR$, implemented in \cref{alg:free_block_update_right}, and a symmetric $\FreeBlockL$ responsible for \eqref{eq:free_right} and \eqref{eq:free_left}.

\begin{algorithm}
    \caption{$\protect\FreeBlockR(l_F,r_F,p_F,q_F)$: Free Block algorithm}\label{alg:free_block_update_right}
    \KwIn{A fragment $F[l_F \dd r_F)$ of $F$, and a free block $F[p_F \dd q_F)=G[p_G\dd q_G)=R^e$ with $l_F\le p_F < r_F=q_F$}
    \KwOut{Compute and store the value of $\dpt[l_F, r_F, l_G, r_G]$ for all valid $l_G, r_G$}
    \For{$l_G \in [l_F-2k \dd l_F+2k]$}{
        \For{$r_G \in [r_F-2k \dd r_F+2k]$}{
            $\dpt[l_F, r_F, l_G, r_G] \gets \min_{p_G' \in [p_F-|R|\dd p_F+|R|]} \dpt[l_F, p_F, l_G, p_G'] + M_R^e[r_G-q_G,p_G'-p_G]$.
        }
    }
\end{algorithm}

\begin{algorithm}
    \caption{Our optimized Bounded Tree Edit Distance algorithm}
    \label{alg:free_block}
    \KwIn{Forests $F$ and $G$, oracle access to a weight function $w:\wtype$, a threshold $k\in \Zp$,
    and a family $\FB$ of disjoint free blocks in $F$}
    \KwOut{The bounded tree edit distance $\ted^w_{\le k}(F,G)$}

    Run algorithm of Lemma~\ref{lem:klein_fragments} to find set $\I$ of fragments of $F$\;
    \ForEach{free block $F[p_F \dd q_F)=R^e \in \FB$}{
        Construct the matrix $M_R$ using \cref{alg:Klein}\;
        Construct the matrix $M_R^e$ using fast exponentiation and min-plus product\;
    }

    \For{$F[l_F\dd r_F) \in \I$ in the increasing order of $r_F-l_F$}{
        \If{$F[l_F \dd r_F)$ partially contains a free block in $\FB$}{
            \KwSty{continue}\;
        }
        \ElseIf{$F[l_F \dd r_F)$ ends with a free block $F[p_F\dd q_F)\in \FB$}{
            $\FreeBlockR(l_F,r_F,p_F,q_F)$\;
        }
        \ElseIf{$F[l_F \dd r_F)$ starts with a free block $F[p_F\dd q_F)\in \FB$}{
            $\FreeBlockL(l_F,r_F,p_F,q_F)$\;
        }
        \Else{
            \For{$l_G$ \KwSty{from} $\min(|G|, l_F+2k)$ \KwSty{downto} $\max(0, l_F-2k)$}{
                \For{$r_G$ \KwSty{from} $\max(l_G, r_F-2k)$ \KwSty{to} $\min(|G|, r_F+2k)$}{
                    $\Klein(l_F,r_F,l_G,r_G)$%
                }
            }
        }
    }
    \lIf{$\dpt[0, |F|, 0, |G|]\le k$}{\Return{$\dpt[0, |F|, 0, |G|]$}}
    \lElse{\Return{$\infty$}}
\end{algorithm}

    Algorithm~\ref{alg:free_block} computes selected $\dpt[l_F,r_F,l_G,r_G]$ values by simulating the bounded version of Klein's algorithm and taking advantage of the free blocks.
    We do not compute any values $\dpt[l_F,r_F,l_G,r_G]$ if $F[l_F\dd r_F)$ \emph{partially contains} some free block $F[p_F\dd q_F)\in \FB$, that is, $F[p_F\dd q_F)$ is neither contained in nor disjoint with $F[l_F\dd r_F)$.
    When $F[p_F\dd q_F)\in \FB$ is a prefix or a suffix of $F[l_F\dd r_F)$, we jump through the free block and compute the dp values using \eqref{eq:free_right} or \eqref{eq:free_left}.

    The algorithm first finds the set $\I$ of fragments of $F$ that Klein's algorithm visits, using Lemma~\ref{lem:klein_fragments}.
    Then it loops over the fragments in $\I$ in the increasing order of their length, since each update rule requires the dp values of shorter fragments.
    We skip each fragment $F[l_F\dd r_F)\in \I$ that partially contains some free block in $\FB$.
    If $F[l_F\dd r_F)\in \I$ starts or ends with a free block in $\FB$, we use the corresponding algorithm $\FreeBlockR$ or $\FreeBlockL$ which apply \eqref{eq:free_right} or \eqref{eq:free_left} to compute the dp values.
    Otherwise, we use the update rule of \cref{alg:Klein} to compute the dp values.
    
    \subparagraph*{Correctness}
    Our algorithm maintains the following invariant for each computed $\dpt$ value:
    \[\ted^w(F[l_F\dd r_F),G[l_G\dd r_G)) \le \dpt[l_F, r_F, l_G, r_G] \le \bted{k}^w(F[l_F\dd r_F),G[l_G\dd r_G)).\]
    \cref{lem:good} guarantees that the applications of $\Klein$ satisfy this invariant.
    For the applications of $\FreeBlockR$ and $\FreeBlockL$, this follows from \cref{lem:ted_matrix_update} and its symmetric counterpart.
    Nevertheless, the $\dpt[l_F,r_F,l_G,r_G]$ values that we skipped (because $F[l_F\dd r_F)$ partially contains  a free block in $\FB$) do not satisfy the invariant, so we need to argue that they are never used.
    Our update rules contain three types of recursive calls that refer to a different fragment of $F$:
    \begin{itemize}
        \item The first one is when we remove a single character from the left or right of the fragment $F[l_F\dd r_F)$ (Lines~\ref{alg:klein:upd_left:del} and \ref{alg:klein:upd_right:del}).
        Without loss of generality, we assume that we remove $F[l_F]$.
        If we reach a skipped fragment after removing the leftmost character, then the previous fragment was beginning with a free block in $\FB$, and we were not using Klein's update rule.
        \item The second type is when Klein's algorithm matches two vertices, and makes a recursive call to their subtrees and the rest of the fragments (Lines~\ref{alg:klein:upd_left:match} and \ref{alg:klein:upd_right:match} of \cref{alg:Klein}).
        Without loss of generality, consider Line~\ref{alg:klein:upd_left:match},
        where the two recursive calls are for $F(l_F\dd m_F)$ and $F(m_F\dd r_F)$ with $m_F = \match{F}{l_F}>l_F$.
        Since we reached the call to $\protect\Klein$ in \cref{alg:free_block}, the fragment $F[l_F\dd r_F)$ does not start with, end with, or partially contain any free block.
        Thus, every free block is either disjoint with $F[l_F\dd r_F)$ or contained in $F(l_F\dd r_F-1)$.
        If a free block is disjoint with $F[l_F\dd r_F)$, it is also disjoint with both $F(l_F\dd m_F)$ and $F(m_F\dd r_F-1)$.
        If a free block is contained in $F(l_F\dd r_F-1)$ then, since it is balanced, it cannot contain $F[m_F]$ (because $F[l_F]$ is not part of any free block), so it must be contained in $F(l_F\dd m_F)$ or $F(m_F\dd r_F-1)$.
        In either case, the free block cannot be partially contained in $F(l_F\dd m_F)$ or $F(m_F\dd r_F-1)$.
        \item The third type is the recursive call behind \eqref{eq:free_right} and \eqref{eq:free_left}, which skips the free block.
        Since the free blocks in $\FB$ are disjoint, the resulting fragment does not partially contain any free block in $\FB$.
    \end{itemize}

    \subparagraph*{Running Time}
    As for the running time, we first construct the set of fragments in $\Oh(n \log n)$ time.
    Then we sort them which can be a linear-time counting sort since the length of fragments is an integer in $[0 \dd n]$.
    For each free block in $\FB$, we compute $M_R^e$ which costs $\Oh(k^3\log n)$ per free block in $\FB$.
    Next, we need to show that the number of non-skipped fragments is $\Oh(m\log n)$, and the number of calls to $\FreeBlockR$ and $\FreeBlockL$ is $\Oh(t\log n)$.
    For each vertex $v$ in $F$, let $\sz'(v)$ be the number of non-free characters in its subtree and let $\sz''(v)$ be the number of free blocks in $\FB$ that are completely contained in the subtree of $v$.
    Similar to proof of \cref{lem:klein}, we map each visited fragment to the lowest enclosing vertex $v$.
    For each vertex $v$, the number of non-skipped fragments mapped to it is at most $\sz'(v)-\sz'(\heavy_v)$.
    If we compute the summation of this value over all vertices (and the virtual root of $F$), it results to sum of $\sz'(v)$ over all light vertices, including the virtual root of $F$.
    By charging each of these $\sz'(v)$ to the non-free characters in subtree of $v$, every non-free vertex is charged for each of its light ancestors, including the virtual root of $F$,
    and since each vertex has at most $\Oh(\log n)$ light ancestors, the summation is bounded by $\Oh(m\log n)$.

    The proof for number of calls to $\FreeBlockR$ and $\FreeBlockL$ is similar.
    We first observe that for each vertex $v$ and its heavy child $\heavy_v$, there is no free block in the subtree of $v$ that contains $o(\heavy_v)$ or $c(\heavy_v)$.
    This is because, for each free block in $\FB$, there is a copy of its period to its right that is not part of any free block.
    Now for each call to $\FreeBlockR$ and $\FreeBlockL$, we map the corresponding fragment to the lowest enclosing vertex $v$.
    Similar to previous case, each vertex has $\sz''(v)-\sz''(\heavy_v)$ calls mapped to it.
    By summing up over all vertices, we can show the sum is bounded by $\Oh(t\log n)$.

    The total running time is $\Oh(n\log n)$ for computing and sorting fragments in $\I$,
    $\Oh(t\cdot k^3\log n)$ for computing $M_R^e$ for all free pairs,
    $\Oh(k^2\cdot m\log n)$ for the number of fragments that we do not skip and run Klein's update rule for,
    and $\Oh(k^3\cdot t\log n)$ for the $\Oh(t\log n)$ calls to $\FreeBlockR$ and $\FreeBlockL$, each with $\Oh(k^3)$ cost.
    Summing up all these values gives us the final running time of $\Oh(n\log n + tk^3\log n + mk^2\log n)$.  
\end{proof}

\section{Universal Kernel with Improved Periodicity Guarantees: Details}\label{app:our_kernel}

In this section, we fill in the missing details behind the proof of \cref{thm:our_kernel}.
Specifically, we prove \cref{thm:decomposition} in \cref{app:decomposition} and \cref{lem:alg:forest_reduction,lem:alg:context_reduction} in \cref{app:piece_reduction}.
Then, in \cref{app:kernel}, we derive \cref{thm:our_kernel} from these three ingredients.

\subsection{Forest Decompositions and Piece Matchings}\label{app:decomposition}
The starting point of our algorithm for~\cref{thm:decomposition} is the following decomposition of~\cite{DGHKS23}.

\begin{definition}[{\cite[Definition 3.14]{DGHKS23}}]\label{def:decomp}
    A set $\D\sub \Pc(F[i\dd j))$ is a \emph{piece decomposition} of 
    a balanced fragment $F[i\dd j)$ of a forest $F$ if it satisfies one of the following conditions:
    \begin{itemize}
        \item $\D=\emptyset$ and $i=j$;
        \item $\D = \{F[i\dd j)\}$ and $i<j$;
        \item $\D = \D_L\cup \D_R$ for some piece decompositions $\D_L$ of $F[i\dd m)$ and $\D_R$ of $F[m\dd j)$,
        where $m\in (i\dd j)$.
        \item $\D = \{\langle F[i\dd i');F[j'\dd j)\rangle\}\cup \D'$ for a context $\langle F[i\dd i');F[j'\dd j)\rangle\in \Pc(F)$
        and a piece decomposition $\D'$ of $F[i'\dd j')$.
    \end{itemize}
    \end{definition}

\begin{lemma}[{\cite[Lemma~3.15]{DGHKS23}}]\label{lem:decomp}
    There exists a linear-time algorithm that, given a forest $F$ and an integer $t\ge 2$,
    constructs a piece decomposition of $F$ consisting of at most $\max(1,\frac{6|F|}{t}-1)$ pieces of length at most $t$ each.\footnote{Each piece consists of at most $t$ characters, corresponding to at most $t/2$ nodes.}
\end{lemma}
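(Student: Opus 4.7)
The plan is a strong induction on $n = |F[i\dd j)|$, yielding a recursive algorithm that applies the rules of \cref{def:decomp}. The base case $n \le t$ returns the singleton decomposition $\{F[i\dd j)\}$, using $1 \le \max(1, 6n/t - 1)$ pieces. For $n > t$, write $\phi(n) := 6n/t - 1$. Two moves preserve the invariant: extracting a single context of size $s \ge \lceil t/6 \rceil$ (since $1 + \phi(n-s) = 6n/t - 6s/t \le \phi(n)$) or splitting horizontally into parts of sizes $a, b$ each at least $\lceil t/6 \rceil$ (since $\phi(a) + \phi(b) = 6n/t - 2 \le \phi(n)$, with direct verification in the boundary cases $a, b \le t$ where $\phi$ collapses to $\max(1,\cdot)$).

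If $F[i\dd j)$ consists of several root-level subtrees, I would accumulate their sizes from the left and cut at the first tree boundary $m$ with $m - i \ge \lceil t/6 \rceil$. This produces a valid horizontal split unless some single root-level subtree has size greater than $n - t/3$; in that dominant-subtree subcase, two horizontal splits isolate the dominant subtree (which is recursively decomposed as a single tree) while the two flanking small fragments each contribute only $1$ piece, and the total stays within the budget.

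If $F[i\dd j)$ is a single tree with root $u$, I would descend from $u$ along the heavy child at each step, stopping at the first node $v_\ell$ whose subtree has size $\le n - \lceil t/6 \rceil$. Peeling the context $\langle F[i\dd o(v_\ell));\, F[c(v_\ell)+1\dd j)\rangle$, of size $n - \sz(v_\ell) \ge \lceil t/6 \rceil$, leaves the inner subtree $F[o(v_\ell)\dd c(v_\ell)+1)$ of size $\le n - \lceil t/6 \rceil$ for recursive decomposition. The delicate subcase is a sharp size drop at step $\ell$, with $\sz(v_{\ell-1}) > n - \lceil t/6 \rceil$ but $\sz(v_\ell) < \lceil t/6 \rceil$: then $v_{\ell-1}$ must have multiple children of moderate size, so I instead peel a context only down to $v_{\ell-1}$ (still of size $\ge \lceil t/6 \rceil$) and recurse on the now multi-rooted forest of $v_{\ell-1}$'s children, which is handled by the previous case.

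The main obstacle is verifying the case analysis precisely so that the peeled context size is always at least $\lceil t/6 \rceil$ and the constraint on $a, b$ in horizontal splits is always satisfiable; the ``dominant-subtree'' and ``sharp-drop'' subcases are where the $t/6$ threshold needs careful bookkeeping. Linear running time follows from precomputing subtree sizes and a heavy-path decomposition in $\Oh(n)$ time; the recursion then does $\Oh(1)$ amortized work per character processed, by a standard heavy-path argument.
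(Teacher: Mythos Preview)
The paper does not prove this lemma; it is quoted verbatim from~\cite{DGHKS23} and used as a black box. So there is no ``paper's own proof'' to compare against, and your proposal must stand on its own.

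Your inductive accounting via $\phi(n)=6n/t-1$ is the right idea, but the single-tree case has a genuine gap: you control the \emph{lower} bound on the peeled context size but never its \emph{upper} bound. You stop at the first $v_\ell$ on the heavy path with $\sz(v_\ell)\le n-\lceil t/6\rceil$, which guarantees the context $\langle F[i\dd o(v_\ell));F[c(v_\ell)+1\dd j)\rangle$ has size at least $\lceil t/6\rceil$, but nothing prevents it from exceeding $t$. Concretely, take $n=10t$ with root $u$ having a heavy child $v_1$ of size $5t$ and light children of total size $5t-2$; then $\ell=1$, the context has size $5t>t$, and $\sz(v_1)=5t\ge\lceil t/6\rceil$, so your ``sharp-drop'' branch does not trigger. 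Your sharp-drop condition $\sz(v_\ell)<\lceil t/6\rceil$ detects only the subcase where the \emph{inner} part is too small, not where the \emph{outer} context is too large; these overlap but are not the same.

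Separately, within the sharp-drop branch you assert that the context down to $v_{\ell-1}$ is ``still of size $\ge\lceil t/6\rceil$'', but this is false by construction: $\ell$ is the first index with $\sz(v_\ell)\le n-\lceil t/6\rceil$, so $\sz(v_{\ell-1})>n-\lceil t/6\rceil$ and the context to $v_{\ell-1}$ has size strictly less than $\lceil t/6\rceil$. Peeling it would violate your own budget inequality $1+\phi(n-s)\le\phi(n)$. The correct repair is to stop instead at the \emph{last} $v_\ell$ with context size at most $t$; if that context is already $\ge\lceil t/6\rceil$ you are done, and otherwise the one-step jump from $<\lceil t/6\rceil$ to $>t$ forces $v_\ell$ to have light children of total size $>5t/6$, so you can peel the (small) context to $v_\ell$ together with a suitable prefix of those children to reach the target range, or fall back to a horizontal split among the children. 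This is where the constant $6$ earns its keep, and the case analysis needs to be written out rather than gestured at.
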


\begin{definition}
    For two forests $F$ and $G$, and a piece decomposition $\D$ of $F$, a \emph{piece matching} between $\D$ and $G$ is a matching $\M$ between $F$ and $G$ such that $\M \subseteq \D \times \Pc(G)$.
\end{definition}

The following lemma shows that, for any piece decomposition $\D$ of $F$, there exists a good-enough piece matching between $\D$ and $G$.

\begin{lemma}\label{lem:good_matching_exists}
    Let $F$ and $G$ be forests with $\ted(F,G)\le k$ and let $\DC$ be a piece decomposition of $F$.
    There exists a piece matching $\M$ between $\DC$ and $G$ such that $\M$ leaves at most $2k$ pieces of $\DC$ and at most $4k$ fragments of $G$ unmatched.
\end{lemma}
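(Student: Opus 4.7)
The plan is to construct $\M$ from an optimal witness alignment and bound both quantities via charging arguments. I would fix an optimal $\A\in\ta(F,G)$ with $\ted^w_{\A}(F,G)\le k$; since $w$ is normalized, each insert/delete/substitute edge of $\A$ costs at least $\tfrac12$, so $\A$ has at most $2k$ non-match edges. The same argument as in the proof of \cref{obs:good_is_good} shows $\width(\A)\le 2k$, so $\A\in\bta{k}(F,G)$, meeting the alignment-width requirement from the definition of a piece matching.

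Next, I would call a piece $P\in\DC$ \emph{good} if no non-match edge of $\A$ is \emph{attributable} to $P$, where attribution places each deletion or substitution at $F$-position $x$ into the unique piece whose $F$-coverage contains $x$, and each insertion at $F$-column $x$ into the piece whose $F$-coverage strictly contains $x$ (if any). For every good piece $P$, the restriction of $\A$ to $P$'s columns consists exclusively of match edges and hence aligns $P$ identically with a piece $\tilde{P}\in\Pc(G)$; I put $(P,\tilde{P})$ into $\M$. Because the $F$-coverages of pieces in $\DC$ are pairwise disjoint, every non-match edge is attributed to at most one piece, so at most $2k$ pieces of $\DC$ are bad, yielding the first bound.

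For the second bound, I would count the maximal runs of matched characters in $G$; the number of unmatched $G$-fragments exceeds this count by at most one. Two matched pieces consecutive in the $\DC$-order produce contiguous matched runs in $G$ exactly when they are adjacent in $\DC$ (no bad piece between them) and $\A$ has no insertion at their shared $F$-boundary. Any break can be charged either to a maximal bad-piece run of $\DC$ (of which there are at most $2k$, since there are at most $2k$ bad pieces) or to an insertion of $\A$ at a boundary between two good pieces (of which there are at most $2k$). Matched context pieces additionally contribute because each of them yields two matched $G$-intervals, but those extra splits can be absorbed by passing through the sub-decomposition enclosed by the context. Together with the at most two boundary fragments at the ends of $G$, this yields at most $4k$ unmatched fragments of $G$.

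The main obstacle will be the combinatorial bookkeeping for the second bound, particularly the interplay between context pieces (whose matched image in $G$ consists of two disjoint intervals) and insertions of $\A$ at piece boundaries. The cleanest route I see is to view $\M$ as partitioning the edges of $\A$ into in-$\M$ and out-of-$\M$ subpaths (match edges inside good pieces being in-$\M$) and to charge each subpath endpoint either to a non-match edge of $\A$ or to a good/bad transition between consecutive pieces of $\DC$; summing these contributions and absorbing lower-order constants will then give the claimed $4k$ bound.
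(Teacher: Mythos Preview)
Your setup and the bound of at most $2k$ unmatched pieces are essentially the paper's argument: take an optimal alignment $\A\in\bta{k}(F,G)$ with at most $2k$ non-match edges, and declare a piece bad if some non-match edge of $\A$ lands in it. So far so good.

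For the second bound, however, you are working much harder than necessary, and the charging you sketch has a real gap. You claim there are at most $2k$ ``maximal bad-piece runs'' because there are at most $2k$ bad pieces, but a bad \emph{context} piece contributes two $F$-fragments that need not be consecutive in the left-to-right order of $F$; each can sit isolated among good fragments, so the number of maximal bad runs in the linear order can be as large as $4k$, not $2k$. On top of that you still want to add up to $2k$ boundary insertions, which would overshoot $4k$. Your remark that context-induced splits are ``absorbed by passing through the sub-decomposition'' is exactly the part that needs a proof, and as stated it does not yield the claimed count.

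The paper's route is much cleaner and avoids all of this bookkeeping. Write $F=F_1\cdots F_p$ with $p\le 2|\DC|$, where each subforest piece contributes one fragment and each context piece contributes two. Using $\A$, choose for every $F$-breakpoint a corresponding $G$-breakpoint, obtaining $G=G_1\cdots G_p$ so that $\A$ aligns each $F_i$ onto $G_i$. Now a piece is matched iff all its $F_i$'s are matched perfectly to their $G_i$'s; every non-match edge lies in exactly one $(F_i,G_i)$ pair, hence spoils at most one piece, recovering the $2k$ bound. The unmatched characters of $G$ are precisely the $G_i$'s coming from unmatched pieces, of which there are at most $2\cdot 2k=4k$. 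That is the entire argument; no charging to runs, no special handling of boundary insertions or of the two halves of a context is needed.
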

\begin{proof}
    Since $\ted(F,G)\le k$, then there is an optimal alignment $\A \in \bta{k}(F,G)$ of $F$ and $G$ with at most $2k$ unmatched positions.  
    Consider a decomposition of (the string) $F=F_1\cdots F_{p}$ into at most $p\le 2|\DC|$ fragments such that each piece in $\DC$ corresponds to one or two fragments (depending on whether it is a subforest or a context).
    The alignment $\A$ yields a decomposition of (the string) $G=G_1\cdots G_p$ so that, for each $i\in [1\dd p]$, the alignment $\A$ aligns $F_i$ to $G_i$.
    For each piece $d$ of $\DC$, if $\A$ does not perfectly match the corresponding fragment(s) $F_i$ to $G_i$, we can skip $d$.
    For all other pieces, we can find a piece in $G$ that is the image of that piece under $\A$.
    The resulting pairs of pieces form a set satisfying \cref{def:matching}.
    Moreover:
    \begin{itemize}
        \item 
        The number of unmatched pieces of $\DC$ is at most $2k$ because, for each edit in $\A$, we skip at most one piece of $\DC$.
        \item
        At most $4k$ fragments $F_i$ correspond to unmatched pieces of $\DC$, and for each of these fragments, the corresponding fragment $G_i$ is also unmatched.
        Hence, the number of unmatched fragments in $G$ is at most $4k$.\qedhere
    \end{itemize}
\end{proof}

Now, we want to find a piece matching $\M$ with at most $\Oh(k)$ unmatched pieces in $\DC$ and $\Oh(k)$ unmatched fragments in $G$. 
By the previous lemma, we know that such a matching exists.
To do this, we find the matching that minimizes the number of unmatched pieces in $\DC$ plus the number of unmatched fragments in $G$.

For a piece decomposition $\DC$ of $F$ and a balanced fragment $F[i\dd j)$ such that each piece of $\DC$ is either fully contained in $F[i\dd j)$ or disjoint from it,
we denote by $\D_{i,j}$ the piece decomposition of $F[i\dd j)$ induced by $\DC$.

\begin{definition}
    Consider a piece decomposition $\DC$ of a forest $F$ and an arbitrary fragment $G[i'\dd j')$ of a forest $G$.
    Let $\M$ be a piece matching between $\DC_{i,j}$ and $G$ such that every piece in $\D_{i,j}$ is unmatched or matched with a piece contained in $G[i'\dd j')$.
    We define a cost function $\cost(\D_{i,j},G[i'\dd j'), \M)$ as the number of unmatched pieces of $\DC_{i,j}$ plus the number of unmatched fragments of $G[i'\dd j')$ when matching $\D_{i,j}$ to $G[i'\dd j')$ using $\M$. 
\end{definition}
\begin{lemma}\label{lem:dp}
    Given forests $F$ and $G$ of total size $n$, a threshold $k\in \Zz$, and a piece decomposition $\DC$ of $F$,
    one can find in $\Oh(n+|\DC|k^3)$ time a matching $\M$ between the pieces of $\DC$ and $G$ that minimizes 
    $\cost(\D_{0,|F|},G[0\dd |G|),\M)$.
\end{lemma}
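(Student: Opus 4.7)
The plan is a dynamic program over the tree structure implicit in~\cref{def:decomp}. View $\DC$ as a tree $\mathcal{T}$ whose nodes are the balanced fragments $F\fragmentco{i}{j}$ arising during the recursion: each internal node is either a split (two balanced children) or a context $\context{F\fragmentco{i}{i_0}}{F\fragmentco{j_0}{j}}$ wrapping one inner balanced child $F\fragmentco{i_0}{j_0}$, whereas each leaf is either empty or a single piece covering $F\fragmentco{i}{j}$. Building $\mathcal{T}$ takes linear time and produces $|\mathcal{T}| = \Oh(|\DC|)$ nodes. I precompute on top of $\mathcal{T}$, in $\Oh(n)$ additional time, a longest-common-extension oracle over $F\cdot \$\cdot G$; using it I list, for each piece in $\DC$, the $\Oh(k)$ shift offsets $\delta\in[-2k\dd 2k]$ at which that piece coincides with the corresponding substring of $G$.

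For each node $F\fragmentco{i}{j}$ of $\mathcal{T}$ I maintain a DP table $\dpt[p,q,a,b]$ indexed by $(p,q)$ with $|p-i|,|q-j|\le 2k$ and boundary bits $a,b\in\{0,1\}$ flagging whether the first and last character of $G\fragmentco{p}{q}$ belong to unmatched runs. The value is $\cost(\D_{i,j},G\fragmentco{p}{q},\M)$ minimized over matchings $\M\sub \D_{i,j}\times\Pc(G\fragmentco{p}{q})$ of width at most $2k$ consistent with $(a,b)$. The width restriction inherited from~\cref{def:matching} forces $(p,q)$ into an $\Oh(k^2)$-sized window at every node, giving $\Oh(|\DC|k^2)$ states in total. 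The sought answer is $\min_{a,b}\dpt[0,|F|,0,|G|,a,b]$ at the root of~$\mathcal{T}$.

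The transitions follow the four $\mathcal{T}$-cases. The empty and singleton leaves are handled by trying at most $\Oh(k)$ placements of the (optionally matched) piece inside $G\fragmentco{p}{q}$, the match positions being read off from the preprocessing. The split case at $m$ combines the two children's tables over $m'\in[m-2k\dd m+2k]\cap[p\dd q]$, subtracting one whenever the right boundary bit of the left child coincides with the left boundary bit of the right child at the value $1$ (so as to merge the two adjacent unmatched runs into one). The context case either skips matching the context---incurring cost $+1$ and looking the inner table up with the window $G\fragmentco{p}{q}$---or matches the two arms to $\context{G\fragmentco{p'}{u}}{G\fragmentco{v}{q'}}$ subject to the equalities $u-p'=i_0-i$ and $q'-v=j-j_0$ and the feasibility offsets from the preprocessing; this determines $(u,v)$ from $(p',q')$ and reduces to the inner DP value at $(i_0,j_0,u,v,\cdot,\cdot)$, augmented by $[p'>p]+[q'<q]$.

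The main obstacle is keeping the context transition down to $\Oh(k)$ per outer state. A priori, the two arm offsets $p'$ and $q'$ each range over $\Oh(k)$ candidates, so a naive minimization spends $\Oh(k^2)$ per outer state and inflates the total to $\Oh(|\DC|k^4)$. The fix is to exploit that, once the inner boundary bits $(a',b')$ are fixed, the left-arm and right-arm contributions are independent functions of $p'$ and $q'$ respectively: for every inner-left endpoint $u$ one precomputes the best left-arm contribution, and symmetrically for every inner-right endpoint $v$; assembling the outer entry then reduces to a single $\Oh(k)$-sized sweep against the inner table for each of the four $(a,b)$ combinations. Summed over $\Oh(|\DC|k^2)$ states at $\Oh(k)$ work each, the DP runs in $\Oh(|\DC|k^3)$ time, which together with the $\Oh(n)$ preprocessing yields the claimed bound.
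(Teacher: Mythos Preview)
Your overall plan is the same as the paper's: a DP over the recursion tree of \cref{def:decomp}, with states $(\D_{i,j},\,G\fragmentco{p}{q},\,a,\,b)$ where $|p-i|,|q-j|\le 2k$ and $a,b$ are boundary flags, giving $\Oh(|\DC|k^2)$ states. The paper's transitions differ from yours in one crucial way: it includes explicit one-character \emph{trimming} moves (shrink $G\fragmentco{i'}{j'}$ to $G\fragmentco{i'+1}{j'}$ or $G\fragmentco{i'}{j'-1}$ when the corresponding flag is \false), and then matches a piece only at the \emph{exact} current boundaries $(i',j')$. With trimming, every transition costs $\Oh(1)$ except the split, which is $\Oh(k)$; the context case is $\Oh(1)$ because the match, if attempted, is at $(i',j')$ with no iteration over placements.

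Your context transition instead enumerates placements $(p',q')$ directly, and your proposed fix does not work as stated. You write that once the inner flags are fixed, the left-arm and right-arm contributions are independent functions of $p'$ and $q'$; that is true of the terms $[p'>p]$ and $[q'<q]$, but the remaining summand $\dpt_{\text{inner}}[u(p'),v(q'),a',b']$ couples $u$ and $v$, so the minimization over $(p',q')$ is genuinely two-dimensional. Precomputing ``the best left-arm contribution for each $u$'' does not let you sweep the inner table in $\Oh(k)$, because for each candidate $v$ the optimal $u$ may differ. One can salvage the bound by processing all $\Oh(k^2)$ outer states at a context node together in $\Oh(k^3)$ total (first tabulate $A(v,p)=\min_{p'}\bigl(\dpt_{\text{inner}}[u(p'),v]+[p'>p]\bigr)$ in $\Oh(k^3)$, then sweep over $q'$), but that is not what you wrote, and the paper's trimming trick avoids the issue entirely.

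A smaller gap: in your ``skip the context'' branch you look up the inner table at the window $G\fragmentco{p}{q}$, but if either arm of the context has length exceeding $4k$, the point $(p,q)$ lies outside the inner table's $\Oh(k)$-window around $(i_0,j_0)$. The paper fixes this by deterministically trimming to $\bigl(\max(i_0-2k,p),\,\min(j_0+2k,q)\bigr)$ before the lookup, paying for any newly created unmatched runs via the flags.
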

\SetKwFunction{Pair}{Pairs}
\newcommand{\Ss}{\mathcal{S}}
\newcommand{\false}{\texttt{0}\xspace}
\newcommand{\true}{\texttt{1}\xspace}
\begin{algorithm}
    \SetInd{0.3em}{0.3em}
    \caption{Compute an optimal matching $\M$ of $D_{i,j}$ to $G[i'\dd j')$ with respect to $\cost$.}\label{alg:dp}
    \Fn{$\Pair(\D_{i,j},G[i'\dd j'), f_l, f_r)$}{
        $result \gets \infty$\;
        \If{$f_l=\false$ \KwSty{and} $f_r=\false$}{\label{alg:dp:case_no_match}
            $result \mgets |\D_{i,j}| + 1$\;
        }
        \If{$i' < \min(j'-1,i+2k)$ \KwSty{and} $f_l=\false$}{\label{alg:dp:case_unmatched_left}
            $result \mgets \Pair(D_{i,j}, G[i'+1\dd j'), \false, f_r)$\;
            $result \mgets \Pair(D_{i,j}, G[i'+1\dd j'), \true, f_r) + 1$\;
        }
        \If{$j' > \max(i'+1,j-2k)$ \KwSty{and} $f_r=\false$}{\label{alg:dp:case_unmatched_right}
            $result \mgets \Pair(D_{i,j}, G[i'\dd j'-1), f_l, \false)$\;
            $result \mgets \Pair(D_{i,j}, G[i'\dd j'-1), f_l, \true) + 1$\;
        }
        \If{$\D_{i,j}=\{F[i\dd j)\}$ \KwSty{and} $F[i\dd j)=G[i'\dd j')$ \KwSty{ and } $f_l=f_r=\true$}{\label{alg:dp:case_full_match}
            $result \mgets 0$\;
        }
        \If{$\D_{i,j}=\D_{i,m}\cup \D_{m,j}$ for some $m\in (i\dd j)$}{\label{alg:dp:case_divide}
            \ForEach{$m'\in [m-2k\dd m+2k]\cap [i'\dd j']$}{
                \If{$m' = i'$}{\label{alg:dp:case_divide:empty_left}
                    $result \mgets  |\D_{i, m}| + \Pair(\D_{m, j}, G[i'\dd j'), f_l, f_r)$\;
                }
                \ElseIf{$m' = j'$}{\label{alg:dp:case_divide:empty_right}
                    $result \mgets \Pair(\D_{i, m}, G[i'\dd j'), f_l, f_r) + |\D_{m, j}|$\;
                }
                \Else{\label{alg:dp:case_divide:general}
                    $result \mgets \Pair(\D_{i,m}, G[i'\dd m'), f_l, \false) + \Pair(\D_{m,j}, G[m'\dd j'), \false, f_r) - 1$\;\label{alg:dp:case_divide:general:both_flags_false}
                    $result \mgets \Pair(\D_{i,m}, G[i'\dd m'), f_l, \false) + \Pair(\D_{m,j}, G[m'\dd j'), \true, f_r)$\;
                    $result \mgets \Pair(\D_{i,m}, G[i'\dd m'), f_l, \true) + \Pair(\D_{m,j}, G[m'\dd j'), \false, f_r)$\;
                    $result \mgets \Pair(\D_{i,m}, G[i'\dd m'), f_l, \true) + \Pair(\D_{m,j}, G[m'\dd j'), \true, f_r)$\;
                }
            }
        }
        \If{$\D_{i,j}=\{\langle F[i\dd i+\ell); F[j-r\dd j)\rangle\}\cup \D_{i+\ell,j-r}$}{\label{alg:dp:case_context}
            \If{$F[i\dd i+\ell)=G[i'\dd i'+\ell)$ \KwSty{and} $F[j-r\dd j)=G[j'-r\dd j')$ \KwSty{and} $G[i'+\ell\dd j'-r)$ is balanced \KwSty{and} $f_l=f_r=\true$}{\label{alg:dp:case_context:match}
                \lIf{$i'+\ell=j'-r$}{
                    $result \mgets |\D_{i+\ell,j-r}|$
                }
                \Else{%
                    \ForEach{$(f_l', f_r') \in \{\false, \true\}\times \{\false,\true\}$}{
                        $result \mgets \Pair(\D_{i+\ell,j-r}, G[i'+\ell\dd j'-r), f_l', f_r')$\;
                    }
                }
            }
            $i'' \gets \max(i+\ell-2k,i')$; \ \ $j'' \gets \min(j-r+2k,j')$\;\label{alg:dp:case_context:unmatched}
            $S_l \gets \{f_l\}$; \ \ $S_r \gets \{f_r\}$\;
            \If{$i' < i''$}{
                \lIf{$f_l=\false$}{$S_l \gets \{\false, \true\}$}
                \lElse{$S_l \gets \varnothing$}
            }
            \If{$j'' < j'$}{
                \lIf{$f_r=\false$}{$S_r \gets \{\false, \true\}$}
                \lElse{$S_r \gets \varnothing$}
            }
            \If{$i'' < j''$}{
                \ForEach{$(f_l',f_r') \in S_l\times S_r$}{
                        $result \mgets \Pair(\D_{i+\ell,j-r}, G[i''\dd j''), f_l', f_r') + \mathbf{1}[f_l <
                         f_l'] + \mathbf{1}[f_r < f_r'] + 1$\;
                }
            }
        }
        \KwRet{$result$} \\
    }
\end{algorithm}
\begin{proof}
    We modify \cite[Algorithm 5]{DGHKS23} to compute the new cost function.
    We define the dynamic programming to compute the minimum cost matching between a piece decomposition $\D_{i,j}$ and a non-empty fragment $G[i'\dd j')$.
    We also add two boolean flags $f_l$ and $f_r$ to indicate whether the leftmost and rightmost characters of $G[i'\dd j')$ are matched or not.
    If either of the flags is \false, then the corresponding endpoint is unmatched; otherwise, it is matched.

    \Cref{alg:dp} describes our dynamic programming algorithm as a recursive procedure \Pair($\D_{i,j},G[i'\dd j'), f_l, f_r$) that computes the minimum cost of a matching between the piece decomposition $\D_{i,j}$ of $F$ and the fragment $G[i'\dd j')$,
    with the flags forcing the leftmost and rightmost characters of $G[i'\dd j')$ to be matched or unmatched.
    We also assume $i' \in [i-2k\dd i+2k]$ and $j' \in [j-2k\dd j+2k]$.
    The algorithm consists of the following cases:
    \begin{enumerate}
        \item\label{it:case:empty} If both flags are set to \false, we can leave all pieces unmatched and return the number of pieces in $\D_{i,j}$, plus one for the unmatched fragment $G[i'\dd j')$.
        This case is covered in line \ref{alg:dp:case_no_match}.
        \item\label{it:case:triml} If $f_l$ is set to \false and $i' < \min(j'-1,i+2k)$, then we can leave $G[i']$ unmatched and match $G[i'+1\dd j')$ to $\D_{i,j}$.
        This case is covered in line~\ref{alg:dp:case_unmatched_left}.
        \item\label{it:case:trimr} If $f_r$ is set to \false and $j' > \max(i'+1,j-2k)$, then we can leave $G[j'-1]$ unmatched and match $G[i'\dd j'-1)$ to $\D_{i,j}$.
        This case is covered in line~\ref{alg:dp:case_unmatched_right}.
        \item\label{it:case:single} If $\D_{i,j}$ consists of a single forest $F[i\dd j)$ identical to $G[i'\dd j')$, and both flags are set to \true, then we can match $F[i\dd j)$ to $G[i'\dd j')$ and return zero.
        This case is covered in line~\ref{alg:dp:case_full_match}.
        \item\label{it:case:split} If $\D_{i,j}$ consists of two separate decompositions $\D_{i,m}$ and $\D_{m,j}$ split at some position $m$, then we can fix a position $m'$ by a distance of at most $2k$ from $m$, split $G[i'\dd j')$ at that position,
        and then recursively match $\D_{i,m}$ with $G[i'\dd m')$ and $\D_{m,j}$ with $G[m'\dd j')$.
        This case is covered in line~\ref{alg:dp:case_divide}.
        Since our recursion disallows empty fragments of $G$, we have two special cases when $G[i'\dd m')$ or $G[m'\dd j')$ is empty; those cases are covered in lines~\ref{alg:dp:case_divide:empty_left} and \ref{alg:dp:case_divide:empty_right}.
        The general case where both pieces are non-empty is covered in line~\ref{alg:dp:case_divide:general}.
        Here, we iterate over the choices of the right flag for the left piece and the left flag for the right piece.
        If both new flags are \false, then the unmatched suffix of $G[i'\dd m')$ gets merged with the unmatched prefix of $G[m'\dd j')$, so we subtract 1 from the sum of the two costs, as seen in line~\ref{alg:dp:case_divide:general:both_flags_false}.
        \item\label{it:case:context} If the previous cases do not apply, then $\D_{i,j}$ is wrapped in a context $\context{F[i\dd i+\ell)}{\allowbreak F[j-r\dd j)}$.
        In this case, we either match the context or skip it.
        If $\context{G[i'\dd j'+\ell)}{G[j'-r\dd j')}$ is a matching context in $G$ and both flags are set to \true, then we can match the context and recursively matching $\D_{i+\ell,j-r}$ with $G[i'+\ell\dd j'-r)$.
        This case is covered in line~\ref{alg:dp:case_context:match}.
        In case we decide not to match the context, we should first trim $G[i'\dd j')$ from both endpoints to $G[i''\dd j'')$  so that the new endpoints are of distance at most $2k$ from the new endpoints of the piece decomposition $\D_{i+\ell,j-r}$.
        This case is covered from line~\ref{alg:dp:case_context:unmatched}.
        Here, we define two sets $S_l$ and $S_r$ which contain the possible values for the new left and right flags, denoted $f_l'$ and $f_r'$.
        If any of the endpoints of $G'$ is to be trimmed, then the corresponding original flag should have been set to \false, whereas the new flag can have any value.
        After trimming, we should match $D_{i+\ell,j-r}$ to $G[i''\dd j'')$,
        and we should pay 1 for the unmatched piece, plus 1 for each endpoint with a new unmatched fragment, i.e., with the original flag set to \false and the new flag set to \true.
    \end{enumerate}
    The case analysis above proves that the computed value can be realized as the cost of some matching $\M$.
    Next, we prove that any matching $\M$ between $\D_{i,j}$ and $G[i'\dd j')$ can be obtained by one of the cases above.
    This can be achieved by induction on $(j-i) + (j'-i')$.
    \begin{enumerate}[label=(\alph*)]
        \item If $\M = \emptyset$, then it is covered in case \ref{it:case:empty}. We henceforth assume $\M \ne \emptyset$.
        \item If $|\D_{i,j}|=1$, then $\M$ matches $F[i\dd j)$---the only piece of $\D_{i,j}$---to a fragment $G[i''\dd j'')$ contained in $G[i'\dd j')$. This matching is created by $i''-i'$ applications of case \ref{it:case:triml}, $j'-j''$ applications of case \ref{it:case:trimr}, and finally an application of 
        case \ref{it:case:single}.
        Note that $|i-i''|\le 2k$ and $|j-j''|\le 2k$, so the applications of cases~\ref{it:case:triml}--\ref{it:case:trimr} are valid.
        \item If $\D$ contains a context $\context{F[i\dd i+\ell)}{F[j-r\dd j)}$, then $\M$ may match the context or not.
        If $\M$ matches the context to $\context{G[i''\dd i''+\ell)}{G[j''-r\dd j'')}$, then this is handled by $i''-i'$ applications of case \ref{it:case:triml}, $j'-j''$ applications of case \ref{it:case:trimr}, and finally an application of case \ref{it:case:context}.
        Note that $|i-i''|\le 2k$ and $|j-j''|\le 2k$, so the applications of cases~\ref{it:case:triml}--\ref{it:case:trimr} are valid.
        If $\M$ does not match $\context{F[i\dd i+\ell)}{F[j-r\dd j)}$, then it may only match the remaining pieces of $\D_{i,j}$ to pieces of $G$ contained within $G[\max(i',i+\ell-2k)\dd \min(j',j-r+2k))$. 
        Thus, $\M$ is handled directly in case \ref{it:case:context}.
        \item Otherwise, $\D_{i,j}$ can be decomposed as $\D_{i,j}=\D_{i,m}\cup \D_{m,j}$.
        In that case, $\M$ can be represented as a union of two matchings $\M_L$ and $\M_R$ from $\D_{i,m}$ and $\D_{m,j}$ to disjoint fragments of $G[i'\dd j')$.
        Since all pieced of $\M_L$ are to the left of elements of $\M_R$, projected on $\D$, then their image on $G$ satisfies the same property.
        That is, $G[i'\dd j')$ can be split to two fragments $G[i'\dd m')$ and $G[m'\dd j')$ such that $\M_L$ is a matching between $\D_{i,m}$ and $G[i'\dd m')$, and $\M_R$ is a matching between $\D_{m,j}$ and $G[m'\dd j')$.
        Also, since the width of the matching is at most $2k$, then the position $m'$ of the split can be chosen within $[m-2k\dd m+2k]$.
        Therefore, this $\M$ can be created in case \ref{it:case:split}, where we fix the position of $m'$ and recursively compute $\M_L$ and $\M_R$.
    \end{enumerate}

    As for the running time of the algorithm, a simple induction shows that at most $|\D_{i,j}|$ different piece decompositions are visited in the recursion tree.
    For each unique $\D_{i,j}$, there are $\Oh(k^2)$ possible fragments $G[i'\dd j')$ and at most $4$ different values of $f_l$ and $f_r$.
    Therefore, the number of unique calls to \Pair is $\Oh(|\D|k^2)$.
    For the efficient implementation of the algorithm, we can use a memoization table to store the results of the recursive calls.
    Each call performs $\Oh(k)$ instructions and each instruction can be implemented in $\Oh(1)$ time after linear-time preprocessing~\cite[Theorem 2.1 and Fact 3.1]{DGHKS23}, needed to efficiently check whether fragments of $F$ match fragments of $G$ and whether fragments of $F$ and $G$ are balanced.
    Including the necessary preprocessing, the overall runtime is $\Oh(n+|\D|k^3)$.
\end{proof}

Let us first run the algorithm of Lemma~\ref{lem:decomp} with $t=k^3$ to get a piece decomposition $\DC$ of $F$ with $\Oh(\frac{|F|}{k^3})$ pieces of length at most $k^3$ each.
Then we can run the algorithm of Lemma~\ref{lem:dp} on $\DC$ and $G$ in $\Oh(n+\frac{|F|}{k^3}k^3) = \Oh(n)$ time to get a minimum-cost piece matching between $\DC$ and $G$.
By \cref{lem:good_matching_exists}, the cost of this matching does not exceed $6k$.
In particular, there are at most $6k$ unmatched pieces of $\DC$, and the unmatched characters of $G$ form at most $6k$ fragments.
Each piece of $\DC$ is of length at most $k^3$, so we have at most $6k^4$ unmatched characters of $F$.
But we are not done yet, because there are too many pieces in the matching, and we need to reduce their number to $\Oh(k)$.

\begin{definition}
    Consider forests $F$ and $G$ and a matching $\M$ of $F$ and $G$.
    We say that a piece $f\in \Pc(F)$ is \emph{clean} with respect to $\M$ if there is a piece $g\in \Pc(G)$ with $(f,g)\in \M$,
    there is no unmatched character in $F$ adjacent to $f$,
    and there is no unmatched character in $G$ that is adjacent to $g$.
    If a piece of $F$ is not clean, we call it \emph{dirty} with respect to $\M$.
\end{definition}

Our strategy is to repeatedly merge \emph{adjacent} clean pieces of the decomposition $\DC$.

\begin{definition}\label{def:adjacent}
Two pieces $f,f'$ of a forest $F$ are \emph{adjacent} if one of the following holds:
\begin{enumerate}
    \item $f=F[l\dd r)$ and $f'=F[l'\dd r')$ are both subforests with $r=l'$; in this case, we define $f\cup f'$ to be a subforest $F[l\dd r')$.
    \item $f = \context{F[l_L\dd r_L)}{F[l_R\dd r_R)}$ is a context and $f'=F[l'\dd r')$ is a subforest with $r_L=l'$ or $l_R=r'$; in this case, we define $f\cup f'$ to be a forest $F[l_L\dd r_R)$ (if both $r_L=l'$ and $l_R=r'$), a context $\context{F[l_L\dd r')}{F[l_R\dd r_R)}$ (if just $r_L=l'$), or a context $\context{F[l_L\dd r_L)}{F[l'\dd r_R)}$ (if just $l_R=r'$).
    \item $f = \context{F[l_L\dd r_L)}{F[l_R\dd r_R)}$ and $f' = \context{F[l'_L\dd r'_L)}{F[l'_R\dd r'_R)}$ are both contexts with $r_L=l'_L$ and $l_R=r'_R$;
    in this case, we define $f\cup f'$ to be a context $\context{F[l_L\dd r'_L)}{F[l'_R\dd r_R)}$.
\end{enumerate}
\end{definition}

We represent $\DC$ using the following auxiliary forest $\HC_\DC$ so that adjacent clean pieces can be identified efficiently.

\begin{definition}
    For a piece decomposition $\DC$ of a forest $F$, we define a forest $\HC_\DC$, called the \emph{hierarchy of $\DC$}, by collapsing each piece of $\DC$ to a single node whose label indicates the underlying piece.
    In other words, each subforest $f=F[l\dd r)\in \DC$ is collapsed to $\op_f \cl_f$;
    for each context $f = \context{F[l_L\dd r_L)}{F[l_R\dd r_R)}$, we collapse $F[l_L\dd r_L)$ to $\op_f$
    and $F[l_R\dd r_R)$ to $\cl_f$.
\end{definition}

The following simple observation characterizes adjacent pieces and describes the effect of merging two adjacent clean pieces.
\begin{observation}\label{obs:merge}
    Consider a piece decomposition $\DC$ of a forest $F$. Pieces $f,f'\in \DC$ are adjacent if and only if the corresponding nodes $v,v'$ of $\HC_\DC$ satisfy one of the following:
    \begin{enumerate}
        \item $v'$ is the sibling immediately to the right of $v$, and $v,v'$ are both leaves;
        \item $v'$ is the leftmost or the rightmost child of $v$, and $v'$ is a leaf.
        \item $v'$ is the only child of $v$, and $v'$ is \emph{not} a leaf.
    \end{enumerate}
    In all these cases, $\DC':=\DC \setminus\{f,f'\}\cup \{f\cup f'\}$ is a piece decomposition of $F$.

    Additionally, consider a piece matching $\M$ between $\DC$ and a forest $G$.
    If $f,f'$ are both clean with respect to $\M$, then there exist adjacent pieces $g,g'\in \Pc(G)$ such that $(f,g),(f',g')\in \M$.
    Moreover, $\M':=\M\setminus\{(f,g),(f',g')\} \cup \{(f\cup f', g\cup g')\}$ is a piece matching between $\DC'$ and~$G$.
    Furthermore, $f\cup f'$ is clean with respect to $\M'$, and every piece in $\DC \setminus\{f,f'\}$ that is clean with respect to $\M$ remains clean with respect to $\M'$.
\end{observation}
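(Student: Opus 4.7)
The plan is to split the observation into a purely combinatorial part (characterization of adjacency in $\HC_\DC$ together with the assertion that $\DC'$ is a piece decomposition) and a ``matching'' part (existence and cleanness of the merged pair, and preservation of cleanness of every other piece). The two parts are logically independent and can be handled in sequence.

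For the combinatorial part, I would perform a case analysis driven by \cref{def:adjacent} and verify a bijective correspondence with the three scenarios listed in the observation. By construction of $\HC_\DC$, each subforest of $\DC$ becomes a leaf while each context becomes an internal node whose subtree records the piece decomposition of its interior. Consequently: (i)~two adjacent subforests become leaf siblings with one immediately to the left of the other; (ii)~a context adjacent to an interior subforest becomes an internal node whose leftmost or rightmost child is a leaf; and (iii)~two nested adjacent contexts become an internal node with a single, non-leaf child. Both directions of each correspondence are immediate from the construction. In each case I verify that $\DC' = \DC \sm \{f,f'\} \cup \{f\cup f'\}$ satisfies \cref{def:decomp}: cases (i) and (ii) follow from the concatenation rule (after possibly redefining the decomposition of a context interior), and case~(iii) uses the fact that the composition of two contexts is again a context, so two nested contexts in $\DC$ can be replaced by their composition.

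For the matching part, fix any forest alignment $\A \in \bta{k}(F,G)$ witnessing $\M$ in the sense of \cref{def:matching}. Cleanness of $f$ and $f'$ directly yields pieces $g,g'\in\Pc(G)$ with $(f,g),(f',g')\in\M$; the crux is to verify that $g$ and $g'$ are themselves adjacent in the same mode as $f,f'$. I would argue that any ``gap'' in $G$ between an inner endpoint of $g$ and the corresponding inner endpoint of $g'$ must consist of characters that $\A$ inserts; each such character would be unmatched in $G$ and adjacent to $g$ or $g'$, contradicting cleanness. The consistency condition of \cref{def:ta} further forces a context in $F$ to be matched by a context in $G$ (paired brackets go to paired brackets), so the adjacency mode transfers from $F$ to $G$. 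Once this holds, $g\cup g'\in\Pc(G)$ is well defined and $\A$ still matches $f\cup f'$ perfectly to $g\cup g'$, so $\M'$ is a piece matching for $\DC'$. Cleanness of $f\cup f'$ under $\M'$ then follows because the outer-boundary positions of $f\cup f'$ and $g\cup g'$ are subsets of the outer-boundary positions of $f,f',g,g'$, each already covered by the cleanness hypotheses, while the ``inner'' boundaries between $f$ and $f'$ (resp.\ $g$ and $g'$) have ceased to be boundaries at all. Cleanness of any other $h\in\DC\sm\{f,f'\}$ is preserved because the set of unmatched characters in $F$ (and in $G$) is identical under $\M$ and $\M'$, since $(f\cup f',g\cup g')$ covers exactly the union of what $(f,g)$ and $(f',g')$ covered.

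The main obstacle I expect will be case~(iii), where two nested contexts are merged. Here I must unpack the consistency condition of \cref{def:ta} carefully to see that \emph{all four} interior bracket endpoints of $g$ abut the corresponding endpoints of $g'$ in $G$; the other two cases each require only one or two such abutment arguments. Once this bookkeeping is done, the remainder is routine verification.
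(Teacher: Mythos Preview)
The paper states this result as an observation without proof, so there is no argument to compare against; your plan correctly fills in the details. The case analysis linking \cref{def:adjacent} to the three node configurations in $\HC_\DC$ is sound, and your ``gap'' argument for the matching part is exactly right: any character of $G$ strictly between the images $g$ and $g'$ would have to be inserted by the witnessing alignment $\A$ (by monotonicity, since $f$ and $f'$ are contiguous in $F$), hence unmatched in $\M$ and adjacent to $g$, contradicting cleanness.
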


\begin{lemma}\label{lem:merge}
    Consider forests $F$ and $G$, a piece decomposition $\DC$ of $F$, and a piece matching $\M$ between $\DC$ and $G$.
    If $\DC$ contains $t$ dirty pieces with respect to $\M$, then one can repeatedly apply the merges described in \cref{obs:merge} to derive a piece decomposition $\DC'$ of $F$ with at most $5t+1$ pieces in total and a piece matching $\M'$ between $\DC'$ and $G$.
    The underlying transformation can be implemented in $\Oh(n)$ time.
\end{lemma}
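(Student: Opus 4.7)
The plan is to apply the merges of \cref{obs:merge} exhaustively, in arbitrary order, yielding $\DC'$ and the induced matching $\M'$. \cref{obs:merge} already guarantees that each merge preserves the piece-decomposition and piece-matching structure, and a short verification shows that merging two clean pieces produces a clean piece (the boundary of $f \cup f'$ is contained in the boundaries of $f$ and $f'$, so no new unmatched characters of $F$ or $G$ become adjacent to the new piece). Consequently the number of dirty pieces stays at most $t$ throughout, and all that remains is to bound the number of clean pieces in $\DC'$.

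The core of the argument is to show $|\DC'| \le 5t+1$, i.e., at most $4t+1$ clean pieces remain. I would work in the hierarchy $\HC_{\DC'}$ and exploit the following invariants enforced by exhaustive merging: (a)~by Rules~2 and~3, the unique child of any clean internal node with exactly one child must be dirty; (b)~by Rule~2, if a clean internal node has at least two children, then neither its leftmost nor its rightmost child is a clean leaf; and (c)~by Rule~1, no two immediate sibling leaves are both clean. Guided by these invariants, I would design a charging scheme that sends each clean piece to a nearby dirty piece in $\HC_{\DC'}$---its unique dirty child (for case~(a)), a dirty leaf sibling immediately to one side (for case~(c)), or its dirty parent---and verify by case analysis that each dirty piece absorbs charge at most four times (up to two from sibling leaves on either side, one from the parent through rule (a), and one from a leftmost/rightmost child through rule (b)). The ``$+1$'' absorbs a single clean piece that may sit atop a root-level clean subtree with no dirty ancestor.

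For the $\Oh(n)$-time implementation, I would construct $\HC_{\DC}$ in $\Oh(n)$ time (its size is $|\DC| \le n$), precompute for each piece whether it is clean via a constant-time check against $\M$ and the unmatched characters of $F$ and $G$ (all accessible after linear-time preprocessing), and represent $\HC_{\DC}$ with doubly-linked child lists and parent pointers so that the applicability of Rules~1--3 to any pair of neighbours can be tested in $\Oh(1)$. A worklist of merge-eligible pairs is initialised by one pass over $\HC_{\DC}$; each merge executes in $\Oh(1)$ time and enables at most $\Oh(1)$ new candidates (involving the merged piece and its new neighbours). Since each merge strictly decreases $|\DC|$, the total number of merges and tests is $\Oh(|\DC|) = \Oh(n)$.

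The main obstacle I anticipate is the charging argument for the bound $5t+1$: invariants (a)--(c) do not immediately eliminate ``middle'' clean leaves flanked by clean non-leaf siblings, nor clean leaves whose nearest dirty piece lies several levels away in the hierarchy. Ruling these out for the charging requires tracing along sibling chains or downward into a clean subtree to locate a dirty anchor, and then verifying---by a somewhat tedious case analysis over the possible local configurations of clean/dirty and leaf/internal---that no dirty piece is charged more than four times. The constant $5$ in the final bound is earned by precisely this worst case.
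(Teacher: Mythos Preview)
Your approach---merge exhaustively in arbitrary order, then bound the resulting configuration---is genuinely different from the paper's, which performs a single bottom-up post-order pass over $\HC_\DC$ and maintains, as it goes, the invariant that each subtree with $d$ dirty nodes is shrunk to at most $\max(1,5d-2)$ nodes if its root is clean and at most $5d-3$ if its root is dirty. The paper never needs a charging argument: the invariant is re-established at each internal node by first merging runs of clean leaf-children and then (if the node is clean) absorbing the extreme clean children into the node itself; a three-case analysis on the number $q$ of children with dirty subtrees finishes the induction.

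Your invariants (a)--(c) are exactly the right residue of exhaustive merging, and in fact they \emph{do} suffice for the bound $5t+1$---but not easily via the charging you sketch. The problematic configurations you flag (a clean leaf sandwiched between clean internal siblings, or a clean node whose nearest dirty piece is several levels down) are real, and routing their charge to a distant dirty anchor while keeping every dirty piece's load at most four requires a global argument you have not supplied. A much cleaner completion of your approach is to abandon charging and instead prove the paper's subtree invariant directly on the final hierarchy using only (a)--(c): for a clean node with $r\ge 2$ children, (b) and (c) force the number $a$ of clean-leaf children to satisfy $a\le b+c-1$ (where $b,c$ count clean-internal and dirty children), and this is exactly the slack needed for the induction to close; dirty nodes need only (c), and the root level needs only (c) plus the ``$+1$''. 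So your plan is salvageable, but the paper's structured traversal gets there with less case analysis and no need to reason about arbitrary merge orders.
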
 \begin{proof}
    The algorithm gradually transforms $\DC$ and $\M$ while performing a recursive traversal of the hierarchy $H_\DC$.
    We maintain the following invariant for every node $v$ of $H_\DC$: if the subtree of $v$ contains $d_v$ dirty nodes, then it is shrunk to at most $\max(1,5d_v-2)$ nodes if $v$ is clean, and at most $5d_v-3$ nodes if $v$ is dirty.
  
    If $v$ is a leaf, we do not perform any action. The invariant is satisfied since the claimed bounds on the subtree of $v$ are all at least $1$.

    Next, suppose that $v_1,\ldots, v_r$ are all the children of a node $v$ or all the roots of $\HC_\DC$.
    Suppose that, for each $i\in [1\dd r]$, the subtree of $v_i$ contains $d_i$ dirty nodes and,
    after processing, has been shrunk to size $s_i \le \max(1, 5d_i-2)$.
    Moreover, let $q$ denote the number of \emph{dirty subtrees} with $d_i>0$, and let $d$ be the total number of dirty nodes in these subtrees.
    After the recursive transformations, each dirty subtree has at most $5d_i-2$ nodes, for a total of at most $5d-2q$ nodes.
    The remaining \emph{clean subtrees} are each of size one. 
    Since \cref{obs:merge} allows merging clean subsequent leaves, we exhaustively merge subsequent clean subtrees; this reduces the number of clean subtrees to $q+1$ and the total number of nodes to $5d-2q+q+1 = 5d-q+1 \le 5d+1$.
    In particular, if $v_1,\ldots, v_r$ are the roots of $\HC_\DC$, then the obtained decomposition consists of at most $5d+1=5t+1$ pieces.

    Otherwise, $v_1,\ldots, v_r$ are the children of a node $v$ representing a context.
    If $v$ is dirty, then the subtree of $v$ is already of size at most $5d+1+1=5d_v-3$.
    When $v$ is clean, we merge it with its clean children as allowed by \cref{obs:merge}.
    We consider three cases:
    \begin{description}
        \item[$q=0$.] After merging clean subtrees, the node $v$ has exactly one child $v'$, which is a leaf. The nodes $v$ and $v'$ are merged, resulting in $1$ node left in the subtree of $v$.
        \item[$q=1$.]  After merging clean subtrees, the node $v$ has exactly one child $v'$ with a dirty subtree and at most two children with clean subtrees (one to the left and one to the right of $v'$). 
        The clean subtrees are merged with $v$ so that $v'$ becomes the only child of $v$. 
        If $v'$ is clean, it is also merged with $v$.
        In this case, the subtree of $v'$ is of size at most $5d-2$ and, after all the merges, the same bound holds for the subtree of $v$.
        If $v'$ is dirty, then its subtree is of size at most $5d-3$, whereas subtree of $v$ is of size at most $5d-2$.
        \item[$q\ge 2$.] In this case, the node $v$ has exactly $q$ children with dirty subtrees. After merging clean subtrees, there are at most $q-1$ clean subtrees in between dirty subtrees. 
        The remaining at most two clean subtrees (corresponding to the leftmost and the rightmost children of $v$) are merged with $v$, so the total number of nodes in the subtree of $v$ does not exceed $5d-2q + (q-1)+1 \le 5d-2q+2 \le 5d-2$. 
    \end{description}
    
    An efficient implementation of the algorithm described and analyzed above requires constructing the hierarchy $\HC_\DC$, partitioning the nodes among clean and dirty, and a post-order traversal of $\HC_\DC$.
    The first two steps are easy to implement in $\Oh(n)$ time, whereas the last one takes $\Oh(|\DC|)=\Oh(n)$ time.
\end{proof}


We are now ready to complete the proof of \cref{thm:decomposition}, restated below for convenience.

\thmdecomposition*
\begin{proof}
    If $\big||F|-|G|\big|>2k$, we report that $\ted(F,G)>k$.
    If $|F|< k^4$, then we return an empty matching.
    Thus, we henceforth assume that $|F|\ge k^4$ and $\big||F|-|G|\big|\le 2k$.
    We first apply \cref{lem:decomp} to construct a piece decomposition $\DC$ of $F$ into $\Oh(|F|/k^3)$ pieces of length at most $k^3$ each; this takes $\Oh(n)$ time.
    Next, we use \cref{lem:dp} to build a minimum-cost piece matching $\M$ between $\DC$ and $G$; this takes $\Oh(n+|\DC|k^3)=\Oh(n)$ time.
    If the cost of $\M$ exceeds $6k$, then we report that $\ted(F,G)>k$; this is justified by \cref{lem:good_matching_exists}.
    Thus, we henceforth assume that the cost of $\M$ is at most $6k$.
    In particular, there are at most $6k$ unmatched pieces of $\DC$ and at most $6k$ unmatched fragments of $G$.
    By an upper bound on the piece length, there are $\Oh(k^4)$ unmatched characters.
    Moreover, the bound on the matching cost translates to $\Oh(k)$ dirty pieces of $\DC$ with respect to $\M$.
    Consequently, the algorithm of \cref{lem:merge} in $\Oh(n)$ time outputs a matching $\M'$ of size $\Oh(k)$ that leaves the same $\Oh(k^4)$ unmatched characters.
\end{proof}

\subsection{Piece Reductions}\label{app:piece_reduction}
In this section, we prove \cref{lem:alg:forest_reduction,lem:alg:context_reduction}.
Both statements rely on the following subadditivity property of red characters.
\begin{observation}\label{obs:red}
    For every two strings $X,Y\in \PSigma^*$, we have $|\red{X \cdot Y}| \le |\red{X}|+|\red{Y}|$.
    Consequently, the following properties hold for the sets of black and red characters:
    \begin{itemize}
        \item For every two forests $X$ and $Y$, $|\red{X \cdot Y}| \le |\red{X}|+|\red{Y}|$.
        \item For every context $X$ and forest $Y$, $|\red{X \star Y}| \le |\red{X}|+|\red{Y}|$.
        \item For every two contexts $X$ and $Y$, $|\red{X \star Y}| \le |\red{X}|+|\red{Y}|$.
    \end{itemize}
\end{observation}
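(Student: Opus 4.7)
My plan is to reduce the three corollaries to the primary string inequality $|\red{X\cdot Y}|\le |\red{X}|+|\red{Y}|$, which I will establish by exhibiting an injection from $\black{X}\sqcup\black{Y}$ into $\black{X\cdot Y}$. Since $|X\cdot Y|=|X|+|Y|$, such an injection is equivalent to the desired red-character bound after subtraction.

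First, I would verify that whether a fragment is a periodic block depends only on its content. The defining conditions from \cref{def:bf} (length at least $42k$ together with a string period of length at most $4k$ with equally many opening and closing parentheses) are intrinsic to the substring itself. Consequently, for any $X[l\dd r)\in\Bk{X}$, the substring sitting at positions $[l\dd r)$ of $X\cdot Y$ satisfies $(X\cdot Y)[l\dd r)\in\Bk{X\cdot Y}$; symmetrically, for any $Y[l\dd r)\in\Bk{Y}$, we have $(X\cdot Y)[|X|+l\dd|X|+r)\in\Bk{X\cdot Y}$.

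Second, I would lift this correspondence to black characters. A black character $X[i]$ witnessed by $X[l\dd r)\in\Bk{X}$ with $i\in[l+5k\dd r-5k)$ remains black at position $i$ of $X\cdot Y$ via the same witness; likewise, a black character $Y[j]$ with witness $Y[l\dd r)$ yields a black character $(X\cdot Y)[|X|+j]$ with witness shifted by $|X|$. The resulting map $\black{X}\sqcup\black{Y}\to\black{X\cdot Y}$ is injective because the two images lie in the disjoint ranges $[0\dd|X|)$ and $[|X|\dd|X|+|Y|)$. Hence $|\black{X}|+|\black{Y}|\le|\black{X\cdot Y}|$, and subtracting both sides from $|X|+|Y|=|X\cdot Y|$ gives the string inequality.

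The three consequences then follow mechanically. Two forests are themselves balanced strings over $\PSigma$, so the base inequality applies directly. For a context $X=\context{X_L}{X_R}$ and a forest $Y$, write $X\star Y=X_L\cdot Y\cdot X_R$ and invoke the string inequality twice, combining with the identity $|\red{X}|=|\red{X_L}|+|\red{X_R}|$ from \cref{def:redblack}. For two contexts $X$ and $D$, the composition $X\star D=\context{X_L\cdot D_L}{D_R\cdot X_R}$ decomposes into two disjoint string concatenations, each handled by the base inequality. I do not anticipate a substantive obstacle: the whole argument is essentially definitional, the only real content being the observation that being a periodic block is a property of the substring itself, independent of the surrounding characters.
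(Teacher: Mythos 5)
Your proposal is correct and matches the paper's proof, which likewise rests on the single observation that being a periodic block is intrinsic to the fragment's content, so every periodic block (and hence every black character) of $X$ or $Y$ survives in the concatenation or composition. The extra detail you supply---the explicit injection on black characters and the reduction of the context cases to the string case via $|\red{X}|=|\red{X_L}|+|\red{X_R}|$---is exactly what the paper leaves implicit.
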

\begin{proof}
It suffices to observe that in all three cases, every periodic block in $X$ or $Y$ remains a periodic block in the resulting string, forest, or context.
\end{proof}

Moreover, we show that the classification into red and black characters can be implemented efficiently,
with black characters originating from few periodic blocks.

\begin{lemma}\label{lem:findgood}
    There is a linear-time algorithm that, given a string $T\in \PSigma^*$ and the threshold~$k$, 
    computes the sets $\black{T}$ and $\red{T}$.
    Moreover, the algorithm constructs a set $\I \subseteq \Bk{T}$ of size $|\I|\le \frac{1}{2k}\red{T}$
    such that 
    \begin{equation}\label{eq:black}\black{T} = \bigcup_{T[l\dd r)\in \I} [l+5k\dd r-5k).\end{equation}
\end{lemma}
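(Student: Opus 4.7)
The plan is to establish the lemma in two phases: a computational phase that determines $\black{T}$ and $\red{T}$ in $O(n)$ time, and an extraction phase that builds $\I$ together with the required size bound.

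For the computational phase, I would invoke a standard linear-time runs-computation algorithm to enumerate all maximal periodic fragments (``runs'') of $T$ with their minimal primitive periods. A key simplification is that within a single run of primitive period $d$, the length-$p$ windows $T[l'\dd l'+p)$ for varying $l'$ are cyclic rotations of one another, so they share the same open/close counts. Hence a run of primitive period $d \le 4k$ admits a balanced period of length at most $4k$ (as required by \cref{def:bf}) if and only if the primitive period string itself has equal numbers of opens and closes; call such a run \emph{usable}. Every length-$\ge 42k$ sub-fragment of a usable run then lies in $\Bk{T}$, and the union of their middles is exactly $[l_R+5k\dd r_R-5k)$, where $R = T[l_R\dd r_R)$ is the run. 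Taking the union over all usable runs of length $\ge 42k$ yields $\black{T}$; the complement in $[0\dd |T|)$ is $\red{T}$.

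For the extraction phase, I would take $\I$ to consist of all usable runs of length $\ge 42k$, each viewed as a single periodic block in $\Bk{T}$. Equation~\eqref{eq:black} then holds by the observation above.

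The main obstacle is the bound $|\I| \le \frac{1}{2k}|\red{T}|$, which I would prove by assigning to each usable run $R = T[l_R\dd r_R)$ the $2k$ positions of the window $W_R := [r_R-5k\dd r_R-3k)$, which lies inside $R$ but outside its middle. The key claim is that every $p \in W_R$ must be red: if $p$ instead lay in the middle of some other usable run $R'$, then $l_{R'} \le p-5k \le r_R-8k-1$ and $r_{R'} > p+5k \ge r_R$, giving an overlap of $R$ and $R'$ of length at least $8k+1$. Since $p_R + p_{R'} - \gcd(p_R,p_{R'}) \le 8k$, Fine--Wilf endows this overlap with period $g := \gcd(p_R,p_{R'}) \le p_R$; as $|W_R$-overlap$| \ge 8k + 1 \ge p_R$, a standard propagation along $R$ (using $p_R = mg$ and $R$'s global period $p_R$) extends period $g$ to all of $R$, contradicting the minimality of $p_R$. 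Moreover, the windows $W_R$ are pairwise disjoint: for two usable runs with $r_R < r_{R'}$, either they are disjoint (so $r_{R'}-r_R \ge 42k$) or they overlap by less than $8k$ by Fine--Wilf, forcing $l_{R'} > r_R - 8k$ and hence $r_{R'} \ge l_{R'}+42k > r_R+34k$; either way $r_{R'} - r_R > 2k$, so $W_R \cap W_{R'} = \emptyset$. Summing $2k$ red characters per element of $\I$ yields $|\red{T}| \ge 2k\,|\I|$, completing the argument.
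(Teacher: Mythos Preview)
Your approach is essentially the same as the paper's: compute all runs, keep those of length $\ge 42k$ whose primitive period is at most $4k$ and balanced in opens/closes, and charge $2k$ red characters to each surviving run. The paper cites a known fact (two distinct maximal runs with periods $p_1,p_2$ overlap in fewer than $p_1+p_2-\gcd(p_1,p_2)$ characters) and assigns the symmetric window $[l+4k\dd l+5k)\cup[r-5k\dd r-4k)$; your one-sided window $W_R=[r_R-5k\dd r_R-3k)$ works just as well once that overlap bound is in hand.

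There is, however, a real gap in your inline derivation of the overlap bound. After Fine--Wilf gives the overlap period $g=\gcd(p_R,p_{R'})$ and you propagate $g$ to all of $R$, you conclude by ``contradicting the minimality of $p_R$''. But if $p_R\mid p_{R'}$, then $g=p_R$, and having period $p_R$ on $R$ contradicts nothing. To close this case you must also propagate $g=p_R$ through $R'$ (the overlap has length $\ge 8k+1>p_{R'}$, so the same argument applies), whereupon $R\cup R'$ has period $p_R$ and strictly contains $R$ (you already showed $r_{R'}>r_R$); this contradicts the \emph{maximality} of $R$ as a run. The same remark applies to your disjointness argument for the $W_R$'s, where ``by Fine--Wilf'' is really invoking this runs-overlap consequence rather than Fine--Wilf itself. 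Either patch the case explicitly or, as the paper does, cite the overlap fact directly.
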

\begin{proof}
    We describe the algorithm here:
    \begin{enumerate}
        \item\label{step:1} Compute all maximal periodic fragments (runs) of $T$ using the linear-time algorithm~\cite{BIINTT17}.
        \item\label{step:2} Filter out all periodic fragments $T[l\dd r)$ whose length is less than $42k$ or whose shortest string period is longer than $4k$ or does not have equally many opening and closing parentheses.
        \item\label{step:5} The remaining fragments form $\I$, and $\black{T}$ is computed according to \eqref{eq:black}. 
    \end{enumerate}

    After step~\ref{step:2}, all remaining fragments are periodic blocks.
    Hence, $[l+5k\dd r-5k)\subseteq \black{T}$ thus holds for every $T[l\dd r)\in \I$.
    To prove the converse containment, consider a position $i\in \black{T}$.
    By definition, there is a periodic block $T[l\dd r)\in \Bk{T}$ such that $i\in [l+5k\dd r-5k)$.
    Let $T[l'\dd r')$ be its maximal extension preserving the shortest period.
    Since $T[l\dd r)$ has length at least $42k$ and a string period of length at most $4k$ with equally many opening and closing parentheses, so does $T[l'\dd r')$.
    Moreover, by maximality, $T[l'\dd r')$ is one of the runs of $T$.
    This run is computed in Step~\ref{step:1} and preserved in Step~\ref{step:2}.
    Due to $T[l'\dd r')\in \I$ and $[l+5k\dd r-5k) \subseteq [l'+5k\dd r'-5k)$, we conclude that \eqref{eq:black} indeed holds.

    It remains to prove that $|\I|\le \frac{1}{2k}\red{T}$.
    By \cite[Fact 2.2.4]{KociumakaPhD}, every two maximal periodic fragments $T[l_1\dd r_1)$ and $T[l_2\dd r_2)$ with periods $p_1$ and $p_2$, respectively, intersect in less than $p_1+p_2-\gcd(p_1,p_2)$ characters.
    Hence, for two distinct $T[l_1\dd r_1), T[l_2\dd r_2)\in \I$, the intervals $[l_1+4k\dd r_1-4k)$ and $[l_2+4k\dd r_2-4k)$ are disjoint.
    We can thus assign $2k$ red characters, $[l+4k\dd l+5k)\cup [r-5k\dd r-4k)$ to each $T[l\dd r)\in \I$ so that each red character is accounted for at most once.
\end{proof}

\begin{corollary}\label{lem:findgood_context}
    There is a linear-time algorithm that, given a context $P=\context{P_L}{P_R}$ and the threshold~$k$, 
    computes the sets $\black{P}$ and $\red{P}$.
    Moreover, the algorithm constructs a set $\I \subseteq \BP$ of size $|\I|\le \frac{1}{2k}\red{P}$
    such that \eqref{eq:black} holds.
\end{corollary}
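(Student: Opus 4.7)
The plan is to reduce the corollary directly to \cref{lem:findgood} by exploiting that the context $P=\context{P_L}{P_R}$ is, for the purposes of Definitions~\ref{def:bf} and~\ref{def:redblack}, handled componentwise: by definition, $\Bk{P}=\Bk{P_L}\sqcup\Bk{P_R}$, $\black{P}=\black{P_L}\sqcup\black{P_R}$, and $\red{P}=\red{P_L}\sqcup\red{P_R}$. Hence there is nothing to argue about ``interactions'' between the two sides of the context, and we may process $P_L$ and $P_R$ entirely in parallel.

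Concretely, I would run the algorithm of \cref{lem:findgood} on the string $P_L$ to obtain $\black{P_L}$, $\red{P_L}$, and a set $\I_L\subseteq\Bk{P_L}$ with $|\I_L|\le\frac{1}{2k}|\red{P_L}|$ satisfying the analogue of \eqref{eq:black} for $P_L$; and similarly run it on $P_R$ to obtain $\black{P_R}$, $\red{P_R}$, and $\I_R\subseteq\Bk{P_R}$ with $|\I_R|\le\frac{1}{2k}|\red{P_R}|$. I would then output $\black{P}:=\black{P_L}\sqcup\black{P_R}$, $\red{P}:=\red{P_L}\sqcup\red{P_R}$, and $\I:=\I_L\sqcup\I_R\subseteq\Bk{P}$.

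Correctness of the produced sets $\black{P}$ and $\red{P}$ is immediate from \cref{def:redblack}. The identity \eqref{eq:black} for $P$ follows by taking the disjoint union of the two instances of \eqref{eq:black} for $P_L$ and $P_R$, since $\I_L$ and $\I_R$ index periodic blocks contained entirely in $P_L$ and $P_R$, respectively. The size bound is
\[
|\I| = |\I_L|+|\I_R| \le \tfrac{1}{2k}|\red{P_L}|+\tfrac{1}{2k}|\red{P_R}| = \tfrac{1}{2k}|\red{P}|.
\]
Finally, the running time is the sum of the two invocations of \cref{lem:findgood}, which is $\Oh(|P_L|+|P_R|)$, i.e., linear in $|P|$.

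There is no real obstacle here: the only thing to notice is that the definitions of periodic blocks and of red/black characters in a context are literally the disjoint union of the corresponding notions in $P_L$ and $P_R$, so no new analysis or interface between the two sides is required and the corollary is a direct consequence of applying \cref{lem:findgood} twice.
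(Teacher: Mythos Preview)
Your proposal is correct and is exactly the intended argument: the paper states this result as a corollary with no proof, since the definitions of $\Bk{P}$, $\black{P}$, and $\red{P}$ for a context are literally the disjoint unions over $P_L$ and $P_R$, so applying \cref{lem:findgood} to each half and combining is all that is needed.
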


\subsubsection{Forest Reduction}\label{app:forest_reduction}

From the definition of periodic blocks, we have an updated form of \cite[Lemma 3.7]{DGHKS23}.

\begin{lemma}[see {\cite[Lemma 3.7]{DGHKS23}}]\label{lem:horizontal_aperiodic_reduction}
    Let $k\in \Zp$ and let $P$, $P'$ be forests with at least $158k^2$ red characters each.
    Then, $P$ and $P'$ are $\ted_{\le k}^w$-equivalent for every normalized quasimetric~$w$.
    \end{lemma}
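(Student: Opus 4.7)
The plan is to prove a rigidity statement: for any forests $F, G$ containing $P$ at matched positions (with $|l_F - l_G| \le 2k$), every forest alignment $\A$ of $F, G$ with cost at most $k$ must align the two copies of $P$ by the identity map---entering the $P$-portion at $(l_F, l_G)$, exiting at $(r_F, r_G)$, and using only diagonal matching edges in between. Once rigidity is established, substituting $P'$ for $P$ on both sides preserves the alignment outside the $P$-portion and replaces the identity-on-$P$ interior with the identity-on-$P'$ interior (both at cost $0$), yielding a cost-preserving bijection between cost-$\le k$ alignments of $(F, G)$ and $(F', G')$. Applying the same rigidity argument with $P'$ in place of $P$ (valid because $|\red{P'}| \ge 158k^2$) handles the converse direction, so the capped tree edit distances agree---whether both are at most $k$ (with equal values) or both exceed $k$ (both capped to $\infty$).

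To establish rigidity, fix $\A \in \bta{k}(F, G)$ of cost at most $k$; by \cref{obs:good_is_good}, $\A$ has width at most $2k$. Restrict $\A$ to the path processing $F[l_F \dd r_F)$ and decompose it into maximal diagonal segments of constant \emph{intra-$P$ shift} $\sigma_j \in [-4k, 4k]$; the shift changes only at non-diagonal edges. Let $V_P$ denote the number of non-diagonal edges within the $P$-portion and $M_j$ the number of label mismatches inside the $j$-th segment (so $M_j = 0$ whenever $\sigma_j = 0$). The cost contribution of the $P$-portion is at least $(V_P + \sum_j M_j)/2$, so $V_P + \sum_j M_j \le 2k$ and at most $V_P + 1 \le 2k + 1$ segments arise.

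The heart of the argument is bounding the red characters that can lie in segments with $\sigma_j \ne 0$. Within such a segment, consecutive positions with $P[i] = P[i + \sigma_j]$ form maximal matched sub-runs. A sub-run of length at least $42k$ certifies that the underlying substring of $P$ has a string period of length $|\sigma_j| \le 4k$; after invoking a Fine--Wilf-style argument to extract a period of length $\le 4k$ with equally many opening as closing parentheses (using that $P$ is globally balanced), it forms a periodic block per \cref{def:bf}, so \cref{def:redblack} makes all but its first and last $5k$ characters black. Hence each long sub-run contributes at most $10k$ red characters and each short sub-run (length $< 42k$) contributes at most its length in red characters. The total number of sub-runs over all non-zero shift segments is $\sum_j (M_j + 1) \le (V_P + 1) + \sum_j M_j \le 2k + 1$, so the red characters consumed in non-identity portions of the alignment are strictly fewer than $158k^2$ after an accounting that also absorbs any boundary deviation (non-zero intra-$P$ shift at entry to or exit from the $P$-portion, which is captured as an additional non-zero shift segment at the end of the $P$-portion). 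Since $|\red{P}| \ge 158k^2$, no non-identity alignment is possible, establishing rigidity.

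The main obstacle is pushing the counting through tightly enough to reach the constant $158$ while simultaneously handling non-zero shift interior segments, boundary entry/exit deviations, and verifying the parenthesis-balance requirement in \cref{def:bf} by passing to the shortest period of each sub-run via a Fine--Wilf argument on balanced strings. A subordinate subtlety is verifying that the substitution preserves $\ted^w_{\le k}$ (rather than $\ted^w$) also when the value exceeds $k$; this follows automatically from the symmetric nature of the rigidity statement, since a cost-$\le k$ alignment exists for $(F, G)$ if and only if one exists for $(F', G')$, and when it exists the two costs coincide.
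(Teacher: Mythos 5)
Your overall plan rests on a rigidity claim that is false: a cost-$\le k$ forest alignment of $F$ and $G$ need not align the two occurrences of $P$ by the identity. For instance, take $F=G$ containing $P=\op_a\cl_a\cdot P_0$ for a forest $P_0$ with many red characters; the alignment that deletes the leading node $\op_a\cl_a$ from $F$, re-inserts it into $G$, and matches everything else identically has cost $w(a,\emptystring)+w(\emptystring,a)$, which is at most $k$ whenever $k\ge 2$, yet it uses non-diagonal edges inside the $P$-portion. What your red-character counting can at best deliver is that not \emph{all} of $\red{P}$ fits into segments whose shift differs from $p_F-p_G$, i.e., that the alignment attains the shift $p_F-p_G$ at \emph{some} moment while processing $P$. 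That weaker synchronization statement is exactly what the paper proves (\cref{clm:horizontal_reduction}), and it is what the cut-and-paste argument of \cite[Lemma 3.7]{DGHKS23} actually needs; but your ``cost-preserving bijection'' between alignments of $(F,G)$ and $(F',G')$ is built on full rigidity and therefore does not go through as written. You would need to replace it with a splice at the synchronization point.

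The second, more serious gap is the step where you upgrade a long perfectly matched sub-run with string period $|\sigma_j|\le 4k$ to a periodic block ``after invoking a Fine--Wilf-style argument to extract a period \dots with equally many opening as closing parentheses (using that $P$ is globally balanced).'' No such argument exists: a fragment of a balanced forest with a short string period need not admit any short period with balanced parenthesis counts---consider the fragment $\op_a^{\,m}$ inside a deep chain $\op_a^{\,m}\cl_a^{\,m}$. Such a fragment is not a periodic block (\cref{def:bf}), so a single wrong-shift, perfectly matched segment could contain an unbounded number of red characters, and your counting against the budget $158k^2$ collapses. Handling precisely this case is the heart of the paper's proof: when the period has unequal numbers of opening and closing parentheses, there is a node $u$ opening inside the matched fragment whose closing parenthesis lies far beyond it; the consistency condition of forest alignments (\cref{def:ta}) forces $c(u)$ to be aligned along with $o(u)$, and comparing the shifts at $o(u)$ and $c(u)$ with those of the node at the ``expected'' position shows that $f_t-g_t-(p_F-p_G)$ changes sign and hence vanishes somewhere in between. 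Your proposal contains no substitute for this argument, which is the only place where the forest structure (as opposed to plain string periodicity) enters.
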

    \begin{proof}
        Suppose that $P$ occurs in forests $F$ and $G$ at positions $p_F$ and $p_G$, respectively, satisfying $|p_F-p_G|\le 2k$.
        Denote $F' = F[0\dd p_F)\cdot P' \cdot F[p_F+|P|\dd |F|)$  and $G' = G[0\dd p_G)\cdot P' \cdot G[p_G + |P|\dd |G|)$.

        Let $\A=(f_t,g_t)_{t=0}^m$ be an alignment such that $\ted^w(F,G)= \ted^w_\A(F,G) \le k$. 
        Moreover, let $(f_a,g_a)\in \A$ be the leftmost element of $\A$ such that $f_a \ge p_F$ or $g_a \ge p_G$,
        and let $(f_b,g_b) \in \A$ be the leftmost element of $\A$ such that $f_b \ge p_F+|P|$ and $g_b \ge p_G+|P|$.

        The proof of \cite[Lemma 3.7]{DGHKS23} derives $\ted^w(F',G')\le \ted^w(F,G)\le k$.
        The only part of the original argumentation that relies on a different set of assumptions about $P$ and $P'$ is the proof of the following claim.
        \begin{claim}\label{clm:horizontal_reduction}
            There exists $t\in [a\dd b]$ such that $f_t-g_t = p_F - p_G$.
            \end{claim}
            \begin{claimproof}
                Let us partition $P=F[p_F\dd p_F+|P|)$ into individual characters representing deletions or substitutions of $\A$
                and maximal fragments that $\A$ matches perfectly (to fragments of~$G$).
                By \cite[Fact 2.7]{DGHKS23}, the number of such fragments is at most $2k+1$, and they contain at least $158k^2-2k$ red characters in total.
                Hence, one of these fragments, denoted $R=F[r_F\dd r_F+|R|)$, has at least $\frac{158k^2-2k}{2k+1}\ge 52k$ red characters.
                Suppose that the fragment of $G$ matched perfectly to $R$ is $G[r_G\dd r_G+|R|)$.
                If $r_F-r_G = p_F-p_G$, the claim holds for $t\in [a\dd b]$ such that $(f_t,g_t)=(r_F,r_G)$. 
                Otherwise, we note that $R$ has period $q:=|(r_F-p_F)-(r_G-p_G)|\in [1\dd 4k]$.
                Hence, $R$ is a periodic fragment of $P$ with at least $52k$ red characters.
                By \cref{def:redblack}, a periodic block contains at most $10k$ red characters, so $R\notin \BF$.
                As $\per(R)\le q \le 4k$ and $|R|\ge 52k > 42k$, we conclude that the period $Q:= R[0\dd q)$ does not have equally many opening and closing parentheses.
   
                By symmetry (up to reversal), we assume without loss of generality that $Q$ has more opening than closing parentheses. 
                Consequently, the number of nodes exiting $Q$ is strictly larger than the number of nodes entering $Q$ and, by a simple inductive argument, than the number of nodes entering any finite prefix of $Q^\infty$.
                In particular, the number of nodes of $F$ exiting $F[r_F\dd r_F+q)$ is strictly larger than the number of nodes entering $F[r_F+q\dd r_F+|R|)$, and thus there is a node $u\in V_F$ such that $o(u)\in [r_F\dd r_F+q)$ yet $c(u)\ge r_F+|R|$. 
                Observe that the node $u$ satisfies $c(u)-o(u) \ge |R|-q \ge 52k-4k > 4k$.

                Let $v$ be the node of $G$ such that $o(u)-r_F = o(v)-r_G$. 
                Since $\A$ matches $F[r_F\dd r_F+|R|)$ perfectly with $F[r_G\dd r_G+|R|)$, it must match $F[o(u)]$ with $G[o(v)]$ and, by the consistency of forest alignments (\cref{def:ta}), $\A$ also matches $F[c(u)]$ with $G[c(v)]$.
                Consequently, both $|o(u)-o(v)|$ and $|c(u)-c(v)|$ are bounded by $\width(\A) \le 2\cdot \ted^w_{\A}(F,G) \le 2k$.

                Similarly, let $v'$ be the node of $G$ such that $o(u)-p_F = o(v')-p_G$.
                Since $F[p_F\dd p_F+|P|)=P=G[p_G\dd p_G+|P|)$ is balanced, we must also have $c(u)-p_F=c(v')-p_G$.
                Due to $o(u)-o(v')=c(u)-c(v')=p_F-p_G$, we conclude that $c(v')-o(v')=c(u)-o(u) > 4k$ and that both $|o(v)-o(v')|$ and $|c(v)-c(v')|$ are bounded by $2k+|p_F-p_G|\le 4k$.

                Now, observe that $c(v) \ge c(v')-4k > o(v')$ and $c(v') > o(v')+4k \ge o(v)$, which means that $[o(v)\dd c(v)]\cap [o(v')\dd c(v')]\ne \emptyset$ and thus $v$ is an ancestor of $v'$ or vice versa.
                In either case, we have $0 \ge (o(v')-o(v))\cdot (c(v')-c(v)) = ((o(u)-o(v))-(p_F-p_G))\cdot  ((c(u)-c(v))-(p_F-p_G))$.
                The value $(f_t-g_t)-(p_F-p_G)$ can change by at most one for subsequent indices $t$.
                The sign of this integer value is different when $(f_t,g_t)=(o(u),o(v))$ and $(f_t,g_t)=(c(u),c(v))$,
                so it must be equal to $0$ at some intermediate index $t\in [a\dd b]$.
        \end{claimproof}

        As discussed above, \cref{clm:horizontal_reduction} combined with the original arguments in the proof of  \cite[Lemma 3.7]{DGHKS23}, let us derive $\ted^w(F',G')\le \ted^w(F,G)$.
        The converse inequality follows by symmetry between $(F,G,P)$ and $(F',G',P')$.
\end{proof}

\lemforestreduction*
\SetKwFunction{ForestReduction}{ForestReduction}
\begin{algorithm}
    \caption{$\protect\ForestReduction(P,k)$}\label{alg:forest_reduction}
    \KwIn{a forest $P$}
    \KwOut{a forest $P'$ with $|\red{P'}| \le 158k^2$}
        \lIf{$|\red{P}| \le 158k^2$}{\Return{$P$}}
        \lElse{\Return $\op_a^{79k^2} \cl_a^{79k^2}$}
\end{algorithm}
\begin{proof}
    We apply the algorithm of \cref{lem:findgood} construct $\red{P}$.
    If $|\red{\I}| \le 158k^2$, then we can just return $P$.
    Otherwise, we construct a forest $P'$ with no black character and exactly $158k^2$ red characters.
    Applying \cref{lem:horizontal_aperiodic_reduction}, we get that $P$ and $P'$ are $\ted_{\le k}^w$-equivalent for every normalized quasimetric $w$.
    Since the algorithm of \cref{lem:findgood} runs in linear time, the whole algorithm runs in linear time.
\end{proof}

\subsubsection{Context Reduction}\label{app:context_reduction}

\begin{lemma}[{\cite[Lemma 3.10]{DGHKS23}}]\label{lem:vertical_periodic_reduction}
    Let $k\in \Zp$, let $Q$ be a context, and let $e,e'\in \mathbb{Z}_{\ge 6k}$.
    Then, $Q^e$ and $Q^{e'}$ are $\ted_{\le k}^w$-equivalent for every normalized weight function $w$.
\end{lemma}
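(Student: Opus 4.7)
The plan is to exploit the fact that any forest alignment of cost $\le k$ has width $\le 2k$ and performs at most $2k$ non-trivial edits, so that most of the $e$ copies of $Q$ inside the context must be matched perfectly character-for-character; one such \emph{clean} copy can then be duplicated (or deleted) $|e'-e|$ times to pass between alignments of $(F,G)$ and of $(F',G')$. By the symmetry in $e\leftrightarrow e'$, I would assume $e\le e'$ and prove both directions: from a forest alignment $\A$ of $(F,G)$ with $\ted^w_\A(F,G)=c\le k$, construct $\A'$ of $(F',G')$ with $\ted^w_{\A'}(F',G')\le c$ by splicing in $e'-e$ duplicates of a clean $Q$-slice; and conversely, from an alignment of $(F',G')$ of cost $\le k$, delete $e'-e$ clean $Q$-slices.

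Because $w$ is normalized, every non-matching edge (insertion, deletion, or substitution) of a forest alignment costs at least $\tfrac12$ at the forest level, so $\A$ has at most $2k$ such edges and, as in the argument for \cref{obs:good_is_good}, width at most $2k$. Hence, within the $Q_L^e$ region of $F$, at most $2k$ of the $e$ copies of $Q_L$ are touched by a non-matching edge; the remaining $\ge e-2k\ge 4k$ copies (the \emph{clean} ones) are matched entirely by diagonals between identical characters, i.e.\ onto fragments of $G$ equal to $Q_L$. Analogously, $\ge 4k$ copies of $Q_R$ in $F$ are clean. By pigeonhole there is an index $i\in[1\dd e]$ such that both $Q_L^{(i)}$ and $Q_R^{(e-i+1)}$ of $F$ are clean. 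Writing $d$ for the depth of $Q$, the consistency condition of \cref{def:ta} ties the two copies together: the $d$ spine opening parentheses of $Q_L^{(i)}$ are, as parentheses of $F$, the mates of the $d$ spine closing parentheses of $Q_R^{(e-i+1)}$, both matched diagonally by $\A$, so their images in $G$ sit in matching copies of $Q_L$ and $Q_R$ inside $G$'s context.

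To build $\A'$, splice $e'-e$ identity diagonal blocks of length $|Q_L|$ immediately after the clean $Q_L^{(i)}$ on both sides, and $e'-e$ identity diagonal blocks of length $|Q_R|$ immediately before the clean $Q_R^{(e-i+1)}$ on both sides. Each inserted segment is a zero-cost diagonal; the new spine openings in $F'$ are nested to pair with the new spine closings in $F'$ (so that the context of $F'$ is indeed $Q^{e'}$), and these newly paired brackets are matched diagonally to the analogously paired inserted characters of $G'$, making $\A'$ a bona fide forest alignment of $(F',G')$ of cost $c$. The converse direction is symmetric: since $e'\ge e\ge 6k$ we again find $\ge 4k$ clean copies of each of $Q_L$ and $Q_R$ in $F'$, hence a consistent good index $i$, and deleting $e'-e$ clean slices from both sides yields an alignment of $(F,G)$ of the same cost.

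\textbf{Main obstacle.} The delicate point is verifying that the pigeonhole actually yields a \emph{consistent} good index $i$: if $Q_L$ or $Q_R$ has a short string period, a clean $Q_L^{(i)}$ of $F$ could a priori be matched by $\A$ to a shifted occurrence of $Q_L$ inside $G$'s $Q_L^e$, not to the natural same-index copy, and analogously for $Q_R$. Tracking the pairing through the forest-alignment consistency forces the shift on the $Q_L$-side to equal the shift on the $Q_R$-side, so the splicing argument still goes through with the shifted copies in $G$; the hypothesis $e\ge 6k$ is exactly what supplies enough pigeonhole slack (a factor of two over the $2k$ edit budget for each of $Q_L$ and $Q_R$) to find an $i$ with shift-consistent $G$-side images.
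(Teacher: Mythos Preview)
First, note that the paper does not supply its own proof of this lemma: it is quoted verbatim from \cite[Lemma~3.10]{DGHKS23}, with no argument given here. So there is no in-paper proof to compare against; one can only judge your proposal on its own merits (and against what one expects the \cite{DGHKS23} proof to do, by analogy with Claims~\ref{clm:horizontal_reduction} and~\ref{clm:vertical_aperiodic_reduction} in this paper).

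Your high-level plan---use the $\le 2k$ edit budget of an optimal width-$\le 2k$ alignment to find an index $i$ with both $Q_L^{(i)}$ and $Q_R^{(e-i+1)}$ matched purely by diagonals, then splice $|e'-e|$ extra copies of $Q$ in (or out) at that index---is the right shape, and your pigeonhole for ``doubly clean'' indices is fine: at most $2k$ copies of $Q_L$ and at most $2k$ of $Q_R$ are touched by an edit, leaving $\ge e-4k\ge 2k$ good indices. The gap is in the splicing step on the $G$-side, and your ``Main obstacle'' paragraph does not close it.

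Concretely, you insert $Q_L^{e'-e}$ in $G$ immediately after the \emph{image} of the clean $Q_L^{(i)}$. For this to produce $G'$ (the specific forest with $Q_L^{e'}$ at $[l_G\dd l_G+e'|Q_L|)$), that image must lie inside $G$'s own block $G[l_G\dd l_G+e|Q_L|)$. It need not: the diagonal of $\A$ at that point can differ from the natural diagonal $l_F-l_G$ by up to $2k$, and $|l_F-l_G|$ itself may be $2k$, so the image can stick out by up to $4k$ characters. When $|Q_L|$ is small (e.g.\ $Q=\langle\op_a;\cl_a\rangle$), ruling out such indices costs up to $\approx 4k/|Q_L|$ values of $i$ near each end, already exceeding the $2k$ slack your pigeonhole provides; the sentence ``$e\ge 6k$ is exactly what supplies enough pigeonhole slack'' is therefore not justified. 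Your remedy---that consistency forces the $Q_L$- and $Q_R$-side shifts to be equal---is also incorrect: if the images do lie in their regions, a direct computation via $\match{G}{\cdot}$ shows the two shifts are \emph{negatives} of one another, and if they do not, consistency tells you nothing about where the splice lands relative to $G$'s $Q^e$.

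The standard repair, visible in this paper's proofs of the neighbouring reductions (Claims~\ref{clm:horizontal_reduction} and~\ref{clm:vertical_aperiodic_reduction}), is a \emph{synchronization} step rather than a bare ``clean copy'' step: one argues that the alignment must hit the natural diagonal $f_t-g_t=l_F-l_G$ somewhere inside the $Q_L^e$ block (and $f_t-g_t=r_F-r_G$ inside the $Q_R^e$ block). At such a vertex the $F$- and $G$-positions sit at the same offset inside their respective $Q_L^e$ regions, so inserting $Q_L^{e'-e}$ there yields $F'$ and $G'$ simultaneously; the $Q_R$-side synchronization then comes for free via the mate of the spine node at that depth. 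Your write-up needs an argument of this kind (and this is where the factor $6$ is actually consumed), not just the existence of clean copies.
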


We say that a context $P=\langle P_L; P_R\rangle$ avoids vertical $k$-periodicity if 
it cannot be expressed as $P = C \star Q^{6k+1}\star D$ for some contexts $C,Q,D$
satisfying $|Q| \in [1\dd 8k]$.

From the definition of periodic blocks, we have an updated form of \cite[Lemma 3.12]{DGHKS23}.

\begin{lemma}[see {\cite[Lemma 3.12]{DGHKS23}}]\label{lem:context_reduction}
    Let $k\in \Zp$, let $P=\langle P_L;P_R\rangle,P'=\langle P'_L;P'_R\rangle$ be contexts with at least $1152k^3$ red characters that avoid vertical $k$-periodicity and whose halves do not contain any balanced substring with more than $158k^2$ red characters.
    Then, $P$ and $P'$ are $\ted_{\le k}^w$-equivalent for every normalized weight function $w$.
\end{lemma}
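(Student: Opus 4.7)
The plan is to mirror the strategy of \cref{lem:horizontal_aperiodic_reduction}, where the combination of $P$ being balanced and carrying many red characters forced an alignment to admit a ``diagonal crossing'' inside $P$, allowing $P'$ to be spliced in at that crossing. For contexts, the natural analog demands two crossing points---one inside $P_L$ and one inside $P_R$---so that we can simultaneously substitute $P'_L$ for $P_L$ and $P'_R$ for $P_R$ while preserving the consistency condition of \cref{def:ta}. As in the forest case, it suffices by the symmetry between $(F,G,P)$ and $(F',G',P')$ to prove $\ted^w(F',G')\le \ted^w(F,G)$ whenever $P$ appears as matching contexts $\context{F[l_F\dd l_F');F[r_F'\dd r_F)} = P = \context{G[l_G\dd l_G');G[r_G'\dd r_G)}$ with $|l_F-l_G|\le 2k$ and $|r_F-r_G|\le 2k$, and $F',G'$ are obtained by replacing the halves of $P$ with those of $P'$.

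Fixing an optimal alignment $\A$ with $\ted^w_\A(F,G)\le k$, the technical core is the two-sided analog of \cref{clm:horizontal_reduction}: there exist alignment indices $t_L$ inside the occurrence of $P_L$ in $F$ and $t_R$ inside the occurrence of $P_R$ such that $f_{t_L}-g_{t_L}=l_F-l_G$ and $f_{t_R}-g_{t_R}=r_F-r_G$. Granting such indices, I splice $\A$ by replacing the subpath between $(l_F,l_G)$ and $(f_{t_L},g_{t_L})$ with a pure-diagonal path that matches $P'_L$ to itself, and analogously on the right; the outer brackets defining the context are paired identically on both sides, so the consistency condition of forest alignments is preserved, and the cost outside $P$ is unchanged. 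The pigeonhole step for producing $t_L$ proceeds just as in \cref{clm:horizontal_reduction}: since the $1152k^3$ red characters split between the halves, one half---say $P_L$---has at least $576k^3$, and $\A$ partitions $P_L$ into at most $2k+1$ maximal perfectly matched fragments separated by edits, so some matched fragment $R\subseteq P_L$ carries at least $\tfrac{576k^3-2k}{2k+1}\ge 192k^2 > 158k^2$ red characters.

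The main obstacle is ruling out the case in which $R$ is matched \emph{off-diagonal}, and this is precisely where the two new hypotheses enter. If that happens, $R$ is periodic with shortest period $Q$ of length $q\in[1,4k]$. If $Q$ has equally many opening and closing parentheses, a sufficiently long prefix of $Q^\infty$ occurring in $R$ is balanced, producing a balanced substring of $P_L$ containing more than $158k^2$ red characters---directly contradicting the assumption on the halves of $P$. If $Q$ is unbalanced, the argument of \cref{clm:horizontal_reduction} applies essentially verbatim: one finds a node of $F$ whose opening lies in $R$ but whose closing lies well to the right, and the consistency of $\A$ together with the width bound $\width(\A)\le 2k$ force an intermediate index $t_L\in[a\dd b]$ with $f_{t_L}-g_{t_L}=l_F-l_G$.

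The role of the vertical $k$-periodicity hypothesis is more delicate: without it, the spliced diagonal path along $P'_L$ (respectively $P'_R$) could traverse a sub-context of the form $Q^{6k+1}$ differently from how $\A$ traversed $P$, and \cref{lem:vertical_periodic_reduction} would allow the two costs to drift apart even though the structural replacement is still sound. Excluding such vertical repetitions in $P$ and $P'$ guarantees that once the two crossing points are secured, the cost contributed by the spliced diagonal portion on $P'$ matches, edit-for-edit, the cost $\A$ already paid on $P$, and the overall cost bound $\ted^w(F',G')\le \ted^w_{\A}(F,G) = \ted^w(F,G)$ follows. Combined with the symmetric inequality, this yields the claimed $\ted_{\le k}^w$-equivalence of $P$ and $P'$.
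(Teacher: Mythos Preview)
Your proposal has genuine gaps; the crossing-point claim does not follow from the forest-case reasoning, and the vertical $k$-periodicity hypothesis is not used where you place it.

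First, the pigeonhole gives you many red characters in only \emph{one} half, say $P_L$, and from this you attempt to extract $t_L$. But you need crossings in \emph{both} halves, and the other half may carry almost no red characters, so the symmetric argument for $t_R$ is unavailable. The paper avoids this by running the pigeonhole on the \emph{whole} context: it decomposes $P=P_0\star\cdots\star P_{e-1}$ into depth-$1$ contexts and uses the balanced-substring assumption to bound $|\red{P_i}|\le 318k^2$, so that the $1152k^3$ total forces a long vertically contiguous block $P_i\star\cdots\star P_{i'-1}$ that $\A$ matches perfectly. This step is precisely where the balanced-substring hypothesis is needed --- not in your case~(a), which is already handled (as in \cref{clm:horizontal_reduction}) by the fact that a periodic block contains at most $10k$ red characters.

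Second, your case (b) does \emph{not} go through verbatim. In the forest lemma, the long matched fragment $R$ sits inside a balanced string $P$, so any node $u$ with $o(u)\in R$ also has $c(u)\in P$, and the partner node $v'$ in $G$ with the same relative offset is well defined. For a context, $R\subseteq P_L$ is not balanced: the node $u$ you construct may have $c(u)$ in the middle part $F[l'_F\dd r'_F)$, where $F$ and $G$ differ, so no counterpart $v'$ exists; or $c(u)$ may land in $P_R$, where the relevant offset is $r_F-r_G$ rather than $l_F-l_G$, and the laminar sign argument no longer yields $f_t-g_t=l_F-l_G$. The paper's argument sidesteps this entirely: assuming no crossing in $P_L$, it shows that $\A$ matches each $v_j$ (the $j$th spine node in $G$) to $u_{j+\delta}$ for a fixed $\delta>0$, forcing $P$ to contain $Q^{6k+1}$ for a short context $Q$ --- contradicting the avoidance of vertical $k$-periodicity. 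That hypothesis is the engine of the contradiction, not a safeguard for the splicing step; once $t_L,t_R$ are secured, the splice is a cost-free diagonal replacement exactly as in the forest case, and no periodicity assumption is needed there.
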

\begin{proof}
    Suppose that $P$ occurs in forests $F$ and $G$ at nodes $u$ and $v$, respectively, satisfying $|o(u)-o(v)|\le 2k$ and  $|c(u)-c(v)|\le 2k$.
    Denote
    \begin{align*}
        F' = F[0 \dd o(u)) \cdot P'_L \cdot F[o(u)+|P_L| \dd c(u)-|P_R|] \cdot P'_R \cdot F(c(u) \dd |F|), \\ 
        G' = G[0 \dd o(v)) \cdot P'_R \cdot G[o(v)+|P_L| \dd c(v)-|P_R|] \cdot P'_R \cdot G(c(v) \dd |G|).
    \end{align*}
    Let $\A=(f_t,g_t)_{t=0}^m$ be an optimal forest alignment such that $\ted(F, G) = \ted_{\A} (F, G) \le k$.
    Moreover, let $(f_a,g_a)\in \A$ be the leftmost element of $\A$ such that $f_a \ge o(u)$ or $g_a \ge o(v)$,
    $(f_b,g_b)\in \A$ be the leftmost element of $\A$ such that $f_b \ge o(u)+|P_L|$ and $g_b \ge o(v)+|P_L|$,
    $(f_c,g_c)\in \A$ be the leftmost element of $\A$ such that $f_c > c(u)-|P_R|$ or $g_c > c(v)-|P_R|$,
    and let $(f_d, g_d)$ be the leftmost element of $\A$ such that $f_d > c(u)$ and $g_d > c(v)$.

    The proof of \cite[Lemma 3.12]{DGHKS23} derives $\ted^w(F',G')\le \ted^w(F,G)\le k$.
    The only part of the original argumentation that relies on a different set of assumptions about $P$ and $P'$ is the proof of the following claim.
    \begin{claim}\label{clm:vertical_aperiodic_reduction}
        There exist $t_L\in [a\dd b]$ such that $f_{t_L}-g_{t_L} = o(u)-o(u)$
        and $t_R\in [c\dd d]$ such that $f_{t_L}-g_{t_L} = c(u)-c(u)$.
    \end{claim}
    \begin{claimproof}
        By symmetry (up to reversal), we can focus without loss of generality on the first claim.
        Moreover, by symmetry between $F$ and $G$, we can assume without loss of generality that $f_a = o(u)$;
        in particular, this implies $f_a - g_a \ge o(u)-o(v)$.
        If there exists $t\in [a\dd b]$ such that $f_t - g_t \le o(u)-o(v)$,
        then, since $f_t-g_t$ may change by at most one for subsequent positions,
        there is also $t_L\in [a\dd b]$ such that $f_{t_L} - g_{t_L} = o(u)-o(v)$,
        Consequently, it remains to consider the case when $f_t - g_t > o(u)-o(v)$ holds for all $t\in [a\dd b]$.

        Let us express $P$ as a vertical composition of $e$ contexts $P=P_0\star \cdots \star P_{e-1}$, where $e$ is the depth of $P$.
        Observe that the occurrences of $P$ at node $u$ in $F$ and $v$ in $G$,
        for each $i\in [0\dd e)$, induce occurrences of $P_i$ at some nodes $u_i$ in $F$ and $v_i$ in $G$.
        Since $F(o(u_i)\dd o(u_i)+|P_{i,L}|)$ and $F(c(u_i)-|P_{i,R}|\dd c(u_i))$ are balanced,
        we conclude that $|\red{P_i}| \le 2\cdot (158k^2+1)\le 318k^2$.
        We can decompose $[0\dd e)$ into at most $k$ individual indices $i$ such that $\A$ does not match perfectly
        the occurrence of $P_i$ at $v_i$ and at most $k+1$ intervals $[i\dd i')$ such that $\A$ matches the occurrence $P_i\star \cdots \star P_{i'-1}$ at $v_i$ perfectly to a context in $G$.
        Let us choose such an interval $[i\dd i')$ maximizing $|\red{P_i\star \cdots \star P_{i'-1}}|$;
        this value is at least $\frac{1152k^3-k\cdot 318k^2}{k+1}\ge 417k^2$ (by \cref{obs:red}).
        Let $i''\in [i\dd i')$ be the maximum index such that $|\red{P_{i''}\star \cdots \star P_{i'-1}}|> 8k$;
        note that $|\red{P_i\star \cdots \star P_{i''-1}}|\ge 417k^2 - (318k^2+8k) \ge 91k^2$.

        For each $j\in [i\dd i'']$, denote by $u'_j$ the node matched with $v_j$ by $\A$.
        Note that $|o(u'_j)-o(u_j)|\le 4k$ and $|c(u'_j)-c(u_j)|\le 4k$.
        Moreover, $o(u'_j) - o(v_j) > o(u)-o(v)=o(u_j)-o(v_j)$ implies $o(u'_j) > o(u_j)$.
        Since $|P_j\star \cdots \star P_{i'-1}| > 8k$, we conclude that $u'_j = u_{j'}$ for some $j' \in [j\dd i')$.
        Moreover, if $j>i$, then $u'_j$ must be a child of $u'_{j-1}$.
        Hence, there exists $\delta > 0$ such that $u'_{j}=u_{j+\delta}$ holds for all $j\in [i\dd i'']$.
        For $j\in [i\dd i'')$, this implies $P_j = P_{j+\delta}$ and that both halves of $P_j\star \cdots \star P_{j+\delta-1}$
        are of length at most $4k$. 
        In particular, if we define $Q=P_i\star \cdots \star P_{i+\delta-1}$, then,
        due to $\frac{|P_i\star \cdots \star P_{i''-1}|}{8k} \ge \frac{91k^2}{8k} \ge 6k+1$,
        we conclude that $Q^{6k+1}$  occurs in $F$ and $G$ at positions $u_i$ and $v_i$, respectively.
        This contradicts the assumption that $P$ avoids vertical periodicity.
    \end{claimproof}

    As discussed above, \cref{clm:vertical_aperiodic_reduction} combined with the original arguments in the proof of  \cite[Lemma 3.12]{DGHKS23}, let us derive $\ted^w(F',G')\le \ted^w(F,G)$.
    The converse inequality follows by symmetry between $(F,G,P)$ and $(F',G',P')$.
\end{proof}

\SetKwFunction{PeriodicityReduction}{PeriodicityReduction}

\begin{fact}[{$\protect\PeriodicityReduction(P,e,  \Qf)$~\cite[Lemma 2.13]{DGHKS23}}]\label{lem:perred}
    Let $e\in \Zp$ and let $\Qf$ be a family of primitive strings of length at most $e$.
    There is an algorithm that repeatedly transforms an input string $P$
    by replacing an occurrence of $Q^{e+1}$ (for some $Q\in \Qf$) with an occurrence of $Q^e$,
    arriving at a string $P'$ that does not contain any occurrence of $Q^{e+1}$ (for any $Q\in \Qf$).
    Moreover, this algorithm can be implemented in linear time using a constant-time oracle that tests whether a given primitive fragment of $P$ belongs to $\Qf$.
\end{fact}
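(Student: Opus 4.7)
The plan is to reduce the statement to a single-pass procedure over the maximal periodic fragments (runs) of $P$, invoking the linear-time runs algorithm (e.g., of Bannai et al.) together with Kolpakov--Kucherov's bound of $\Oh(|P|)$ runs and $\sum_r \lfloor \mathsf{len}(r)/\mathsf{per}(r)\rfloor = \Oh(|P|)$. First I would compute all runs of $P$; for each run $r$, let $Q_r$ denote the primitive root (of length $\mathsf{per}(r)$) and $\ell_r$ the run length. Using the oracle, I filter out runs with $|Q_r|>e$ (which cannot host $Q^{e+1}$ for $Q\in \Qf$) and those whose primitive root does not lie in $\Qf$.

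For each surviving run $r$, I cut out $\lfloor (\ell_r - |Q_r|\cdot e)/|Q_r|\rfloor$ copies of $Q_r$ from $r$, shrinking the run to an exponent in $[e\dd e{+}1)$. Each individual cut corresponds to one legal replacement of $Q_r^{e+1}$ by $Q_r^{e}$, because the run still contains at least $e{+}1$ copies of $Q_r$ right up to the last cut. Implementation-wise, I represent $P$ as a doubly-linked list of characters so that every deletion is constant-time; the total number of deletions is $\sum_r (\ell_r - |Q_r|e) \le |P|$, so the work is linear.

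The main obstacle is correctness: I must show that performing the reductions independently on each original run yields a string $P'$ in which no $Q^{e+1}$ with $Q\in \Qf$ occurs, without recomputing runs after each edit. Two observations suffice. First, by the maximality of runs in $P$, each run $r$ is bordered by characters that break its $|Q_r|$-periodicity; removing a multiple of $|Q_r|$ from the interior of $r$ leaves the boundary characters and their relative offsets (mod $|Q_r|$) intact, so the shortened run cannot extend past its original boundaries or adopt a shorter primitive period. Second, any putative occurrence of $Q^{e+1}$ in $P'$ (with $Q\in\Qf$ primitive, $|Q|\le e$) would sit inside a maximal run of $P'$ whose primitive period is $|Q|$; by Fine--Wilf such a run of $P'$ corresponds bijectively to a run of $P$ with primitive root $Q$, and that run was already trimmed to exponent $<e{+}1$ in the single pass.

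Finally, termination of the abstract rewriting system is immediate because each replacement decreases $|P|$ by at least one; I would briefly remark that any terminating sequence of rewrites reaches a string with the advertised property, and that the above one-pass procedure simulates one such valid sequence, so it produces the correct output $P'$ within the claimed $\Oh(|P|)$ time.
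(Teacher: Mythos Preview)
The paper does not give its own proof of this statement; it is quoted as a Fact from \cite[Lemma~2.13]{DGHKS23} and used as a black box in \cref{alg:context_reduction}. So there is no in-paper argument to compare your attempt against.

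Your plan is the natural one, but the filtering step has a concrete gap. You discard a run $r$ whenever the single primitive root $Q_r$ you read off (say $P[a\dd a+\per(r))$) is not in $\Qf$. But an occurrence of $Q^{e+1}$ with $Q\in\Qf$ only forces the surrounding run to have primitive \emph{period} $|Q|$; the root of that run is some cyclic rotation of $Q$, not necessarily $Q$ itself. Concretely, with $\Qf=\{ab\}$, $e=3$, and $P=(ba)^{5}$, the unique run has root $ba\notin\Qf$, so your procedure leaves $P$ untouched even though $(ab)^{4}$ occurs in it and must be reduced. The fix is to query the oracle on all $\per(r)$ rotations of each root (still $\Oh(|P|)$ calls overall, since $\sum_r \per(r)=\Oh(|P|)$) and trim whenever \emph{any} rotation lies in $\Qf$; removing $\per(r)$ consecutive characters from the run is then a valid $Q^{e+1}\to Q^e$ step for that rotation.

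Your second observation---that runs of $P'$ correspond bijectively to runs of $P$---is asserted rather than proved, and is not straightforward when trimmed runs overlap or when a putative $Q^{e+1}$ in $P'$ straddles a cut and simultaneously extends past the original boundary of $r$. This can be repaired (for instance, argue one deletion at a time and show by a short case analysis that every $Q^{e+1}$ occurrence in the post-deletion string already appears, as a string, in the pre-deletion string), but as written it is a hand-wave rather than a proof.
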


\lemcontextreduction*

\newcommand{\Pp}{\mathbf{P}}
\newcommand{\Qq}{\mathbf{Q}}
\SetKwFunction{ContextReduction}{ContextReduction}
\begin{algorithm}\label{alg:context_reduction}
    \caption{$\protect\ContextReduction(P,k)$}
    \KwIn{a context $P$}
    \KwOut{a context $P'$ with at most $1152k^3$ red characters}
        Let $P=P_0\star \cdots \star P_{e-1}$, where each $P_i$ is a context of depth $1$\;
        $\I \gets \emptyset$\;

        \For{$i\gets 0$ \KwSty{to} $e$}{
            Let $P_i = \langle \op_{a_i} F_i; G_i \cl_{a_i}\rangle$\;
            $F_i' \gets \ForestReduction(F_i,k)$\;
            $G_i' \gets \ForestReduction(G_i,k)$\;
            $\Pp \gets \Pp\cdot \langle \op_{a_i} F_i'; G_i' \cl_{a_i}\rangle$\;
        }
        $\Qf \gets \{\Qq : \bigstar_{i=0}^{|\Qq|-1}\Qq[i]\text{ is a primitive context of length at most }8k\}$\;
        $\Pp' \gets \PeriodicityReduction(\Pp, 6k, \Qf)$\;        
        $P' \gets \bigstar_{i=0}^{|\Pp'|-1}\Pp'[i]$\;
        \If{$|\red{P'}| \le 1152k^3$}{
            \KwRet{$P'$}
        }
        $P'' \gets \bigstar_{i=0}^{24k-1} \context{\op_a^{i+1} \cl_a^{i}}{\op_a^{24k^2-i-1} \cl_a^{24k^2-i}}$ for some $a\in \Sigma$\;
        \KwRet{$P''$}
\end{algorithm}
\begin{proof}
    Let $P=P_0\star \cdots \star P_{e-1}$, where each $P_i$ is a context of depth $1$,
    that is, $P_i = \langle \op_{a_i} F_i; G_i \cl_{a_i}\rangle$ for some label $a_i\in \Sigma$ and forests $F_i,G_i$.
    As the first step, our algorithm constructs a string $\Pp[0\dd e)$ whose characters are depth-1 contexts
    $\Pp[i]=\langle \op_{a_i} F'_i; G'_i \cl_{a_i}\rangle$, where forests $F'_i=\ForestReduction(F_i,k)$ and $G'_i=\ForestReduction(G_i,k)$ are constructed using \cref{alg:forest_reduction}.
    Next, we transform $\Pp$ using \cref{lem:perred} with $e=6k$ and a family $\Qf$ defined so that $\mathbf{P}[i\dd j)\in \Qf$
    if and only if $\Pp[i]\star \cdots \star \Pp[j-1]$ is a primitive context of length at most $8k$ (this implies $j-i\le 4k$).
    In order to apply \cref{lem:perred} to $\Pp$, we use linear-time string sorting~\cite{PT87,AN94} to map characters of $\Pp$ (depth-1 contexts) to integer identifiers.
    By composing the contexts corresponding to the resulting string $\Pp'$, we obtain a context $P'$.
    We return $P'':=\bigstar_{i=0}^{24k-1} \context{\op_a^{i+1} \cl_a^{i}}{\op_a^{24k^2-i-1} \cl_a^{24k^2-i}}$ (for an arbitrary label $a\in \Sigma$)
    or $P'$ depending on whether $|P'|\ge 1152k^3$ or not.

    Note that $|P''| = \sum_{i=0}^{24k-1} (1 + 2\cdot i + 2\cdot (24k^2-i-1)+1) =24k \cdot 2\cdot  24k^2 = 1152k^3$.
    Thus, the resulting context (either $P'$ or $P''$) is guaranteed to be of length at most $1152k^3$.
    Let us now argue that it is $\ted_{\le k}^w$-equivalent to $P$ for every normalized quasimetric~$w$.
    By \cref{lem:horizontal_aperiodic_reduction}, the forests $F'_i$ and $G'_i$ are $\ted_{\le k}^w$-equivalent to $F_i$ and $G_i$,
    respectively, and thus $\bigstar_{i=0}^{e-1} \Pp[i]$ is $\ted_{\le k}^w$-equivalent to $P$.
    By \cref{lem:perred}, the context $P'$ is obtained from $\bigstar_{i=0}^{e-1} \Pp[i]$ by repeatedly replacing $Q^{6k+1}$ with $Q^{6k}$
    for primitive contexts  $Q$ of length at most $8k$.
    By \cref{lem:vertical_periodic_reduction}, $Q^{6k+1}$ is then  $\ted_{\le k}^w$-equivalent to $Q^{6k}$,
    so this operation preserves $\ted_{\le k}^w$-equivalence, i.e., $P'$ is also $\ted_{\le k}^w$-equivalent to $P$.
    Moreover, each depth-$1$ context in $\Pp'$ originates from $\Pp$, so each forest occurring in (either half of) $P'$ is of length at most $158k^2$.
    Furthermore, \cref{lem:perred} guarantees that $P'$ is not of the form $C\star Q^{6k+1} \star D$ for any context $Q$ of length at most $8k$,
    and thus $P'$ avoids vertical $k$-periodicity.
    By construction, $P''$ avoids vertical $k$-periodicity and its halves contain only forests of lengths at most $158k^2$ (in fact, at most $48k^2$).
    Also, $P''$ does not contain any black character, since it does not have any periodic block.
    Thus, $|\red{P''}|=|P''|=1152k^3$.
    Consequently, \cref{lem:context_reduction} implies that $P''$ is $\ted_{\le k}^w$-equivalent to $P'$ (and, by transitivity, to $P$)
    provided that  $|P''|\ge 1152k^3$.

    As for the running time analysis, we note that all applications of \cref{lem:horizontal_aperiodic_reduction}
    concern disjoint fragments of $P$, so the total cost of the calls to $\ForestReduction$ is linear.
    Assigning integer identifiers to contexts $\Pp[i]$ and applying \cref{lem:perred} also takes linear time.
    Finally, $P''$ is constructed only if $|P'|\ge 1152k^3$, so the cost of this step is also be bounded by $\Oh(|P|)$.
\end{proof}

\subsection{Complete Kernelization Algorithm}\label{app:kernel}
In these section, we combine the results proved above to derive \cref{thm:our_kernel}.
Notably, \cref{lem:alg:forest_reduction,lem:alg:context_reduction} are formulated in terms of red characters (originating from periodic blocks) whereas \cref{thm:our_kernel} requires constructing free block.
Recall that the string period $R$ of a free block must be balanced, whereas the string period $Q$ of a periodic block only needs to have the same number of opening and closing parentheses; see \cref{def:free,def:bf}.
We use the following sequence of results to conclude that, as long as the periodic block is contained in a forest, some cyclic rotation of $Q$ must be balanced.

Simple inductive arguments prove the following characterizations.
\begin{observation}\label{rmk:forest}
   Every non-empty balanced string $F\in \FSigma$ can be expressed as $F=\op_a \cdot F' \cdot \cl_a \cdot F''$ for some $F',F''\in \FSigma$ and $a\in \Sigma$.
\end{observation}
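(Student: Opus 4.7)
The plan is to prove this by structural induction following the inductive definition of $\FSigma$ given in \cref{def:forest}. Let $\phi$ be the predicate on strings of $\PSigma^*$ stating that the string is either empty or admits a decomposition $\op_a \cdot F' \cdot \cl_a \cdot F''$ with $F',F''\in \FSigma$ and $a\in \Sigma$. I will show that the set $\{F \in \PSigma^* : \phi(F)\}$ contains $\emptystring$ and is closed under concatenation and under the wrapping operation $F \mapsto \op_a \cdot F \cdot \cl_a$. Since $\FSigma$ is the smallest subset of $\PSigma^*$ closed under these operations, this shows that $\phi$ holds for every $F \in \FSigma$, which is exactly the desired statement (the empty case being vacuous).

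The base case $\emptystring$ is immediate. For the wrapping step, given $F \in \FSigma$, write $\op_a \cdot F \cdot \cl_a = \op_a \cdot F \cdot \cl_a \cdot \emptystring$; since $F \in \FSigma$ and $\emptystring \in \FSigma$, this witnesses $\phi$. For the concatenation step, consider $F = G \cdot H$ with $G, H \in \FSigma$ satisfying $\phi$. If $G = \emptystring$, then $F = H$ and the decomposition of $H$ (if $H$ is non-empty) works verbatim. Otherwise, $G = \op_a \cdot G' \cdot \cl_a \cdot G''$ for some $G',G''\in \FSigma$ and $a \in \Sigma$, and then
\[
    F = \op_a \cdot G' \cdot \cl_a \cdot (G'' \cdot H),
\]
where $G'' \cdot H \in \FSigma$ by the closure of $\FSigma$ under concatenation. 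In either subcase, $\phi(F)$ holds, completing the induction.

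There is no real obstacle here: the only subtle point is that the inductive definition is not unique (the same string may be produced by different derivations), but structural induction on a smallest inductively defined set sidesteps this issue entirely by the closure argument above. The whole proof amounts to a couple of lines and could reasonably be inlined or even omitted in the final write-up.
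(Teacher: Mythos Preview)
Your proof is correct and matches the paper's approach: the paper does not spell out a proof but simply remarks that ``simple inductive arguments prove the following characterizations,'' and your structural induction on the definition of $\FSigma$ is precisely such an argument.
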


\begin{observation}\label{obs:balanced}
 Every balanced string $F\in \FSigma$ satisfies the following conditions:
 \begin{enumerate}
    \item $F$ contains equally many opening and closing parentheses, and
    \item every prefix of $F$ contains at least as many opening as closing parentheses.
 \end{enumerate}
\end{observation}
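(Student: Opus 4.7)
The plan is a routine structural induction following the inductive definition of $\FSigma$ in \cref{def:forest}, which has three generation rules: the empty string $\emptystring$, concatenation $F_1 \cdot F_2$, and wrapping $\op_a \cdot F_1 \cdot \cl_a$. For every $F \in \FSigma$ I will simultaneously prove conditions (1) and (2) of the observation, since condition (2) in the concatenation case crucially uses condition (1) applied to a factor.

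For the base case $F=\emptystring$, both conditions are vacuously true since $F$ has no parentheses and its only prefix is $\emptystring$ itself. For the concatenation case $F = F_1 \cdot F_2$, condition (1) follows by adding the equalities provided by the inductive hypothesis applied to $F_1$ and to $F_2$. For condition (2), I would split into two subcases depending on whether the prefix under consideration lies entirely within $F_1$ or extends past $|F_1|$: the first subcase reduces directly to the inductive hypothesis on $F_1$; the second subcase writes the prefix as $F_1 \cdot P'$ where $P'$ is a prefix of $F_2$, and combines condition (1) for $F_1$ with condition (2) for $F_2$ to conclude that the counts of opening and closing parentheses differ by exactly as much as in $P'$, which is nonnegative.

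For the wrapping case $F = \op_a \cdot F_1 \cdot \cl_a$, condition (1) follows because adding one opening and one closing parenthesis preserves equality; by the inductive hypothesis, $F_1$ already has equal counts. For condition (2), any prefix $P$ of $F$ is of the form $\emptystring$, or $\op_a \cdot P'$ for a prefix $P'$ of $F_1$, or $\op_a \cdot F_1 \cdot \cl_a$ (the whole of $F$). In the first and third cases the counts are equal, while in the middle case the inductive hypothesis on $F_1$ gives that $P'$ has at least as many opening as closing parentheses, so $\op_a \cdot P'$ has a strict surplus of at least one opening.

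There is no real obstacle here; the only point requiring any care is the concatenation case of condition (2), where one must remember to invoke condition (1) of the inductive hypothesis on the left factor to ensure that the running surplus of opening parentheses is nonnegative at the split point before applying condition (2) to the right factor. I expect the writeup to take a single short paragraph per case.
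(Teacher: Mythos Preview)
Your proposal is correct and is precisely the routine structural induction the paper has in mind; the paper itself does not spell out a proof, merely stating that ``simple inductive arguments prove the following characterizations.'' Your case analysis on the three clauses of \cref{def:forest}, including the observation that the concatenation case of condition~(2) requires invoking condition~(1) on the left factor, is exactly the expected argument.
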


We write $\FsSigma \subseteq \PSigma^*$ for the family of substrings of balanced strings in $\FSigma$.
In other words, $F\in \FsSigma$ if and only if $F=G[i\dd j)$ for some balanced string $G\in \FSigma$ and integers $0\le i \le j \le |G|$.
The converse of \cref{obs:balanced} remains true among the elements of~$\FsSigma$.

\begin{fact}\label{obs:balanced_fragment}
    If a string $F\in \FsSigma$ satisfies the following conditions, then it belongs to $\FSigma$:
    \begin{enumerate}
        \item $F$ contains equally many opening and closing parentheses, and
        \item every prefix of $F$ contains at least as many opening as closing parentheses.
    \end{enumerate}
\end{fact}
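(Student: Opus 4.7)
The plan is to exploit the embedding $F=G\fragmentco{i}{j}$ for some balanced $G\in\FSigma$ (guaranteed by $F\in\FsSigma$) and argue that conditions (1) and (2) force every parenthesis of $F$ to pair with another parenthesis of $F$ under the non-crossing matching of $G$, rather than with a parenthesis of $G$ outside $F$. Once this is in place, the restriction of the matching of $G$ to $F$ yields a complete, non-crossing, label-respecting matching on the parentheses of $F$, and a structural induction on $|F|$ using \cref{def:forest} then places $F$ in $\FSigma$.

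For the key step I would work with the depth function $d_G(k)$ of $G$, defined as the number of opening parentheses minus the number of closing parentheses in $G\fragmentco{0}{k}$. Since $G\in\FSigma$, we have $d_G\ge 0$ everywhere, $d_G(|G|)=0$, and for each matching pair $(p,q)$ in $G$ both $d_G(q+1)=d_G(p)$ and $d_G(\ell)\ge d_G(p)+1$ for all $\ell\in\fragmentoc{p}{q}$. Condition (2) on $F$ translates to $d_G(i+k)\ge d_G(i)$ for every $k\in\fragmentcc{0}{|F|}$, while condition (1) translates to $d_G(j)=d_G(i)$. Suppose now a closing parenthesis at position $c$ of $F$ had its mate at some position $p<i$ of $G$. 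Then $(p,i+c)$ is a matching pair of $G$ spanning $i$, so $d_G(i)\ge d_G(p)+1 = d_G(i+c+1)+1$, contradicting~(2). Thus no closing paren of $F$ is matched outside $F$ to the left. Openings of $F$ cannot be matched to the left nor closings to the right (either would flip the order of a matched pair inside $G$); the only remaining outside-$F$ possibility is openings of $F$ matched to the right, and condition (1), combined with the fact that matched openings and closings within $F$ come in equal numbers, forces this count to equal the (zero) number of left-unmatched closings. Hence every parenthesis of $F$ is matched within $F$.

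The wrap-up is a routine induction on $|F|$ using \cref{def:forest}: for nonempty $F$, the first character must be an opening $\op_a$ (its mate lies inside $F$, hence to its right); letting $m$ be the position of its mate $\cl_a$, both $F\fragmentco{1}{m}$ and $F\fragmentco{m+1}{|F|}$ inherit complete non-crossing label-consistent matchings, so by the induction hypothesis they lie in $\FSigma$. The third rule of \cref{def:forest} then yields $\op_a\cdot F\fragmentco{1}{m}\cdot\cl_a\in\FSigma$, and the second rule followed by concatenation with $F\fragmentco{m+1}{|F|}$ completes the proof.

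The only real subtlety is the depth argument that excludes parentheses of $F$ being matched to characters of $G$ outside $F$; everything else is bookkeeping and the standard induction off \cref{def:forest}.
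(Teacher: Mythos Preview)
Your proof is correct but follows a genuinely different route from the paper's.

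The paper proceeds by induction on $|G|$, the length of a balanced superstring of $F$. In the inductive step it writes $G=\op_a\cdot G'\cdot\cl_a\cdot G''$ via \cref{rmk:forest}, argues (by symmetry) that $F$ sits inside $G'\cdot\cl_a\cdot G''$, and then either pushes $F$ into the shorter superstring $G'$ or $G''$ (and applies the hypothesis), or observes that $F$ straddles the middle $\cl_a$; in the latter case the prefix of $F$ ending at that $\cl_a$ is a suffix of $G'$ followed by $\cl_a$, which by \cref{obs:balanced} has more closings than openings, contradicting condition~(2). No explicit matching is ever built.

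You instead work globally with the depth function of $G$, use it to rule out any closing parenthesis of $F$ having its mate to the left of $F$, then use the count in condition~(1) to rule out openings with mates to the right, concluding that the mate structure of $G$ restricts to a complete non-crossing label-consistent matching on $F$; a separate short induction on $|F|$ then finishes. Your argument is a bit longer and effectively proves an auxiliary lemma (``a string with a complete non-crossing label-respecting matching is in $\FSigma$'') along the way, whereas the paper's argument avoids ever naming the matching and stays closer to the recursive definition of $\FSigma$. Both are perfectly valid; the paper's version is more compact, while yours makes the combinatorial obstruction (parentheses of $F$ cannot be matched outside $F$) completely explicit.
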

\begin{proof}
    We proceed by induction on the length $|G|$ of a balanced superstring $G\in \FSigma$ of $F$.
    In the base case, we have $F=G$, and the claim holds trivially.
    Otherwise, $F$ has an occurrence in $G$ that does not contain the first or the last character of $G$.
    The two cases are symmetric, so we henceforth assume that the occurrence does not contain the first character of $G$.
    By \cref{rmk:forest}, we have $G = \op_a \cdot G' \cdot \cl_a \cdot G''$ for some $G',G''\in \FSigma$ and $a\in \Sigma$. 
    Moreover, as argued above, we can assume that $F$ is a substring of $G' \cdot \cl_a \cdot G''$.
    If $F$ is a substring of $G'$ or $G''$, then it is balanced by the inductive hypothesis.
    In the remaining case, $F=F'\cdot \cl_a \cdot F''$, where $F'$ is a suffix of $G'$ and $F''$ is a prefix of $G''$.
    By \cref{obs:balanced}, the suffix $F'$ of $G'\in \FSigma$ has at least as many closing parentheses as opening parentheses.
    Hence, the prefix $F'\cdot \cl_a$ of $F$ has strictly more closing parentheses than opening parentheses.
    This contradiction concludes the proof.
\end{proof}

\begin{lemma} \label{lem:balanced_cyclic_rotation}
    If $Q\in\PSigma^*$ has equally many opening and closing parentheses and $Q^2\in \FsSigma$, then $Q$ has a balanced cyclic rotation, i.e., $Q[i\dd |Q|)Q[0\dd i)\in \FSigma$ for some $i\in [0\dd |Q|]$.
\end{lemma}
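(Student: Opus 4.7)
\medskip

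The plan is to use the criterion from \cref{obs:balanced_fragment}: since every cyclic rotation $R_i \coloneqq Q[i\dd |Q|)Q[0\dd i)$ is a substring of $Q^2 \in \FsSigma$, it already lies in $\FsSigma$, and it inherits the equal count of opening and closing parentheses from $Q$. Hence it suffices to exhibit some $i \in [0\dd |Q|]$ such that every prefix of $R_i$ contains at least as many opening as closing parentheses.

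To find such an $i$, I would introduce the discrete ``height'' function $f : [0\dd |Q|] \to \mathbb{Z}$ defined so that $f(j)$ equals the number of opening parentheses in $Q[0\dd j)$ minus the number of closing parentheses in $Q[0\dd j)$. By the assumption on $Q$ we have $f(0) = f(|Q|) = 0$. I would then choose $i \in [0\dd |Q|)$ so as to minimize $f(i)$ (taking $i=0$ if $|Q| = 0$, in which case the claim is immediate). The key computation is that the height of $R_i$ after the first $j$ characters equals $f(i+j) - f(i)$ for $j \le |Q|-i$ and $f(j - (|Q|-i)) - f(i)$ for $j \ge |Q|-i$, because $f(|Q|)=0$ allows the two pieces to be concatenated without offset. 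By the minimality of $f(i)$, both expressions are nonnegative, so every prefix of $R_i$ has at least as many opening as closing parentheses.

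Finally, I would combine the three ingredients: $R_i \in \FsSigma$ (as a substring of $Q^2$), equal counts of opening and closing parentheses in $R_i$ (inherited from $Q$), and the prefix inequality established above. \cref{obs:balanced_fragment} then yields $R_i \in \FSigma$, proving the lemma.

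There is no real obstacle here; the only point requiring a little care is the case split at $j = |Q|-i$ in the computation of the prefix heights of $R_i$, together with the trivial verification that the $i=0$ corner case (including $|Q|=0$) is consistent with the general argument.
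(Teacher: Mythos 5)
Your proposal is correct and matches the paper's proof essentially verbatim: both pick the rotation point minimizing the prefix height function, verify nonnegativity of the rotated prefix heights via the same case split, and invoke \cref{obs:balanced_fragment} together with the substring-of-$Q^2$ observation to conclude.
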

\begin{proof}
    For every $i\in [0\dd |Q|]$, let $s_i$ denote the number of opening parentheses in $Q[0\dd i)$ minus the number of closing parentheses in $Q[0\dd i)$.
    Note that $s_{|Q|}=0$ because $Q$ has equally many  opening and closing parentheses.
    Let $p\in [0\dd |Q|]$ be an index with minimum value of $s_p$.
    We claim that $Q' = Q[p\dd |Q|)Q[0\dd p)$ is a balanced cyclic rotation of $Q$. 

    As a cyclic rotation of $Q$, the string $Q'$ has equally many  opening and closing parentheses.
    Moreover, $Q'$ is a substring of $Q^2$ and, by transitivity, belongs to $\FsSigma$.
    Thus, by \cref{obs:balanced_fragment}, it suffices to show that, for each $i\in [0\dd |Q|]$, we have $s'_i\ge 0$, where $s'_i$ is the number of opening parentheses in $Q'[0\dd i)$ minus the number of closing parentheses in $Q'[0\dd i)$.
    \begin{itemize}
        \item If $i \le |Q|-p$, then $s'_i = s_{p+i} - s_p \ge 0$ since $s_p \le s_{i+p}$ holds by definition of $p$.
        \item If $i > |Q|-p$, then $s'_i = s_{|Q|} - s_p + s_{i-(|Q|-p)} = s_{i-(|Q|-p)} - s_p \ge 0$ since $s_{|Q|}=0$ and $s_p \le s_{i-(|Q|-p)}$ holds by definition of $p$.\qedhere
    \end{itemize}
\end{proof}

We are now ready to prove \cref{thm:our_kernel}, which we restate below for convenience.

\thmkernel*

\begin{proof}
    By \cref{thm:prev_kernel}, we can assume without loss of generality that the forests $F$ and $G$ are already of size $\Oh(k^5)$.
    We start by applying algorithm of Theorem~\ref{thm:decomposition} on forests $F$ and $G$ to get a matching of size $\Oh(k)$.
    Next, for each pair of the matching that is a forest $P$, we apply the algorithm of Lemma~\ref{lem:horizontal_aperiodic_reduction} to get a $\ted_{\le k}^w$-equivalent forest $P'$ with $|\red{P'}| \le 158k^2$, and replace that occurrence of $P$ in both $F$ and $G$ with $P'$.
    Similarly, for each pair of the matching that is a context $P=\context{P_L}{P_R}$, we apply the algorithm of \cref{lem:context_reduction} to get a $\ted_{\le k}^w$-equivalent context $P'$ with $|\red{P'}| \le 1152k^3$, and replace that occurrence of $P$ in both $F$ and $G$ with $P'$.
    After applying these reduction algorithms on the matching pairs, we can apply Lemma~\ref{lem:findgood} and \cref{lem:findgood_context}
    on each $P'$ to obtain sets of periodic blocks, each of size $\Oh(k^2)$.
    For each of these periodic blocks of matching pieces,
    let their occurrence in $F'$ and $G'$ be at $F'[l_F\dd r_F)$ and $G'[l_G\dd r_G)$, respectively.
    We can construct a free pair from this periodic block as follows:
    \begin{itemize}
        \item Let $Q$ be the shortest string period of this block. Since $Q$ has equally many opening and closes parentheses, \cref{lem:balanced_cyclic_rotation} shows that it has a balanced cyclic rotation $Q'$.
        We remove at most $|Q|$ characters from the beginning and end of the periodic block to get a power of $Q'$. Let the new periodic block occur at $F'[l_F'\dd r_F')$ and $G'[l_G'\dd r_G')$.
        \item Let $R=Q'^{\lceil \frac{4k}{|Q'|} \rceil}$ be the minimum power of $Q'$ such that $|R|\ge 4k$.
        We can observe that $|R|\le 8k$ since $|Q'|\le 4k$.
        \item Let $r_F''=l_F'+\lfloor \frac{r_F'-l_F'}{|R|}\rfloor\cdot|R|$ and $r_G''=l_G'+\lfloor \frac{r_G'-l_G'}{|R|}\rfloor\cdot|R|$.
        \item The free pair is then $F'[l_F'+|R|\dd r_F''-|R|)$ and $G'[l_G'+|R|\dd r_G''-|R|)$.
    \end{itemize}
    The number of these free pairs is $\Oh(k^3)$, since there are $\Oh(k)$ matching pieces and each matching piece has $\Oh(k^2)$ periodic blocks.
    The number of red characters of each $P'$ is bounded by $\Oh(k^3)$, so the total number of red characters of these matching pieces if $\Oh(k^4)$.
    Each periodic block ignores at most $3|R|+2|Q|\le 32k$ characters of the periodic block, and there are at most $\Oh(k^3)$ such periodic blocks.
    Also, there are at most $\Oh(k^4)$ red characters from the matching pairs, and $\Oh(k^4)$ non-matched characters.
    Therefore, the created free pairs of $F'$ and $G'$ leave at most $\Oh(k^4)$ non-free characters.
\end{proof}

\end{document}